\documentclass[11pt,journal,onecolumn]{IEEEtran}
\usepackage[T1]{fontenc}
\usepackage{microtype}
\usepackage{amsmath,amssymb,amsthm,mathtools}
\usepackage{booktabs,array,graphicx}
\usepackage{stfloats}

\usepackage{tikz}
\usetikzlibrary{arrows.meta}
\usepackage{cite}
\usepackage{xcolor}
\newenvironment{revision}{}{}
\newenvironment{finalrevision}{}{}
\DeclareRobustCommand{\reviewchange}[1]{#1}
\newcommand{\notationmark}[1]{{#1}}
\newcommand{\measeig}[1]{\reviewchange{\lambda_{#1}}}
\newcommand{\dloc}{\notationmark{d_{\mathrm{loc}}}}

\newcommand{\rtail}{\reviewchange{\widetilde m}}
\newcommand{\packbal}{\reviewchange{\widetilde m_0}}

\newcommand{\rbal}{\notationmark{r_{\mathrm{bal}}}}
\newcommand{\dcube}{\reviewchange{m'}}

\newcommand{\seed}{\notationmark{R}}
\newcommand{\seedval}{\notationmark{r}}
\newcommand{\seedlaw}{\notationmark{\pi}}
\newcommand{\obs}{\notationmark{Y}}

\newcommand{\paulix}{\notationmark{\mathsf X}}

\newcommand{\pauliz}{\notationmark{\mathsf Z}}

\newcommand{\cliffgate}{\notationmark{\mathsf C}}

\newcommand{\twirlfour}{\notationmark{\mathcal T^{(4)}}}

\newcommand{\sgnrow}{\notationmark{\mathsf s}}
\newcommand{\absrow}{\notationmark{\varpi}}
\newcommand{\sectorcoef}{\notationmark{\mathfrak a}}
\newcommand{\sectorflag}{\notationmark{\varepsilon}}
\newcommand{\sectorprod}{\notationmark{\mathsf R}}
\newcommand{\oparg}{\notationmark{\mathsf A}}

\newcommand{\utensor}{\notationmark{u_4}}
\newcommand{\momconst}{\notationmark{K_8}}
\newcommand{\covlo}{\notationmark{c_{\mathrm{cov}}}}
\newcommand{\covhi}{\notationmark{C_{\mathrm{cov}}}}
\newcommand{\covrate}{\notationmark{A_{\mathrm{cov}}}}
\newcommand{\covstar}{\notationmark{A_{\mathrm{cov},*}}}
\newcommand{\realj}{\notationmark{J_{\mathrm{cx}}}}

\newcommand{\eigbasis}{\notationmark{V_{\mathrm{eig}}}}
\newcommand{\actweights}{\notationmark{p^{\mathrm{act}}}}
\newcommand{\transfermat}{\notationmark{\mathsf T}}
\newcommand{\activity}{\notationmark{b^{\mathrm{act}}}}

\usepackage[hidelinks]{hyperref}
\usepackage{aliascnt}
\allowdisplaybreaks[2]
\newtheorem{theorem}{Theorem}
\newaliascnt{proposition}{theorem}
\newtheorem{proposition}[proposition]{Proposition}
\aliascntresetthe{proposition}
\newaliascnt{lemma}{theorem}
\newtheorem{lemma}[lemma]{Lemma}
\aliascntresetthe{lemma}
\newaliascnt{corollary}{theorem}
\newtheorem{corollary}[corollary]{Corollary}
\aliascntresetthe{corollary}
\newaliascnt{assumption}{theorem}

\aliascntresetthe{assumption}
\theoremstyle{definition}
\newaliascnt{definition}{theorem}

\aliascntresetthe{definition}
\newaliascnt{remark}{theorem}

\aliascntresetthe{remark}
\newcommand{\C}{\mathbb C}
\newcommand{\R}{\mathbb R}
\newcommand{\E}{\mathbb E}
\newcommand{\Prob}{\mathbb P}
\newcommand{\tr}{\operatorname{tr}}
\newcommand{\rank}{\operatorname{rank}}
\newcommand{\op}{\mathrm{op}}
\newcommand{\F}{\mathrm F}
\newcommand{\KL}{\mathrm{KL}}
\newcommand{\cD}{\mathcal D}
\newcommand{\cM}{\mathcal M}
\newcommand{\cL}{\mathcal L}
\newcommand{\cC}{\mathcal C}

\newcommand{\rhoCF}{\widehat\rho_{\mathrm{OMD}}}
\newcommand{\rhoMW}{\widehat\rho_{\mathrm{MW}}}
\newcommand{\rhoPLS}{\widehat\rho_{\mathrm{PLS}}}
\DeclareMathOperator*{\argmin}{arg\,min}
\hypersetup{pdftitle={Minimax Quantum State Tomography with Periodic Clifford Measurements},pdfauthor={Hongru Zhao},pdfsubject={Quantum tomography and minimax estimation}}

\title{Minimax Quantum State Tomography\\with Periodic Clifford Measurements}
\author{Hongru Zhao\thanks{Hongru Zhao is with the School of Statistics,
University of Minnesota, Minneapolis, MN, USA
(e-mail: \href{mailto:zhao1118@umn.edu}{zhao1118@umn.edu}).}}
\begin{document}
\maketitle

\begin{abstract}
Quantum state tomography provides a foundation for characterizing state preparation and predicting measurement outcomes.
We characterize the minimax expected trace norm risk of quantum state tomography over classes with polynomial spectral decay under randomized nonadaptive single copy measurements.
Under an explicit sufficient block size condition for the two layer periodic Clifford ensemble, we establish statewise expected trace norm oracle inequalities for operator norm minimum distance and measurement weighted projected least squares, with bounds that depend on each state's actual spectral tail, while the estimators require no prior knowledge of the state's structural characteristics, including its rank, eigenbasis, or spectral structure.
Over these polynomial spectral decay classes, both estimators attain the minimax expected trace norm rates up to constants.
As a direct consequence, the same estimators attain the minimax rate over every fixed rank constrained class, while remaining adaptive to the unknown structural characteristics of the state.
For admissible logarithmic block sizes, the periodic Clifford measurements attain these guarantees with logarithmic elementary gate depth.
The main technical ingredient is a uniform bound on the fourth moment of projectors for the periodic Clifford ensemble. The bound is sharp in dimension and holds for arbitrary test states, including those entangled across blocks.
Covariance concentration and positivity yield adaptive oracle bounds, while a shared rotation information argument on spectral tail packings establishes matching lower bounds.
We also provide polynomial time reconstruction algorithms, numerical comparisons, and Lean formalizations of all labeled theoretical results.
\end{abstract}
\begin{IEEEkeywords}
Clifford circuits, minimax estimation, quantum tomography, spectral adaptation.
\end{IEEEkeywords}
\section{Introduction}
\label{sec:introduction}
\begingroup
Quantum state tomography reconstructs an unknown density matrix to
characterize state preparation and predict measurement outcomes
\cite{JamesKwiatMunroWhite2001,EisertEtAl2020}. Structural assumptions can
reduce its statistical cost: low rank motivates compressed sensing methods
\cite{GrossLiuFlammiaBeckerEisert2010,FlammiaGrossLiuEisert2012}, while spatial or tensor network
structure can enable scalable tomography \cite{CramerEtAl2010}.
Spectral thresholding and matrix oracle inequalities address unknown
effective rank \cite{ButuceaGutaKypraios2015,ButuceaGutaKypraios2016Corrigendum,KoltchinskiiXia2015,Koltchinskii2011}.
Sparse Pauli models provide another form of structure
\cite{CaiKimWangYuanZhou2016}. We study spectral decay with an unrestricted
eigenbasis, without assuming spatial structure or sparsity in a fixed
operator basis. The reconstruction should exploit this spectral structure
without prior knowledge of the rank, eigenbasis, or decay parameters.

Measurement restrictions also determine copy complexity
\cite{HaahHarrowJiWuYu2017,LoweNayak2025}. We reconstruct a state $\rho$ on $n$ qubits, of dimension $d=2^n$, from $T$ independently prepared copies
under full trace norm loss $\|\widehat\rho-\rho\|_{\mathrm{tr}}$.
Each copy is measured separately, and the measurement settings are chosen
without using earlier outcomes. Classical and shallow shadows motivate
the randomized measurement construction
\cite{HuangKuengPreskill2020,BertoniEtAl2024}.
Cho and Kim obtain low rank tomography bounds using periodic overlapping
Clifford blocks, and full state bounds using repeated measurements of
selected bases, including a periodic two layer analysis
\cite[Theorems~1--2 and Appendices~D, F--G]{ChoKim2026}.

A new periodic Clifford fourth moment analysis yields logarithm free minimax tomography under shallow nonadaptive measurements, with reconstruction that adapts to unknown rank, eigenbasis, and spectral decay.\footnote{Here \emph{nonadaptive} means that measurement settings may be randomized, but their choice does not depend on earlier outcomes. Adaptation to unknown state structure refers to reconstruction: the estimators require no prior knowledge of the rank, eigenbasis, or spectral decay parameters.}
We use two logical layers of Clifford blocks on $k$ qubits, with a fresh
independent circuit on each copy. For even divisors $k$ of $n$ satisfying
the explicit sufficient block size condition~\eqref{eq:headline-block-condition},
our new reconstruction estimators remove the dimension logarithm from
the low rank and intermediate spectral upper bounds.
As in \cite{ChoKim2026}, the circuit can be implemented by a local quantum
circuit of depth $O(k)$; hence, when an admissible $k=O(\log n)$ exists,
the depth is $O(\log n)$.

We analyze the \emph{operator norm minimum distance (OMD) estimator} and
\emph{measurement weighted projected least squares (MW-PLS)}, defined in
Sections~\ref{sec:omd-estimator} and~\ref{sec:mw-estimator}. OMD compares the empirical score mean
with its expectation under a candidate state in operator norm;
see~\eqref{eq:raw-score}--\eqref{eq:forward-objective}. The exact weighted PLS fit
projects the inverse channel estimate in the metric of~\eqref{eq:mw-weighted-projection}, equivalently
minimizing the quadratic objective~\eqref{eq:mw-quadratic}. Its density contrast
interpretation in Supplement Section~\ref*{app:mw-pls} connects it to positive least
squares \cite[Sec.~7]{AcharyaKypraiosGuta2019}. Under the stated block
condition and optimization tolerances, both satisfy statewise trace loss oracle
inequalities for every density matrix. The bounds depend on the actual spectral
tail defined in~\eqref{eq:spectral-tail}, without rank or spectral decay
inputs to the estimators. Optimizing over the spectral classes~\eqref{eq:decay-class}
yields minimax rates, up to constants, matching the lower bound
over all randomized nonadaptive single copy measurement designs.

The main circuit estimate is a uniform eighth overlap moment bound, equivalently a
fourth projector moment bound, under the reference law of Section~\ref{sec:observations},
uniformly over all unit test vectors, including entangled ones. The local Clifford commutant and stabilizer state
moment formulas \cite{GrossNezamiWalter2021,HelsenWalter2023} supply the algebra. Our
contribution is the periodic four copy expansion and uniform sector contraction bound in
Lemmas~\ref{lem:ck-four-moment-expansion}--\ref{lem:ck-four-copy-boundary}, which, with the
coefficient estimates, yield Theorem~\ref{thm:ck-fourth-projector}. Related work develops
Clifford replica tensor networks \cite{MagniEtAl2025} and characterizes the Clifford commutant
for arbitrary qubit and replica numbers \cite{BittelEtAl2026}. Born reweighting and the
covariance conditioning bounds allow us to apply \cite[Theorem~2, arXiv~v3]{AbdallaZhivotovskiy2023},
obtaining the logarithm free expected operator norm bound in Proposition~\ref{prop:primitive-covariance}.
Combined with the oracle geometry, this gives the trace loss guarantees. The minimax
lower bound uses a Grassmann packing in the spectral tail, a shared hidden Haar
rotation, a one copy chi square information bound, and a nonadaptive Fano reduction.

Dense matrix algorithms accompany the statistical guarantees. We compare
OMD and weighted PLS with a projected least squares (PLS) baseline, which applies
Frobenius projection to linear inversion \cite{GutaKahnKuengTropp2020}.
The algorithms use spectral projection, ADMM, and Frank--Wolfe
\cite{Jaggi2013}; Section~\ref{sec:implementation} summarizes their
computation and stopping certificates in the explicit dimension $d$,
with detailed guarantees in Supplement Section~\ref*{supp:computational-proofs}.
Simulations compare reconstruction error, runtime, and memory using
shared measurement records. The scope of the available PLS guarantees
is discussed in Section~\ref{sec:comparisons}.

Section~\ref{sec:model} defines the model and measurement ensemble;
Section~\ref{sec:main-results} states the estimators and minimax
guarantees; Sections~\ref{sec:oracle}--\ref{sec:lower} prove them.
Sections~\ref{sec:implementation}--\ref{sec:discussion} cover computation,
experiments, and open directions. The appendices contain supporting
proofs and the scope of Lean verification. The separate Supplement
develops the local Clifford algebra, gives further proofs, and documents
the numerical algorithms and reproducibility details.
\par
\endgroup
\label{sec:introduction-end}

\section{Experiment and Parameter Classes}
\label{sec:model}

\subsection{States, loss, and spectral classes}

\begin{revision}
For every integer $d\geq1$, define
\end{revision}
\begin{equation}
 \cD_d=\{\rho\in\C^{d\times d}:\rho\succeq0,\ \tr\rho=1\}.
 \label{eq:state-space}
\end{equation}
\begin{revision}
For the periodic circuit experiment we specialize to $d=2^n$ with $n\geq1$.
\end{revision}
We use the operator, Frobenius, and full trace norms, denoted by
$\|\cdot\|_{\op}$, $\|\cdot\|_\F$, and
$\|\cdot\|_{\mathrm{tr}}$. The Frobenius inner product is
{\color{black}$\langle\oparg{},\mathsf B\rangle_\F=\operatorname{Re}\tr(\oparg{}^*\mathsf B)$}.
The conventional trace distance is half our loss:
$D_{\mathrm{tr}}(\rho,\sigma)=\tfrac12\|\rho-\sigma\|_{\mathrm{tr}}
=\sup_M\operatorname{TV}(p_\rho^M,p_\sigma^M)$ \cite{Holevo2011}.
Here the supremum is over all POVMs $M$, and $p_\rho^M$ and $p_\sigma^M$
are the induced classical measurement outcome distributions; total
variation uses the convention of half the $\ell_1$ distance for discrete
outcomes. Thus TV compares the classical outcome laws, while trace distance
compares the density matrices. In particular,
$\|\rho-\sigma\|_{\mathrm{tr}}\leq2$ for states. Throughout, $\ln$ denotes the natural logarithm;
base two logarithms are written $\log_2$ explicitly.

For decreasing eigenvalues $\lambda_j(\rho)$, put
\begin{equation}
 \tau_s(\rho)=\sum_{j>s}\lambda_j(\rho),\qquad 1\leq s\leq d.
 \label{eq:spectral-tail}
\end{equation}
This is the error of the best rank $s$ approximation when trace one is
\emph{not} required of the approximating matrix. The best rank $s$
density matrix approximation has full trace norm error $2\tau_s$.
For $\alpha>1$ and $L\geq1$, define
\begin{equation}
 \cC_{\alpha,d}(L)=\{\rho\in\cD_d:
  \tau_s(\rho)\leq Ls^{1-\alpha}\ \text{for all }s\}.
 \label{eq:decay-class}
\end{equation}
The eigenbasis is unrestricted. Polynomial eigenvalue decay implies this
tail condition, and conversely it implies
$\lambda_{2s}(\rho)\leq Ls^{-\alpha}$ when $2s\leq d$.
For $L\geq d^{\alpha-1}$ the class is all of $\cD_d$.

\subsection{Observation law and sufficient record}
\label{sec:observations}
For each of $T$ fresh copies, independently draw a public unitary $U_t$
from a known ensemble $\mathcal E$, apply it, and measure the computational
basis. The observation is $(U_t,b_t)$, with
\begin{equation}
 B_{U,b}=U^*|b\rangle\!\langle b|U,\qquad
 p_\rho(b\mid U)=\tr(\rho B_{U,b}).
 \label{eq:born-observation}
\end{equation}
Let $B=B_{U,b}$ denote a matrix valued random element, specifically a
random rank one Hermitian projector in $\C^{d\times d}$. Write
$B_t=B_{U_t,b_t}$ for the projector recorded on copy $t$, and let $\nu$
be the reference probability law of $B$ when the unitary $U\sim\mathcal E$ and $b$
is independently uniform.
Then $\E_\nu B=I/d$, whereas the observed projector has law
\begin{equation}
 \mathbb{P}_\rho(\mathrm{d}B)=d\tr(\rho B)\nu(\mathrm{d}B).
 \label{eq:rank-one-statistical-law}
\end{equation}
In these measure expressions, $B$ also serves as the integration variable,
ranging over rank one Hermitian projectors. All expectations with subscript
$\rho$ use this Born law.

Write $B^T=(B_1,\ldots,B_T)$ for the full projector record. It is a
sufficient statistic for $\rho$: given $B^T$, the conditional law of the
full data record $(U_t,b_t)_{t=1}^T$ can be chosen independently of $\rho$.
Indeed, relative to the convenient reference state {\color{black}$\rho_{\mathrm{mix}}=I/d$}, the
full record likelihood ratio is
{\color{black}\begin{equation}
 \frac{\mathrm{d}\mathbb{P}_\rho}{\mathrm{d}\mathbb{P}_{\rho_{\mathrm{mix}}}}((U_t,b_t)_{t=1}^T)
 =\prod_{t=1}^T d\tr(\rho B_t).
 \label{eq:projector-sufficiency}
\end{equation}}%
The likelihood ratio in \eqref{eq:projector-sufficiency} depends on the
full record only through $B^T$, so sufficiency follows from the factorization
criterion for dominated families \cite{HalmosSavage1949}; see also
\cite[Thm.~6.2.6]{CasellaBerger2002}. The conditional law of the full record
given $B^T$ under {\color{black}$\rho_{\mathrm{mix}}$}
therefore reconstructs it under every $\rho$. On standard Borel spaces
this is a state independent Markov kernel, including on null sets after an
arbitrary extension. Hence randomized estimators based on the two records
can have the same statewise output laws. 
\subsection{Reconstruction rules and minimax risks}
\label{sec:risks}
\begin{finalrevision}
Let $\cC\subseteq\cD_d$ be a class of states. A reconstruction rule is a
measurable Markov kernel from the observed record to $\cD_d$;
deterministic rules are included as point mass kernels. For a prescribed
measurement ensemble $\mathcal E$, the fixed ensemble minimax risk is
\begin{equation}
 \mathfrak R_T^{\mathcal E}(\cC)
 =\inf_\Gamma\sup_{\rho\in\cC}
   \E_{\rho,\mathcal E,\Gamma}\|\widehat\rho-\rho\|_{\mathrm{tr}}.
 \label{eq:fixed-ensemble-risk}
\end{equation}
Here $\Gamma$ acts on the projector record $B^T$, and the expectation
includes the measurement draws, outcomes, and reconstruction randomness.
The ensemble law is fixed in advance; its individual draws may be random.

For the unrestricted comparison, a randomized nonadaptive single copy
design \cite[Sec.~2.2]{LiuAcharya2024} uses a public seed
$\seed\sim\seedlaw$, independent of $\rho$, to select the entire POVM
sequence $(M_t(\seed))_{t=1}^T$ before any outcome is observed.
Seed and outcome spaces are standard Borel, with jointly measurable
measurement kernels. Conditional on $(\rho,\seed)$, outcomes are
independent and obey the Born laws
\[
 \Prob_\rho\{\obs_t\in\mathcal B\mid\seed=\seedval\}
 =\tr\bigl(\rho M_t(\seedval)(\mathcal B)\bigr).
\]
The estimator may use the seed and all outcomes. Define
\begin{equation}
 \mathfrak R_T(\cC)=\inf_{(\mathrm{design},\Gamma)}
 \sup_{\rho\in\cC}\E\|\widehat\rho-\rho\|_{\mathrm{tr}}.
 \label{eq:minimax-risk}
\end{equation}
The design and reconstruction rule may depend on $\cC$ and $T$, but not
on $\rho$. The infimum includes randomized measurement designs and
reconstruction rules, and excludes outcome adaptive and collective
measurements. Every fixed random basis ensemble is admissible, hence
$\mathfrak R_T(\cC)\leq\mathfrak R_T^{\mathcal E}(\cC)$.
\end{finalrevision}

\subsection{The periodic measurement ensemble}
\label{sec:periodic-ensemble}

Let $k>0$ be even and divide $n$, and put $m=n/k$. On the qubit ring define
\begin{equation}
 A_j=\{jk,\ldots,jk+k-1\},\quad
 A'_j=A_j+k/2\pmod n.
 \label{eq:periodic-supports}
\end{equation}
Write {\color{black}$\mathcal P_k=\{i^{\nu_{\mathrm{ph}}} {\paulix{}}^{\mathbf x}{\pauliz{}}^{\mathbf z}:{\nu_{\mathrm{ph}}}\in\{0,1,2,3\},\ {\mathbf x},{\mathbf z}\in\mathbb F_2^k\}$}
for the Pauli group, where {\color{black}${\paulix{}}^{\mathbf x}{\pauliz{}}^{\mathbf z}=\bigotimes_{r=1}^k{\paulix{}}^{{\mathbf x}_r}{\pauliz{}}^{{\mathbf z}_r}$}.
The full Clifford group is its unitary normalizer,
{\color{black}$\operatorname{Cliff}_{\mathrm{unit}}(k)=\{{\cliffgate{}}\in\mathrm U(2^k):{\cliffgate{}}\mathcal P_k {\cliffgate{}}^*=\mathcal P_k\}$}.
The projective group
{\color{black}$\operatorname{Cliff}(k):=\operatorname{Cliff}_{\mathrm{unit}}(k)/\{e^{i{\phi_{\mathrm{ph}}}}I:{\phi_{\mathrm{ph}}}\in\mathbb R\}$}
is finite; $\operatorname{Unif}(\operatorname{Cliff}(k))$
means the uniform law on that projective group, with fixed unitary
representatives.
\reviewchange{We write $\dloc$ for the Hilbert space dimension of a local
Clifford block; here $\dloc=2^k$.}
Independently sample one gate on every shifted support
$A'_j$ for layer 1 and every unshifted support $A_j$ for layer 2, with
independence also between layers. Embedding each gate as identity on
complementary qubits gives
{\color{black}\begin{equation}
 U=U_2U_1,\qquad
 U_1=\prod_{j=0}^{m-1}\cliffgate{}^{(1)}_{A'_j},\quad
 U_2=\prod_{j=0}^{m-1}\cliffgate{}^{(2)}_{A_j}.
 \label{eq:periodic-unitary}
\end{equation}}%
The probability law of the output $U$ is the \emph{periodic ensemble of Cho and Kim} \cite{ChoKim2026}; we write $U\sim\mathcal E_{\rm per}$.
Each block matrix {\color{black}$\cliffgate{}_A^{(\ell)}$} is uniform over the finite set of
representatives embedded on its support $A$; the superscript $\ell$
labels the layer. Repeat the construction with fresh independent samples
for each copy. Changing the fixed representatives affects only global
phases and leaves all measurement laws unchanged.

Within each layer, the product equals a tensor product after grouping the
qubits by support; adjacent layers overlap on half a block.
Figure~\ref{fig:periodic-circuit} illustrates the geometry.
Supplement Section~\ref*{app:clifford-background} records the gate conventions,
tensor ordering, and exact sampling procedure.

Uniform $k$ qubit Clifford blocks can be sampled exactly and synthesized
with expected classical runtime and elementary gate count polynomial
in $k$ \cite{KoenigSmolin2014,VanDenBerg2021}. The two layers in
\eqref{eq:periodic-unitary} count $k$ qubit blocks.

\begin{figure}[t]
 \centering
 \resizebox{\columnwidth}{!}{\begin{tikzpicture}[x=1cm,y=1cm,font=\small,>=Stealth]
 \renewcommand{\cliffgate}{\mathsf C}
 \definecolor{wrapred}{RGB}{169,46,53}
 \definecolor{shiftgold}{RGB}{163,105,25}
 \definecolor{layerblue}{RGB}{38,86,145}
 \definecolor{layergreen}{RGB}{39,112,91}
 \node[font=\small\bfseries] at (3.0,.94) {Layer 1};
 \node[font=\footnotesize] at (3.0,.58) {shifted blocks};
 \node[font=\small\bfseries] at (5.8,.94) {Layer 2};
 \node[font=\footnotesize] at (5.8,.58) {unshifted blocks};
 \node[align=center,font=\footnotesize] at (7.85,.78) {Measure\\$\pauliz$ basis};
 \foreach \q in {0,...,11} {
   \draw[black!70,line width=.5pt] (.6,-.5*\q)--(8.22,-.5*\q);
   \node[anchor=east] at (.42,-.5*\q) {$\q$};
 }
 \node[anchor=east,font=\footnotesize] at (.42,.58) {Qubit};
 \filldraw[fill=wrapred!9,draw=wrapred,line width=.9pt,rounded corners=2pt]
   (1.9,.20) rectangle (4.1,-.70);
 \filldraw[fill=wrapred!9,draw=wrapred,line width=.9pt,rounded corners=2pt]
   (1.9,-4.80) rectangle (4.1,-5.70);
 \node[text=wrapred] at (3.0,-.25) {$\cliffgate{}^{(1)}_{\{10,11,0,1\}}$};
 \node[text=wrapred] at (3.0,-5.25) {$\cliffgate{}^{(1)}_{\{10,11,0,1\}}$};
 \draw[wrapred,densely dashed,line width=.8pt] (1.9,-.25)--(1.53,-.25)--(1.53,-5.25)--(1.9,-5.25);
 \filldraw[fill=shiftgold!9,draw=shiftgold,line width=.9pt,rounded corners=2pt]
   (1.9,-.80) rectangle (4.1,-2.70);
 \node[text=shiftgold] at (3.0,-1.75) {$\cliffgate{}^{(1)}_{\{2,3,4,5\}}$};
 \filldraw[fill=layerblue!9,draw=layerblue,line width=.9pt,rounded corners=2pt]
   (4.7,.20) rectangle (6.9,-1.70);
 \node[text=layerblue] at (5.8,-.75) {$\cliffgate{}^{(2)}_{\{0,1,2,3\}}$};
 \filldraw[fill=layergreen!9,draw=layergreen,line width=.9pt,rounded corners=2pt]
   (4.7,-1.80) rectangle (6.9,-3.70);
 \node[text=layergreen] at (5.8,-2.75) {$\cliffgate{}^{(2)}_{\{4,5,6,7\}}$};

 \filldraw[fill=shiftgold!9,draw=shiftgold,line width=.9pt,rounded corners=2pt]
   (1.9,-2.80) rectangle (4.1,-4.70);
 \node[text=shiftgold] at (3.0,-3.75) {$\cliffgate{}^{(1)}_{\{6,7,8,9\}}$};
 \filldraw[fill=layerblue!9,draw=layerblue,line width=.9pt,rounded corners=2pt]
   (4.7,-3.80) rectangle (6.9,-5.70);
 \node[text=layerblue] at (5.8,-4.75) {$\cliffgate{}^{(2)}_{\{8,9,10,11\}}$};
 \foreach \q in {0,...,11} {
  \begin{scope}[shift={(7.85,-.5*\q)}]
   \filldraw[fill=white,draw=black!75,line width=.6pt] (-.22,-.18) rectangle (.22,.18);
   \draw[black!80,line width=.6pt] (-.14,-.07) arc[start angle=180,end angle=0,radius=.14];
   \draw[black!80,-{Stealth[length=2.5pt]},line width=.6pt] (0,-.07)--(.10,.13);
  \end{scope}
 }
 \draw[-{Stealth[length=4pt]},black!65] (2.5,-6.13)--(6.4,-6.13);
 \node[fill=white,inner sep=2pt,font=\footnotesize] at (4.45,-6.13) {circuit time};
 \node[text=black,font=\footnotesize] at (4.45,-6.53) {Dashed link: one four-qubit gate.};
\end{tikzpicture}
}
 \caption{Periodic Clifford measurement circuit for $n=12$ qubits and block size $k=4$, with $m=n/k=3$ blocks per layer. The first layer acts on shifted supports and the second on unshifted supports, followed by computational basis measurement. The two red pieces joined by the dashed annotation represent a single gate on $\{10,11,0,1\}$; they are not independent two qubit gates. All six four qubit Clifford blocks are sampled independently and uniformly from the projective Clifford group. Circuit time runs from left to right, so $U=U_2U_1$. This example lies outside the sufficient parameter range in \eqref{eq:headline-block-condition}.}
 \label{fig:periodic-circuit}
\end{figure}

Our sufficient condition is
\begin{equation}
 \frac{92}{\ln2}\,n\leq k2^{k/2}.
 \label{eq:headline-block-condition}
\end{equation}
It implies $k\geq16$ and $d\geq65536$. The smallest admissible instance
is $n=k=16$; the smallest with more than one block per layer is
$(n,k)=(36,18)$. These constants are conservative sufficient conditions,
not an optimal threshold. The theorem covers the stated even divisors;
padding and other boundary conditions are not presumed to preserve it.

\section{Estimators and Main Results}
\label{sec:main-results}
All three estimators use the known measurement channel and calibrated map
\begin{align}
 \cM(A)&=\E_U\sum_b\tr(AB_{U,b})B_{U,b},
   \label{eq:measurement-channel}\\
 \cL(A)&=(d+1)\cM(A)-\tr(A)I.
   \label{eq:calibrated-map}
\end{align}
For the periodic ensemble, both $\cM$ and $\cL$ are diagonal in the Pauli basis.
These are the usual shadow channel objects
\cite{HuangKuengPreskill2020,BertoniEtAl2024,ChoKim2026}; exact brickwork
channel calculations also appear in \cite{ArienzoEtAl2023}.

\subsection{Projected Least Squares (PLS)}
\label{sec:pls-estimator}
We start with the established Projected Least Squares (PLS) estimator
\cite{GutaKahnKuengTropp2020}:
\begin{equation}
 \rhoPLS=\Pi_{\cD_d}^{\F}\left(
       \frac1T\sum_{t=1}^T\cM^{-1}(B_t)\right).
 \label{eq:candidate-one}
\end{equation}
Write $A_T=T^{-1}\sum_{t=1}^T\cM^{-1}(B_t)$ for the average of the
single sample inverse channel estimators.
Here $\Pi_{\cD_d}^{\F}(A)=\allowbreak\argmin_{\sigma\in\cD_d}\allowbreak\|A-\sigma\|_\F^2$
is the \emph{Frobenius metric projection} onto $\cD_d$, namely the unique
density matrix nearest to $A$ in Frobenius norm.
PLS projects the average analyzed by Cho and Kim
\cite{ChoKim2026} in Frobenius norm
\cite{GutaKahnKuengTropp2020}, without a rank input.
See also \cite[Sec.~8]{AcharyaKypraiosGuta2019}.

\subsection{Operator Norm Minimum Distance (OMD)}
\label{sec:omd-estimator}
For OMD, use the score and its exact mean
\begin{equation}
 Y_t=(d+1)B_t-I,\qquad \E_\rho Y_t=\cL(\rho).
 \label{eq:raw-score}
\end{equation}
The score is standard \cite{SchusterHaferkampHuang2025,ChoKim2026}.

With $\overline Y_T=T^{-1}\sum_tY_t$, set
\begin{equation}
 f_T(\sigma)=\|\overline Y_T-\cL(\sigma)\|_{\op},\qquad
 f_T^*=\min_{\sigma\in\cD_d}f_T(\sigma).
 \label{eq:forward-objective}
\end{equation}
Select a measurable feasible approximate minimizer $\rhoCF$ satisfying
\begin{equation}
 f_T(\rhoCF)\leq f_T^*+\delta_{\rm alg},\qquad
 \delta_{\rm alg}\leq\min\{1,\sqrt{d/T}\}.
 \label{eq:candidate-two-program}
\end{equation}
The compact feasible set gives an exact minimizer; finite ordered
approximations or the deterministic solver in Supplement
Section~\ref*{sec:projected-solver-proof}
give measurable approximate selections.

\subsection{Measurement weighted PLS (MW-PLS)}
\label{sec:mw-estimator}
Writing $\overline B_T=T^{-1}\sum_tB_t$, the average after channel inversion is
$A_T=\cM^{-1}(\overline B_T)$, so that $\cL(A_T)=\overline Y_T$.
\reviewchange{For every admissible even divisor $k$ of $n$, $\cL$ is
positive definite on Hermitian matrices, by the activity formula in
Supplement Section~\ref*{supp:transfer}. It determines the projection metric}
\begin{equation}
 \|H\|_{\cL}^2:=\langle H,\cL(H)\rangle_\F,
 \qquad
 \rhoMW^\star:=\argmin_{\sigma\in\cD_d}\|\sigma-A_T\|_{\cL}^2.
 \label{eq:mw-weighted-projection}
\end{equation}
This defines the exact \emph{measurement weighted PLS (MW-PLS)} fit. Its minimizer
exists by compactness and is unique by strict convexity on the
affine space of matrices with trace one. It is a measurable function of the data; for
example, uniqueness and compactness give continuous dependence on $\overline Y_T$.
Expanding the square gives the equivalent computation
{\color{black}\begin{equation}
 \begin{aligned}
 \rhoMW^\star&=\argmin_{\sigma\in\cD_d}\mathcal Q_T(\sigma),\\
 \mathcal Q_T(\sigma)&=\tfrac12\langle\sigma,\cL(\sigma)\rangle_\F
                     -\langle \overline Y_T,\sigma\rangle_\F,\\
 \nabla \mathcal Q_T(\sigma)&=\cL(\sigma)-\overline Y_T.
 \end{aligned}
 \label{eq:mw-quadratic}
\end{equation}}%
For iterative computation, put {\color{black}$\mathcal Q_T^*=\min_{\cD_d}\mathcal Q_T$} and define the
first order gap
{\color{black}\[
 g_{\mathrm{MW}}(\sigma):=\max_{\omega\in\cD_d}
 \langle\nabla \mathcal Q_T(\sigma),\sigma-\omega\rangle_\F.
\]}%
Select a measurable feasible fit $\rhoMW$ with nonnegative tolerance
$\delta'_{\rm alg}$ satisfying
{\color{black}\begin{equation}
 \begin{aligned}
 \mathcal Q_T(\rhoMW)&\leq \mathcal Q_T^*+g_{\mathrm{MW}}(\rhoMW),\\
 g_{\mathrm{MW}}(\rhoMW)&\leq\delta'_{\rm alg}\leq\min\{1,d/T\}.
 \end{aligned}
 \label{eq:mw-approximate-fit}
\end{equation}}%
The first inequality follows from convexity, so this certificate makes
$\rhoMW$ an approximate minimizer to accuracy $\delta'_{\rm alg}$ of {\color{black}$\mathcal Q_T$}.
The choice $\delta'_{\rm alg}=0$ recovers $\rhoMW^\star$.
\begin{finalrevision}
The weights are determined by the known measurement channel.
Supplement Section~\ref*{app:mw-pls} gives the equivalent least squares density contrast,
connecting this objective to positive least squares
\cite[Sec.~7]{AcharyaKypraiosGuta2019}. The quadratic formulation avoids
channel inversion.
\end{finalrevision}

For a feasible state $\sigma$, the first order gap
used in \eqref{eq:mw-approximate-fit} has the spectral representation
{\color{black}\begin{equation}
 g_{\mathrm{MW}}(\sigma)=\langle\nabla \mathcal Q_T(\sigma),\sigma\rangle_\F
             -\lambda_{\min}(\nabla \mathcal Q_T(\sigma)).
 \label{eq:mw-gap}
\end{equation}}%
Minimizing a linear functional over density matrices gives
{\color{black}$g_{\mathrm{MW}}(\sigma)=\max_{\omega\in\cD_d}
\langle\nabla \mathcal Q_T(\sigma),\sigma-\omega\rangle_\F$}.
In particular, {\color{black}$g_{\mathrm{MW}}\geq0$} and
{\color{black}$\mathcal Q_T(\sigma)-\min_{\cD_d}\mathcal Q_T\leq g_{\mathrm{MW}}(\sigma)$}.

The statewise guarantees for both fits are stated next. Their different
optimization tolerances account for how each certificate enters its
trace loss bound.

\subsection{Statewise adaptive oracle inequalities}
\label{sec:statewise-oracle}
The fundamental upper bounds are statewise: each estimator controls
trace loss for every density matrix, without a spectral decay assumption.
We first state the OMD oracle.

\begin{theorem}[Statewise adaptive trace loss oracle inequality]
\label{thm:statewise-oracle}
Let $n,T\geq1$, $d=2^n$, and let $k$ be an even divisor of $n$ satisfying
\eqref{eq:headline-block-condition}. Let $\rhoCF$ be a measurable feasible
fit satisfying \eqref{eq:candidate-two-program}.
There is a universal constant $B_0$ such that, for $T\geq2d$,
\begin{equation}
 \E_\rho\|\rhoCF-\rho\|_{\mathrm{tr}}
 \leq4\tau_s(\rho)+B_0s\sqrt{d/T},\qquad 1\leq s\leq d,
 \label{eq:polynomial-expected-oracle}
\end{equation}
simultaneously for every $\rho\in\cD_d$. For every $T\geq1$, the loss
is also bounded by $2$.
\end{theorem}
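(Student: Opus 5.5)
The proof splits into a deterministic oracle step and a probabilistic step. The probabilistic input is the logarithm free expected operator norm bound of Proposition~\ref{prop:primitive-covariance}. It is built on the uniform fourth projector moment bound of Theorem~\ref{thm:ck-fourth-projector} via the Abdalla--Zhivotovskiy covariance estimate. I will use it in the form
\[
 \E_\rho\|\overline Y_T-\cL(\rho)\|_{\op}\leq C_1\sqrt{d/T},\qquad T\geq2d,
\]
uniformly in $\rho\in\cD_d$, possibly after the Born reweighting and conditioning reductions already carried out there.

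\textbf{Deterministic step.} Write $E=\overline Y_T-\cL(\rho)$ and $\Delta=\rhoCF-\rho$. Because $\rho$ is feasible, \eqref{eq:candidate-two-program} gives $f_T(\rhoCF)\leq\|E\|_{\op}+\delta_{\rm alg}$. The triangle inequality then yields
\[
 \|\cL(\Delta)\|_{\op}\leq2\|E\|_{\op}+\delta_{\rm alg}.
\]
Next I need a restricted lower bound converting $\|\cL(\Delta)\|_{\op}$ into control of $\|\Delta\|_{\mathrm{tr}}$ on the low rank part. $\cL$ is Pauli diagonal and positive definite by the activity formula. The plan is to use the calibration $\cL=(d+1)\cM-\tr(\cdot)I$ and the quadratic form lower bound $\langle H,\cL(H)\rangle_\F\geq c\|H\|_\F^2$ for traceless Hermitian $H$. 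The constant $c$ should be uniform under \eqref{eq:headline-block-condition}, since the smallest Pauli weight of $\cL$ is bounded below there. Since $\tr\Delta=0$, this gives
\[
 c\|\Delta\|_\F^2\leq\langle\Delta,\cL(\Delta)\rangle_\F\leq\|\Delta\|_{\mathrm{tr}}\|\cL(\Delta)\|_{\op}.
\]

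\textbf{Cone argument via positivity.} Let $P$ be the projector onto the top $s$ eigenvectors of $\rho$, and set $\rho_s=P\rho P$. Because $\rhoCF\succeq0$ and both matrices have trace one, the standard positivity argument for trace one PSD pairs gives
\[
 \|P^\perp\Delta P^\perp\|_{\mathrm{tr}}\leq\|P\Delta P\|_{\mathrm{tr}}+2\tau_s(\rho).
\]
The cross terms have rank at most $2s$. Hence $\|\Delta\|_{\mathrm{tr}}\leq C\sqrt{s}\|\Delta\|_\F+4\tau_s$, up to bookkeeping that keeps the constant $4$ in front of $\tau_s$. Write $x=\|\Delta\|_{\mathrm{tr}}$ and $\eta=2\|E\|_{\op}+\delta_{\rm alg}$. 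Combining with the previous display gives $x\leq4\tau_s+C\sqrt{s x\eta/c}$, and solving the quadratic yields $x\leq4\tau_s+C' s\eta$. The $\tau_s$ coefficient needs care: I expect to absorb the cross term as $\sqrt{s\tau_s\eta}\leq\epsilon\tau_s+s\eta/\epsilon$, and to track the constants so that $4$ survives, by using the sharper inequality $\|\Delta\|_{\mathrm{tr}}\leq2\|P\Delta P\|_{\mathrm{tr}}+2\|P\Delta P^\perp\|_{\mathrm{tr}}+2\tau_s$. If necessary, the decomposition will instead be applied directly to $\|\cL(\Delta)\|_{\op}$ paired with a rank $2s$ dual certificate.

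\textbf{Expectation and trivial bound.} Take expectations, using $\delta_{\rm alg}\leq\sqrt{d/T}$ and the moment bound, to get \eqref{eq:polynomial-expected-oracle} with $B_0=C'(2C_1+1)$. The bound $\|\rhoCF-\rho\|_{\mathrm{tr}}\leq2$ holds trivially for any two states.

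\textbf{Main obstacle.} The main difficulty is the restricted strong convexity constant $c$ of $\cL$, uniform in $n$ under \eqref{eq:headline-block-condition}; the expected operator norm bound is taken as given. If the minimal Pauli weight of $\cL$ decays, the argument must instead use an operator norm to trace duality restricted to rank $O(s)$ directions. Then I would bound $\sup\{\langle\Delta,W\rangle:\|\cL^{*}{}^{-1}W\|\ldots\}$ via the transfer matrix formula of Supplement Section~\ref*{supp:transfer}.
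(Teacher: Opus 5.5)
Your proposal is correct and is essentially the paper's argument. It uses the same triangle-inequality bound from the fitting certificate, curvature from the Pauli eigenvalue lower bound, a positivity cone bound, absorption of the $\sqrt{\|\Delta\|_{\mathrm{tr}}}$ term, and the covariance rate for $T\geq2d$. The curvature constant is not actually an obstacle: Proposition~\ref{prop:periodic-low-order} gives $a\geq a_*$ via $\lambda_P\geq(2^k+1)^{-n/k}$, so your fallback via restricted duality is unnecessary.

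One point needs tidying: the cone bound must carry $2\tau_s(\rho)$ rather than $4\tau_s(\rho)$ before you absorb $\|\Delta\|_{\mathrm{tr}}/2$. Your block decomposition does give $\|\Delta\|_{\mathrm{tr}}\leq2\tau_s(\rho)+\sqrt{6s}\|\Delta\|_\F$. The paper instead uses the negative spectral projection, of rank at most $s$, of $\rhoCF-\rho_{\leq s}$ to get $2\tau_s(\rho)+2\sqrt{s}\|\Delta\|_\F$. With either version the coefficient $4$ comes out exactly, and only the universal constant $B_0$ changes.
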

\begin{proof}
Taking expectations in \eqref{eq:polynomial-forward-oracle}, using the
covariance bound \eqref{eq:forward-covariance-rate} with coefficient
\eqref{eq:audited-covariance-constant} and the fitting tolerance
\eqref{eq:candidate-two-program}, gives
\eqref{eq:polynomial-expected-oracle}. Since both $\rhoCF$ and $\rho$ are
density matrices, their trace norm distance is at most $2$.
\end{proof}

The inequality holds simultaneously for all truncation levels $s$, so its
right side can be minimized over $s$ without supplying a rank,
spectral decay parameter, or eigenbasis to the estimator. This is the
statewise adaptation property.

The key ingredients are statewise oracle geometry and the logarithm free
expected operator norm bound \eqref{eq:forward-covariance-rate} for the raw
score average. The argument in Proposition~\ref{prop:primitive-covariance} uses a
well conditioned covariance representation of effective rank at most
$2d$ and the uniform reference law eighth moment bound in
Theorem~\ref{thm:ck-fourth-projector}, \eqref{eq:ck-u4-34}, valid for arbitrary,
possibly entangled directions. Born reweighting gives the sixth moment
condition \eqref{eq:ck-born-sixth} needed for concentration.

The next proposition gives the weighted PLS counterpart, including its
dependence on optimization error. Its curvature assumption
\eqref{eq:curvature} is verified for the periodic ensemble in
Section~\ref{sec:fourth-moment}.

\phantomsection\label{sec:mw-statistical-result}
\begin{proposition}[Weighted PLS trace loss oracle]
\label{prop:mw-oracle}
Assume \eqref{eq:curvature} with $a>0$. For every density matrix $\rho$,
if a feasible $\sigma$ satisfies
{\color{black}$g_{\mathrm{MW}}(\sigma)\leq\varepsilon$}, then, for every $1\leq s\leq d$,
\begin{equation}
 \begin{aligned}
 \|\sigma-\rho\|_{\mathrm{tr}}
 &\leq4\tau_s(\rho)+\frac{4s}{a}\|\overline Y_T-\cL(\rho)\|_{\op}\\
 &\quad+4\sqrt{\frac{s\varepsilon}{a}}.
 \end{aligned}
 \label{eq:mw-oracle}
\end{equation}
The exact minimizer satisfies the same inequality with $\varepsilon=0$.
For the periodic experiment under \eqref{eq:headline-block-condition},
a measurable feasible rule with {\color{black}$g_{\mathrm{MW}}\leq\varepsilon_T:=\min\{1,d/T\}$}
satisfies, for $T\geq2d$,
\begin{equation}
 \E_\rho\|\sigma-\rho\|_{\mathrm{tr}}
 \leq4\tau_s(\rho)+B_{\rm MW}s\sqrt{d/T},
 \label{eq:mw-expected-oracle}
\end{equation}
with universal $B_{\rm MW}$.
\end{proposition}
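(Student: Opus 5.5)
The plan is to compare $\sigma$ with $\rho$ through the first order certificate, turn that into a quadratic inequality in $\Delta:=\sigma-\rho$, and then convert it to trace norm with the same low rank geometry used for the OMD oracle \eqref{eq:polynomial-forward-oracle}. I read \eqref{eq:curvature} as a restricted strong convexity statement $\langle H,\cL(H)\rangle_\F\geq a\|H\|_\F^2$ for traceless Hermitian $H$. The difference of two density matrices is such an $H$.

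\textbf{Step 1 (basic inequality).} Take $\omega=\rho$ in the definition of $g_{\mathrm{MW}}$ and use $\nabla\mathcal Q_T(\sigma)=\cL(\sigma)-\overline Y_T$. This gives $\langle \cL(\sigma)-\overline Y_T,\Delta\rangle_\F\leq\varepsilon$. Write $\cL(\sigma)-\overline Y_T=\cL(\Delta)+(\cL(\rho)-\overline Y_T)$ and put $\eta:=\|\overline Y_T-\cL(\rho)\|_{\op}$. Trace/operator norm duality and the curvature assumption then yield
\[
a\|\Delta\|_\F^2\leq\langle\Delta,\cL(\Delta)\rangle_\F\leq\varepsilon+\eta\|\Delta\|_{\mathrm{tr}}.
\]

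\textbf{Step 2 (oracle geometry).} Let $P$ be the spectral projector onto the top $s$ eigenvectors of $\rho$. Split $\Delta$ into the blocks $P\Delta P$, $P\Delta P^\perp$ and $P^\perp\Delta P^\perp$. The first two blocks have rank at most $s$, so their trace norms are at most a multiple of $\sqrt s\|\Delta\|_\F$. For the last block, positivity of $\sigma$ gives $\|P^\perp\Delta P^\perp\|_{\mathrm{tr}}\leq\tr(P^\perp\sigma P^\perp)+\tau_s(\rho)$. The trace one constraint gives $\tr(P^\perp\sigma P^\perp)=\tr(P(\rho-\sigma)P)+\tau_s(\rho)$, and this is at most $\sqrt s\|\Delta\|_\F+\tau_s(\rho)$. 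Together this should give $\|\Delta\|_{\mathrm{tr}}\leq c\tau_s(\rho)+c'\sqrt s\|\Delta\|_\F$. This is the same lemma that underlies \eqref{eq:polynomial-forward-oracle}, so I would reuse it as stated there.

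\textbf{Step 3 (closing the recursion).} Set $x=\|\Delta\|_{\mathrm{tr}}$. Steps 1 and 2 give
\[
x\leq c\tau_s+c'\sqrt{s(\varepsilon+\eta x)/a}\leq c\tau_s+c'\sqrt{s\varepsilon/a}+c'\sqrt{s\eta x/a}.
\]
Solving this quadratic in $\sqrt x$, via $c'\sqrt{s\eta x/a}\leq x/2+c'^2s\eta/(2a)$, gives a bound of the form \eqref{eq:mw-oracle}. The case $\varepsilon=0$ follows because the exact minimizer has zero gap by first order optimality.

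\textbf{Step 4 (expected bound).} Under \eqref{eq:headline-block-condition}, Section~\ref{sec:fourth-moment} verifies \eqref{eq:curvature} with a universal $a$. Taking expectations in \eqref{eq:mw-oracle} and applying \eqref{eq:forward-covariance-rate} gives $\E\eta\leq C\sqrt{d/T}$. With $\varepsilon\leq d/T$, the last term satisfies $4\sqrt{s\varepsilon/a}\leq4a^{-1/2}\sqrt{s}\sqrt{d/T}\leq 4a^{-1/2}s\sqrt{d/T}$. Collecting these terms gives \eqref{eq:mw-expected-oracle}.

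\textbf{Main obstacle.} I expect the main difficulty to be Steps 2--3 done with the exact constants $4,4,4$. The naive absorption above loses a factor $2$ on $\tau_s$. To avoid it I would try two things. First, keep $\tau_s$ outside the square root by bounding only the rank $\leq 2s$ part through $\|\Delta\|_\F$. Second, use the trace one identity to show the off diagonal part costs nothing beyond the $P$ block, so that $x\leq 4\tau_s+2\sqrt{2s}\|\Delta\|_\F$, and then use $x\leq$ the larger of the two terms instead of their sum.
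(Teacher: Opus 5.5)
Your Steps~1 and~4 coincide with the paper's proof. Taking $\omega=\rho$ in the gap gives \eqref{eq:mw-basic-inequality}, and the expectation step uses the covariance rate together with $\sqrt s\leq s$, giving $B_{\rm MW}=4\covstar/a_*+4/\sqrt{a_*}$. One small point: \eqref{eq:forward-covariance-rate} is stated only for $T\geq c_0d$, so for $2d\leq T<c_0d$ you need the padding argument of Appendix~\ref{app:quantitative} or the loss cap. In Steps~2--3, if you literally invoke Lemma~\ref{lem:trace-conversion}, you obtain \eqref{eq:mw-oracle} with its exact constants. Its first conclusion is stated for an arbitrary hypothesis $a\|\Delta\|_\F^2\leq e\|\Delta\|_{\mathrm{tr}}+\varepsilon$, and applying it is exactly what the paper does.

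The self-contained Step~2 you sketch does not give the stated constants, and your ``main obstacle'' misplaces the difficulty.

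\textbf{What the target requires.} Write the geometric bound as $\|\Delta\|_{\mathrm{tr}}\leq c\tau_s+c'\sqrt s\|\Delta\|_\F$. Your absorption then yields $2c\tau_s+c'^2s\eta/a+2c'\sqrt{s\varepsilon/a}$, so the target needs exactly $c=c'=2$.

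\textbf{Where your block argument falls short.} It already has $c=2$, since $\|P^\perp\Delta P^\perp\|_{\mathrm{tr}}\leq2\tau_s+\tr(P(\rho-\sigma)P)$. The factor $2$ lost on $\tau_s$ during absorption is harmless; it is precisely where the $4$ comes from. The defect is $c'$, for two reasons:
\begin{itemize}
\item The diagonal block enters twice, through $\|P\Delta P\|_{\mathrm{tr}}-\tr(P\Delta P)$.
\item The off-diagonal pair $P\Delta P^\perp+P^\perp\Delta P$ has rank up to $2s$. You omitted the second summand.
\end{itemize}
Cauchy--Schwarz over the blocks therefore gives only $c'=\sqrt6$. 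The result is $6s/a$ and $2\sqrt{6s\varepsilon/a}$ in place of $4s/a$ and $4\sqrt{s\varepsilon/a}$.

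\textbf{Why your proposed repairs fail.}
\begin{itemize}
\item The bound $x\leq4\tau_s+2\sqrt{2s}\|\Delta\|_\F$ would give $8\tau_s$ after absorption.
\item The off-diagonal part is not free. For a small rotation of a pure state with $s=1$, the loss $2\sin\theta$ comes almost entirely from the off-diagonal entries, while $\|P\Delta P\|_{\mathrm{tr}}=\sin^2\theta$.
\item From $x\leq A+B$ you only get $x\leq2\max\{A,B\}$, not $x\leq\max\{A,B\}$.
\end{itemize}

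\textbf{The missing idea.} It is the cone bound \eqref{eq:cone-bound}. Let $E_-$ be the negative spectral projection of $\sigma-\rho_{\leq s}$. It has rank at most $s$, because the compression of $\rho_{\leq s}=\sigma-(\sigma-\rho_{\leq s})$ to its range is positive definite. The negative part identity, $\tr\Delta=0$, and $\tr(E_-\rho_{>s})\geq0$ then give
\[
\|\Delta\|_{\mathrm{tr}}\leq2\tau_s-2\tr(E_-\Delta)\leq2\tau_s+2\sqrt s\|\Delta\|_\F .
\]
With this bound, your naive absorption produces \eqref{eq:mw-oracle} exactly.
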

\begin{proof}
Section~\ref{sec:mw-oracle-proof} proves the deterministic inequality
\eqref{eq:mw-oracle} from \eqref{eq:mw-basic-inequality} and positivity
\eqref{eq:cone-bound}. The covariance estimate
\eqref{eq:forward-covariance-rate} gives the expected oracle.
\end{proof}

Both estimators have trace loss at most $2$ for every $T$.
We next specialize the statewise bounds of Theorem~\ref{thm:statewise-oracle}
and Proposition~\ref{prop:mw-oracle} to spectral decay classes.

\subsection{Spectral class minimax consequence}
\label{sec:spectral-minimax}
Each oracle is stronger than its spectral class upper bound consequence:
it retains the actual tail $\tau_s(\rho)$ of each state. On
$\cC_{\alpha,d}(L)$, substituting $\tau_s(\rho)\leq Ls^{1-\alpha}$ and
optimizing over $s$ gives a uniform upper rate. Combining that consequence
with the matching lower bound over all randomized nonadaptive single copy
designs yields the following minimax theorem.
Define the spectral class rate
\begin{equation}
 \notationmark{\mathsf{r}}_{\alpha,L}(d,T)=\min\left\{1,
 L^{1/\alpha}\left(\frac dT\right)^{\frac{\alpha-1}{2\alpha}},
 \sqrt{\frac{d^3}{T}}\right\}.
 \label{eq:headline-rate}
\end{equation}

\begin{theorem}[Adaptive minimax tomography over spectral decay classes]
\label{thm:headline-ck}
Let $n,T\geq1$, $d=2^n$, and let $k$ be an even divisor of $n$ satisfying
\eqref{eq:headline-block-condition}. There is one OMD
estimator \eqref{eq:candidate-two-program}, depending only on the known
experiment and $(d,T)$, such that, for every $\alpha>1$ and $L\geq1$,
\begin{equation}
 \begin{aligned}
 c_\alpha\notationmark{\mathsf{r}}_{\alpha,L}(d,T)
 &\leq\mathfrak R_T(\cC_{\alpha,d}(L))\\
 &\leq\mathfrak R_T^{\mathcal E_{\rm per}}(\cC_{\alpha,d}(L))\\
 &\leq\sup_{\rho\in\cC_{\alpha,d}(L)}
           \E_\rho\|\rhoCF-\rho\|_{\mathrm{tr}}\\
 &\leq C\notationmark{\mathsf{r}}_{\alpha,L}(d,T).
 \end{aligned}
 \label{eq:headline-minimax-sandwich}
\end{equation}
Here $C$ is universal and $c_\alpha>0$ depends only on $\alpha$.
\end{theorem}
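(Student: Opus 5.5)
The sandwich has four inequalities. The middle two are immediate. The inequality $\mathfrak R_T\le\mathfrak R_T^{\mathcal E_{\rm per}}$ holds because the fixed periodic ensemble, together with any rule acting on $B^T$, is an admissible randomized nonadaptive design, as noted after \eqref{eq:minimax-risk}. The inequality $\mathfrak R_T^{\mathcal E_{\rm per}}\le\sup_\rho\E_\rho\|\rhoCF-\rho\|_{\mathrm{tr}}$ holds because $\rhoCF$ is one particular rule. The real work is the outer upper bound and the lower bound. Throughout, the OMD estimator is a single fit with $\delta_{\rm alg}=\min\{1,\sqrt{d/T}\}$, which depends only on $(d,T)$ and the known experiment, so it serves all $(\alpha,L)$ at once.

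\textbf{Upper bound.} I would fix $\rho\in\cC_{\alpha,d}(L)$ and split into cases according to which of the three terms of $\mathsf r_{\alpha,L}$ is smallest, absorbing constants into $C$.

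\begin{itemize}
\item If $T<2d$, then $\sqrt{d^3/T}\ge1$ and $L^{1/\alpha}(d/T)^{(\alpha-1)/(2\alpha)}\ge 2^{-1/2}$, so $\mathsf r\ge 2^{-1/2}$. The trivial bound $2$ from Theorem~\ref{thm:statewise-oracle} therefore suffices; the same trivial bound covers $\mathsf r=1$.
\item If $T\ge2d$, I apply \eqref{eq:polynomial-expected-oracle} with $\tau_s(\rho)\le Ls^{1-\alpha}$, which gives $4Ls^{1-\alpha}+B_0 s\sqrt{d/T}$.
  \begin{itemize}
  \item Taking $s=d$ kills the tail and gives $B_0\sqrt{d^3/T}$.
  \item Taking $s=\lceil s^*\rceil$ with $s^*=(L\sqrt{T/d})^{1/\alpha}$ balances the two terms and gives order $L^{1/\alpha}(d/T)^{(\alpha-1)/(2\alpha)}$. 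The rounding is harmless because $s^*\ge1$ when $T\ge d$. If $s^*>d$, I use $s=d$ instead, and in that regime the $\sqrt{d^3/T}$ term is the smaller one anyway.
  \end{itemize}
  Taking the minimum over these choices yields $C\,\mathsf r_{\alpha,L}$ with $C$ universal, since $\alpha$ enters only through exponents that are handled exactly.
\end{itemize}

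\textbf{Lower bound.} This is the main obstacle. I would invoke the lower bound argument of Section~\ref{sec:lower}, as the introduction describes it. Choose a rank level $r\le d/2$ and build states $\rho_j=(1-\eta)\rho_0+\eta P_j/r$. Here $\rho_0$ is a fixed head part arranged so that the whole family lies in the class. The $P_j$ are rank-$r$ projectors from a Grassmann packing inside a tail subspace, with pairwise separation $\|P_i-P_j\|_{\mathrm{tr}}\gtrsim r$, and there are $\exp(c\,rd)$ of them. Membership in $\cC_{\alpha,d}(L)$ requires $\eta\lesssim Lr^{1-\alpha}$. Every state is conjugated by a shared Haar rotation that is hidden from the statistician, so no fixed nonadaptive design can align with the packing.

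For a single copy I would bound the averaged $\chi^2$ information, after averaging over the hidden rotation, by $O(\eta^2/d)$ per copy for any POVM. This uses the Haar second moment together with the linearity of Born probabilities. Summing over the $T$ independent copies conditional on the seed gives mutual information $\lesssim T\eta^2/d$. Fano's inequality then requires $T\eta^2/d\lesssim rd$, that is, $\eta\asymp \min\{1,\,d\sqrt{r/T}\}$, and yields a risk of order $\eta$.

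Optimizing over $r$ against the class constraint $\eta\le Lr^{1-\alpha}$ recovers the three regimes:
\begin{itemize}
\item $r\asymp d$ gives $\sqrt{d^3/T}$;
\item the balanced choice $r\asymp(L\sqrt{T/d})^{1/\alpha}$ gives $L^{1/\alpha}(d/T)^{(\alpha-1)/(2\alpha)}$;
\item saturation $\eta\asymp1$ gives the constant $1$.
\end{itemize}
The constant $c_\alpha$ absorbs the losses from integer rounding and from the $\alpha$-dependent class constraint. The delicate step is the per-copy information bound, uniform over arbitrary POVMs and the seed. I would take it from the shared rotation chi square lemma of Section~\ref{sec:lower}, combined with the nonadaptive Fano reduction that conditions on the public seed.
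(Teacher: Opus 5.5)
Your handling of the two middle inequalities and of the upper bound matches the paper's argument: the statewise oracle \eqref{eq:polynomial-expected-oracle} at $s=\lceil s^*\rceil$ or $s=d$, plus the loss cap when $T<2d$. Your rounding is fine. The lower-bound sketch, however, contains a false quantitative step. The Haar-averaged one-copy $\chi^2$ is not $O(\eta^2/d)$. Lemma~\ref{lem:one-copy-chi-square} gives $\E_W\chi^2\le 2\beta^2\tr(A_x^2)/(1-\beta)\le 8\beta^2/(3\rtail)$, which is of order $\eta^2/r$. This order is attained already by a fixed orthonormal basis measurement: each of the $d$ outcomes contributes order $\eta^2\tr(A_x^2)/d$ after the rotation average. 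So the information after $T$ copies is of order $T\eta^2/r$, and Fano with $\ln N\gtrsim rd$ gives $\eta\lesssim r\sqrt{d/T}$, not $\eta\asymp d\sqrt{r/T}$.

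Your threshold cannot hold. With the paper's two-dimensional head the hard states have rank $r+2$, so for $T\ge d^2r$ the statewise oracle with $s=r+2$ gives them risk $O(r\sqrt{d/T})$. That is smaller than your claimed lower bound $d\sqrt{r/T}$ by a factor $\sqrt{d/r}$. Your three final regimes are nonetheless correct, but only because the balancing choice $r\asymp(L\sqrt{T/d})^{1/\alpha}$ you state solves $Lr^{1-\alpha}=r\sqrt{d/T}$. It silently uses the correct threshold, not the one you derived.

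To repair the lower bound, use the correct budget $\kappa=8\beta^2/(3\rtail)$ with the constraint $\beta\le c_I\rtail\sqrt{d_\perp/T}$, as in Section~\ref{sec:lower-fano}, followed by the discrete envelope Lemma~\ref{lem:discrete-lower-envelope}. Alternatively, simply cite Theorem~\ref{thm:lower}, as the paper does. Either way you must check its hypothesis $d\ge514$, which is used for $\ln N\ge4\ln2$ and $M\ge(85/257)d$, and which your sketch omits. It follows from \eqref{eq:headline-block-condition}, which forces $d\ge65536$.

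Note also that the one-copy bound is not just "Haar second moment plus linearity": the $\chi^2$ divides by the rotated reference probability. The paper controls that denominator through the two-dimensional head. This makes $S\sim\mathrm{Beta}(2,d_\perp)$, so that $\E(1-S)^2/S=d_\perp(d_\perp+1)/d$ is finite. With a one-dimensional head, $\E[1/S]=\infty$: the crude bound $q_{0,z}\ge(1-\beta)S/2$ fails, and a sharper estimate loses a logarithmic factor. Your unspecified "fixed head part" therefore needs to be chosen this way.
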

\begin{proof}
Optimize Theorem~\ref{thm:statewise-oracle} using
$\tau_s(\rho)\leq Ls^{1-\alpha}$, $s=d$, and the loss cap.
Section~\ref{sec:minimax-proof} and Appendix~\ref{sec:upper-rate-optimization}
give the integer rounding and universal upper coefficient
\eqref{eq:audited-upper-constant}. The block condition \eqref{eq:headline-block-condition} gives
$d\geq65536$, so Theorem~\ref{thm:lower}, specifically \eqref{eq:lower-main}, supplies the lower bound.
The remaining risk inequalities follow from admissibility of the
periodic experiment and selected estimator.
\end{proof}

The matching minimax conclusion is for the specified spectral decay
classes $\cC_{\alpha,d}(L)$. The statewise oracle itself remains valid on
all of $\cD_d$; it does not, by itself, establish a matching minimax lower
bound for every other state class.

Every weighted PLS fit satisfying \eqref{eq:mw-approximate-fit} satisfies the
same statistical minimax sandwich
\eqref{eq:headline-minimax-sandwich}, with $\rhoCF$ replaced by $\rhoMW$
and a possibly different universal upper constant; see
Proposition~\ref{prop:mw-oracle}. Its statewise oracle
\eqref{eq:mw-expected-oracle} gives the spectral class upper bound,
while the unrestricted lower bound is estimator independent.
Thus OMD and weighted PLS attain this rate without knowing the rank, eigenbasis,
or spectral decay parameters.

The reconstruction procedures and their dense arithmetic costs are summarized
in Section~\ref{sec:implementation}. Supplement
Section~\ref*{supp:computational-proofs} gives the computational guarantees
and proofs.

\subsection{Comparison of the three estimators}
\label{sec:comparisons}
On the same records and feasible set $\cD_d$, PLS, OMD, and weighted PLS use
Frobenius, operator norm, and measurement weighted quadratic geometries,
respectively, and can return different estimates
\cite{AcharyaKypraiosGuta2019}.

Table~\ref{tab:estimator-comparison} summarizes the available bounds,
with $\notationmark{\mathsf{r}}_{\alpha,L}$ defined in \eqref{eq:headline-rate}.
The OMD and weighted PLS spectral class minimax guarantees are proved here.
The PLS spectral bound follows from Cho and Kim's analysis of inverse channel
estimators \cite{ChoKim2026} and the projection argument in
Appendix~\ref{sec:pls-comparison}, while its bound over the full state space
is already free of logarithmic factors. The logarithm in this available upper bound
does not establish that PLS requires it.

\begin{table*}[!b]
\centering\small
\caption{Expected full trace loss bounds on common periodic records, up to constants.
The PLS bound follows from \cite{ChoKim2026} and Frobenius projection
(Appendix~\ref{sec:pls-comparison}); its logarithm is not proved necessary.}
\label{tab:estimator-comparison}
\begin{tabular}{@{}p{.22\textwidth}p{.50\textwidth}p{.22\textwidth}@{}}
\toprule
Reconstruction & Available risk bound & Reconstruction after averaging\\
\midrule
PLS &
$\min\{1,\ L^{1/\alpha}(d\ln d/T)^{(\alpha-1)/(2\alpha)},\ \sqrt{d^3/T}\}$
& Channel inversion and one spectral projection\\[3pt]
OMD & $\notationmark{\mathsf{r}}_{\alpha,L}(d,T)$
& Convex operator norm fit\\[3pt]
MW-PLS & $\notationmark{\mathsf{r}}_{\alpha,L}(d,T)$
& Convex quadratic fit\\
\bottomrule
\end{tabular}
\end{table*}

The PLS rate over the full state space in Table~\ref{tab:estimator-comparison} follows from the
explicit uniform bound
\begin{equation}
 \sup_{\rho\in\cD_d}\E_\rho\|\rhoPLS-\rho\|_{\mathrm{tr}}
 \leq\min\{2,\,(d+1)\sqrt{d/T}/a_*\},
 \label{eq:pls-full-bound}
\end{equation}
proved in Appendix~\ref{sec:pls-comparison}, with $a_*=32767/32768$. Thus the demonstrated improvement is
in the low rank and intermediate spectral regimes.

For global Clifford measurements ($k=n$), $\cL=\operatorname{Id}$,
so the exact weighted PLS fit $\rhoMW^\star$ equals PLS; the common estimate is also an exact OMD minimizer
(Supplement Lemma~\ref*{lem:supp-global-shortcut}). If $A_T\succeq0$, all three
objectives admit the same exact reconstruction $A_T$. Thus overlap
between their numerical curves is compatible with different objectives.

\begin{samepage}
\subsection{Consequences and interpretation}
\label{sec:consequences}
\begin{corollary}[Rank class minimax rate]
\label{cor:exact-rank}
Under the conditions of Theorem~\ref{thm:statewise-oracle}, for every
$T\geq1$ and $1\leq r\leq d$, put
$\cD_{d,r}=\{\rho\in\cD_d:\rank(\rho)\leq r\}$. Then
\begin{equation}
 \begin{aligned}
 c_{\rm rank}\min\{1,r\sqrt{d/T}\}
 &\leq\mathfrak R_T(\cD_{d,r})\\
 &\leq\sup_{\rho\in\cD_{d,r}}
 \E_\rho\|\rhoCF-\rho\|_{\mathrm{tr}}\\
 &\leq\min\{2,C_{\rm rank}r\sqrt{d/T}\}.
 \end{aligned}
 \label{eq:exact-rank-risk}
\end{equation}
The constants are universal and positive. The same inequalities hold
with $\rhoCF$ replaced by the weighted PLS fit $\rhoMW$ satisfying
\eqref{eq:mw-approximate-fit}, with the same $c_{\rm rank}$ and a possibly
different universal upper constant $C_{\rm rank}^{\rm MW}$.
\end{corollary}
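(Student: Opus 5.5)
The upper bound comes from the statewise oracle with $s=r$, the lower bound from embedding the rank class in a spectral decay class and using Theorem~\ref{thm:headline-ck}.

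\textbf{Upper bound.} Fix $\rho\in\cD_{d,r}$, so $\tau_r(\rho)=0$.

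For $T\geq2d$, apply Theorem~\ref{thm:statewise-oracle} with $s=r$. This gives $\E_\rho\|\rhoCF-\rho\|_{\mathrm{tr}}\leq B_0r\sqrt{d/T}$.

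For $T<2d$, we have $r\sqrt{d/T}\geq\sqrt{d/T}>1/\sqrt2$. Hence the trivial cap $2$ is at most $2\sqrt2\,r\sqrt{d/T}$.

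So $C_{\rm rank}=\max\{B_0,2\sqrt2\}$ works, and combining with the cap gives $\min\{2,C_{\rm rank}r\sqrt{d/T}\}$. The MW-PLS version is identical, using \eqref{eq:mw-expected-oracle} with $s=r$ and $B_{\rm MW}$ in place of $B_0$.

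\textbf{Lower bound.} The plan is to reduce to the spectral lower bound of Theorem~\ref{thm:headline-ck}, i.e.\ \eqref{eq:lower-main} of Theorem~\ref{thm:lower}, with $\alpha=2$.

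First, show that a rank-$r$ state lies in $\cC_{2,d}(r)$. For $s\geq r$ the tail is $\tau_s=0$. For $s<r$ we have $\tau_s\leq1\leq r/s=r\,s^{1-2}$.

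Then $\cD_{d,r}\subseteq\cC_{2,d}(r)$ gives $\mathfrak R_T(\cD_{d,r})\leq\mathfrak R_T(\cC_{2,d}(r))$, which is the wrong direction. So the inclusion alone cannot transfer the lower bound, and I need to work inside the proof of Theorem~\ref{thm:lower}.

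That proof uses a Grassmann packing of spectral tail states with a shared Haar rotation. If the hard family in the $L^{1/\alpha}(d/T)^{(\alpha-1)/(2\alpha)}$ regime is built from states of rank of order $s^*$, then it should suffice to take its rank $\lesssim r$. Concretely, I expect states of the form
\[
\rho=(1-\epsilon)\sigma_0+\epsilon\,P_V/r',
\]
with $\sigma_0$ of rank $r-r'$ and $V$ an $r'$-dimensional subspace, giving separation $\asymp\epsilon$. The one-copy chi-square bound then should yield $\epsilon\asymp r\sqrt{d/T}$ whenever this is $\lesssim1$.

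So the main step is to re-instantiate the packing of Theorem~\ref{thm:lower} at rank $r$, checking that the construction there is stated for generic rank (or effective-rank) parameters. If the lower theorem is phrased only for classes $\cC_{\alpha,d}(L)$, I would instead pick $\alpha$ large and $L$ so that the class's extremal packing states have rank $\leq r$, and verify that $c_\alpha$ does not degenerate. A large $\alpha$ risks exactly that degeneration, which is why I prefer re-running the packing directly.

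\textbf{Remaining regimes.} When $r\sqrt{d/T}\gtrsim1$, I would use the saturated regime of the same packing, which gives an $\Omega(1)$ lower bound. This matches $\min\{1,r\sqrt{d/T}\}$. The condition $d\geq65536$ from \eqref{eq:headline-block-condition} ensures the packing dimension requirements.

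The main obstacle is this lower bound step: confirming that the hard instances of Theorem~\ref{thm:lower} can be taken exactly rank $\leq r$ with a universal constant $c_{\rm rank}$. The upper bound is routine.
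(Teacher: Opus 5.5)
Your upper bound is correct and is the paper's argument: the oracle at $s=r$ with $\tau_r(\rho)=0$ for $T\geq 2d$, and the loss cap for $T<2d$. Your constant $\max\{B_0,2\sqrt2\}$ works as well as the paper's $\max\{4,B_0\}$.

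For $r\geq3$, your plan of rerunning the packing directly is also what the paper does. It sets $\widetilde m=\min\{r-2,\lfloor (d-2)/3\rfloor\}$. It uses the states $\frac{1-\beta}{2}\Pi_H+\frac{\beta}{\widetilde m}P_x$, of rank $\widetilde m+2\leq r$, with $\beta=\gamma\min\{1,\widetilde m\sqrt{d_\perp/T}\}$, and drops the spectral class constraint on $\beta$. It then converts the result via $\min\{1,r\sqrt{d/T}\}\leq 12\min\{1,\widetilde m\sqrt{d_\perp/T}\}$. Large $r$ is handled by this cap on $\widetilde m$ together with $r\leq 6\widetilde m$, not merely by $d\geq65536$.

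The genuine gap is ranks $r=1,2$, which your template cannot reach.

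\emph{Rank two.} The Haar averaged one copy chi square bound needs the unperturbed head $\sigma_0$ to have rank at least two. With $S=\|\Pi_H W^\dagger v\|_2^2\sim\operatorname{Beta}(\dim H,d-\dim H)$, the quantity $\E(1-S)^2/S$ is of order $d$ only when $\dim H\geq2$. There, the factor $s$ in the $\operatorname{Beta}(2,d_\perp)$ density cancels the $1/s$. For $\dim H=1$ it diverges, and the $\beta/d_\perp$ floor of the full rank reference repairs this only up to a logarithmic factor. So your family, run as in the paper, has rank at least three.

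\emph{Rank one.} Here the template breaks down entirely. A state $(1-\epsilon)\sigma_0+\epsilon P_V/r'$ of rank one forces $\epsilon\in\{0,1\}$, so there is no small perturbation and no separation of order $\epsilon$. Pure states need a different construction, and the paper supplies a separate argument.
\begin{itemize}
\item It applies Assouad's method to the hypercube $|\psi_\theta\rangle=h|0\rangle+a\sum_{j=1}^{d-1}\theta_j|j\rangle$, with $a=(8\sqrt{\max\{d-1,T\}})^{-1}$ and $h=\sqrt{1-(d-1)a^2}\geq1/2$.
\item Each coordinate test is controlled by $\big\||\psi_\theta\rangle^{\otimes T}-|\psi_{\theta^{(j)}}\rangle^{\otimes T}\big\|_2^2\leq4Ta^2$.
\item This is combined with $|\langle u|E|u\rangle-\langle v|E|v\rangle|\leq2\||u\rangle-|v\rangle\|_2$ for the effect $0\preceq E\preceq I$ induced by any seed and binary test.
\item The Hamming loss is converted via $ha\,H_\theta(\sigma)\leq\sqrt{d-1}\,\|\sigma-\rho_\theta\|_{\mathrm{tr}}$.
\end{itemize}
This yields $\frac1{128}\min\{1,\sqrt{d/T}\}$ for $\cD_{d,1}$. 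Rank two then follows from $\cD_{d,1}\subseteq\cD_{d,2}$ and $\min\{1,2x\}\leq2\min\{1,x\}$.

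Without this piece, your lower bound, and hence the corollary, is established only for $r\geq3$.
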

\end{samepage}
\begin{proof}[Proof sketch]
The upper bounds follow from $\tau_r(\rho)=0$ in
\eqref{eq:polynomial-expected-oracle} for OMD and
\eqref{eq:mw-expected-oracle} for weighted PLS, together with the loss cap $2$.
The lower bound follows the chi square/Fano strategy of Lowe and Nayak
\cite[arXiv v3, Lemma~4.1 and Thm.~4.11]{LoweNayak2025}, whose $\Omega(dr^2/\varepsilon^2)$
sample bound covers $1\leq r\leq\lfloor d/3\rfloor$, small
$\varepsilon$, and constant success probability in their finite outcome model.

We give a new, self contained proof of the expected risk lower bound in
the randomized nonadaptive single copy model~\eqref{eq:minimax-risk}.
For $3\leq r\leq\lfloor d/3\rfloor$,
Sections~\ref{sec:lower-packing}--\ref{sec:lower-fano} implement the
above chi square/Fano strategy \cite[Thm.~2.10.1]{CoverThomas2006},
using our hard family with the Grassmann packing of
\cite[Lemma~2, Eq.~(19)]{FlammiaGrossLiuEisert2012}.
Our hard family has rank {\color{black}${\rtail{}}+2$}, so our proof treats $r=1,2$ separately:
rank one uses the pure state hypercube \eqref{eq:pure-cube-states},
Assouad's binary testing argument \cite[Lemmas~1--2]{Yu1997}, and
the quantum testing comparison \cite{FuchsVanDeGraaf1999}, proved here
in \eqref{eq:pure-cube-binary-effect};
rank two follows from $\cD_{d,1}\subseteq\cD_{d,2}$, with a universal
constant loss. The cited theorem itself already includes these two ranks.

For $d/3<r\leq d$, the inclusion
$\cD_{d,\lfloor d/3\rfloor}\subseteq\cD_{d,r}$ and
$\lfloor d/3\rfloor\geq r/6$ extend the expected risk lower bound
up to a universal constant.
Appendix~\ref{app:rank-proof} gives the complete proof for every
$1\leq r\leq d$.
\end{proof}

The rank dependent scaling is already known; PLS attains it with the
uniform POVM \cite[Thm.~2]{GutaKahnKuengTropp2020}.
Earlier spectral thresholding \cite{ButuceaGutaKypraios2015,ButuceaGutaKypraios2016Corrigendum},
trace regression oracles \cite{KoltchinskiiXia2015}, and PLS tail bounds
also address unknown or approximate rank
\cite[Prop.~1 and Thm.~4]{GutaKahnKuengTropp2020}.
Here, under \eqref{eq:headline-block-condition}, OMD and weighted PLS attain
minimax rates over every class $\cC_{\alpha,d}(L)$ with polynomial
spectral tails, using the same periodic Clifford experiment without
rank, eigenbasis, or decay inputs (Theorem~\ref{thm:headline-ck} and Proposition~\ref{prop:mw-oracle}).
\begin{revision}
In particular, Corollary~\ref{cor:exact-rank} removes the logarithmic factor
in Cho and Kim's low rank upper bound \cite[Thm.~1]{ChoKim2026}
by using OMD or weighted PLS under the sufficient block size condition
\eqref{eq:headline-block-condition}.
\end{revision}

\section{Statistical Upper Bounds for OMD and Weighted PLS}
\label{sec:oracle}
The proof separates into a common trace loss conversion, a common
covariance bound, and the two fitting certificates. Section~\ref{sec:fourth-moment}
verifies the circuit conditions; Appendix~\ref{sec:upper-rate-optimization}
optimizes both oracles over the spectral tail.

\subsection{Deterministic curvature and positivity}
For traceless Hermitian $H$, the previously defined map satisfies
$\cL(H)=(d+1)\cM(H)$. Suppose
\begin{equation}
 \langle H,\cL(H)\rangle_\F\geq a\|H\|_\F^2
 \quad\text{for all }H=H^*,\ \tr H=0,
 \label{eq:curvature}
\end{equation}
where $a>0$. For the periodic ensemble under
\eqref{eq:headline-block-condition}, Proposition~\ref{prop:periodic-low-order} in
Section~\ref{sec:periodic-curvature} provides such a constant, with
$a\geq a_*:=32767/32768$.
The following inequality uses the positive semidefinite
geometry underlying low rank tomography and matrix oracle bounds
\cite{GutaKahnKuengTropp2020,FlammiaGrossLiuEisert2012,KoltchinskiiXia2015}.

\begin{lemma}[Shared trace loss conversion]
\label{lem:trace-conversion}
For $\rho,\sigma\in\cD_d$, write $\Delta=\sigma-\rho$.
If $a>0$, $e,\varepsilon\geq0$, and
\[
 a\|\Delta\|_\F^2\leq e\|\Delta\|_{\mathrm{tr}}+\varepsilon,
\]
then, for every $1\leq s\leq d$,
\[
 \|\Delta\|_{\mathrm{tr}}
 \leq4\tau_s(\rho)+\frac{4s}{a}e
                  +4\sqrt{\frac{s\varepsilon}{a}}.
\]
In particular, under \eqref{eq:curvature},
\begin{equation}
 \|\Delta\|_{\mathrm{tr}}
 \leq4\tau_s(\rho)+\frac{4s}{a}\|\cL(\Delta)\|_{\op}.
 \label{eq:trace-conversion}
\end{equation}
\end{lemma}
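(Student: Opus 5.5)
The plan is the standard low rank cone argument. Let $P$ be the orthogonal projector onto the span of the top $s$ eigenvectors of $\rho$, and put $P^\perp=I-P$. Write $\rho=\rho_s+\rho_s^\perp$, where $\rho_s=P\rho P$ has rank at most $s$ and $\|\rho_s^\perp\|_{\mathrm{tr}}=\tau_s(\rho)$.

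\textbf{Step 1: positivity and trace one give a cone condition.} Since $\sigma\succeq0$, we have $P^\perp\Delta P^\perp=P^\perp\sigma P^\perp-\rho_s^\perp\succeq-\rho_s^\perp$. Since $\tr\Delta=0$,
\[
 \tr(P^\perp\Delta P^\perp)=-\tr(P\Delta P).
\]
Write $X=P^\perp\Delta P^\perp$, so that $X+\rho_s^\perp\succeq0$. Then
\[
 \|X\|_{\mathrm{tr}}\le \tr(X+\rho_s^\perp)+\|\rho_s^\perp\|_{\mathrm{tr}}
 =-\tr(P\Delta P)+2\tau_s
 \le\|P\Delta P\|_{\mathrm{tr}}+2\tau_s .
\]

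\textbf{Step 2: the remaining blocks have rank at most $2s$.} The matrix $D=P\Delta P+P\Delta P^\perp+P^\perp\Delta P$ has rank at most $2s$. Hence
\[
 \|\Delta\|_{\mathrm{tr}}\le\|D\|_{\mathrm{tr}}+\|X\|_{\mathrm{tr}}\le2\|D\|_{\mathrm{tr}}+2\tau_s .
\]
Here we use $\|P\Delta P\|_{\mathrm{tr}}\le\|D\|_{\mathrm{tr}}$, which holds because compression by $P$ is trace norm contractive and $PDP=P\Delta P$. Next, $\|D\|_{\mathrm{tr}}\le\sqrt{2s}\,\|D\|_\F\le\sqrt{2s}\,\|\Delta\|_\F$, since $D$ consists of blocks of $\Delta$ in the orthogonal decomposition. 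This gives $\|\Delta\|_{\mathrm{tr}}\le2\tau_s+2\sqrt{2s}\,\|\Delta\|_\F$. A sharper block bookkeeping may be needed to reach exactly the constant $4$. Write $x=\|\Delta\|_{\mathrm{tr}}$ and $\kappa=2\sqrt{2s}$, so that $x\le2\tau_s+\kappa\|\Delta\|_\F$.

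\textbf{Step 3: combine with the hypothesis.} The hypothesis gives $\|\Delta\|_\F\le\sqrt{(ex+\varepsilon)/a}\le\sqrt{ex/a}+\sqrt{\varepsilon/a}$. Hence
\[
 x\le2\tau_s+\kappa\sqrt{ex/a}+\kappa\sqrt{\varepsilon/a}.
\]
Apply $\kappa\sqrt{ex/a}\le x/2+\kappa^2e/(2a)$ and absorb the $x/2$ term. This yields
\[
 x\le4\tau_s+\kappa^2e/a+2\kappa\sqrt{\varepsilon/a}=4\tau_s+8se/a+4\sqrt{2}\sqrt{s\varepsilon/a}.
\]
These constants are off by a factor of about $2$. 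The main obstacle is tightening Step 2 to $x\le2\tau_s+2\sqrt{s}\,\|\Delta\|_\F$. With $\kappa=2\sqrt s$, Step 3 gives exactly $4\tau_s+4se/a+4\sqrt{s\varepsilon/a}$.

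\textbf{Tightening Step 2.} I would first try this argument. From Step 1, $x\le\|D\|_{\mathrm{tr}}+\|P\Delta P\|_{\mathrm{tr}}+2\tau_s$. Bound $\|P\Delta P\|_{\mathrm{tr}}\le\sqrt s\,\|P\Delta P\|_\F$. Write $P\Delta P^\perp+P^\perp\Delta P=PW+WP$ with $W=\Delta P^\perp$ restricted appropriately. Its trace norm is at most $2\|P\Delta P^\perp\|_{\mathrm{tr}}\le2\sqrt s\,\|P\Delta P^\perp\|_\F$, because $P\Delta P^\perp$ has rank at most $s$. Then
\[
 2\sqrt s\,\|P\Delta P\|_\F+2\sqrt s\,\|P\Delta P^\perp\|_\F\le2\sqrt s\sqrt2\sqrt{\|P\Delta P\|_\F^2+\|P\Delta P^\perp\|_\F^2}.
\]
Using $\|\Delta\|_\F^2\ge\|P\Delta P\|_\F^2+2\|P\Delta P^\perp\|_\F^2$ and Cauchy--Schwarz with weights $(1,\sqrt2)$ gives coefficient $2\sqrt s\sqrt{1+1/2}$. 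This is better but still not $2\sqrt s$. If the exact constant $4$ resists, I would use the alternative split $\Delta=(\sigma-\rho_s)-\rho_s^\perp$ together with a direct rank $2s$ argument on $P(\sigma-\rho_s)$-type pieces, and accept that the stated constants are where the care lies.

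\textbf{The particular case \eqref{eq:trace-conversion}.} Apply \eqref{eq:curvature} to the traceless Hermitian $\Delta$ and use trace/operator duality: $a\|\Delta\|_\F^2\le\langle\Delta,\cL(\Delta)\rangle_\F\le\|\cL(\Delta)\|_{\op}\|\Delta\|_{\mathrm{tr}}$. This is the hypothesis with $e=\|\cL(\Delta)\|_{\op}$ and $\varepsilon=0$.
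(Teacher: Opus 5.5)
Your Steps 1 and 3, and your treatment of \eqref{eq:trace-conversion}, are correct. However, the argument does not prove the lemma as stated.

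Step 2 only gives $\|\Delta\|_{\mathrm{tr}}\le2\tau_s(\rho)+2\sqrt{2s}\,\|\Delta\|_\F$, hence $\|\Delta\|_{\mathrm{tr}}\le4\tau_s(\rho)+8se/a+4\sqrt{2s\varepsilon/a}$. Your weighted Cauchy--Schwarz refinement only lowers the coefficient to $\sqrt{6s}$, which gives $6se/a$ and $2\sqrt6\sqrt{s\varepsilon/a}$. The explicit constants are part of the statement: they feed into $B_0$, $B_{\rm MW}$, and the numerical coefficients of Proposition~\ref{prop:explicit-raw-constants}. So this is a genuine gap.

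Your fallback of a ``direct rank $2s$ argument'' cannot close it. Any step that bounds the trace norm of a rank $2s$ piece by Cauchy--Schwarz pays $\sqrt{2s}$. Your block bookkeeping also counts $P\Delta P$ twice, once inside $D$ and once through the cone bound on $X$.

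The missing idea is that only a rank $s$ \emph{negative part} needs Cauchy--Schwarz; the positive part is fixed by the trace constraint. Write $\rho=\rho_{\le s}+\rho_{>s}$, and let $E_-$ be the negative spectral projection of $\sigma-\rho_{\le s}$.

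For nonzero $v$ in the range of $E_-$, $\langle v|\rho_{\le s}|v\rangle=\langle v|\sigma|v\rangle-\langle v|(\sigma-\rho_{\le s})|v\rangle>0$. So that range meets $\ker\rho_{\le s}$ only at zero, and $\rank E_-\le s$.

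For any Hermitian $M$ with negative spectral projection $E$, one has $\|M\|_{\mathrm{tr}}=\tr M-2\tr(EM)$. Combine this with $\tr(\sigma-\rho_{\le s})=\tau_s(\rho)$, $\sigma-\rho_{\le s}=\Delta+\rho_{>s}$, $\tr(E_-\rho_{>s})\ge0$, and $\|E_-\|_\F\le\sqrt s$:
\[
\|\Delta\|_{\mathrm{tr}}\le\|\sigma-\rho_{\le s}\|_{\mathrm{tr}}+\tau_s(\rho)
=2\tau_s(\rho)-2\tr\bigl(E_-(\sigma-\rho_{\le s})\bigr)
\le2\tau_s(\rho)-2\tr(E_-\Delta)
\le2\tau_s(\rho)+2\sqrt s\,\|\Delta\|_\F .
\]
With $\kappa=2\sqrt s$, your Step 3 absorption then yields exactly $4\tau_s(\rho)+4se/a+4\sqrt{s\varepsilon/a}$.
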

\begin{proof}
Let {\color{black}$\rho={\rho_{\leq s}}+{\rho_{>s}}$}, where {\color{black}${\rho_{\leq s}}$} is the leading rank $s$ spectral part and
{\color{black}${\rho_{>s}}\succeq0$} has trace $\tau_s(\rho)$. Let $E_-$ be the negative spectral
projection of {\color{black}$\sigma-{\rho_{\leq s}}$}. Its rank is at most $s$: on its range the
compression of {\color{black}${\rho_{\leq s}}=\sigma-(\sigma-{\rho_{\leq s}})$} is positive definite, so that
range cannot have dimension greater than {\color{black}$\rank {\rho_{\leq s}}$}.
The negative part identity and $\tr\Delta=0$ give
{\color{black}\begin{align*}
 \|\Delta\|_{\mathrm{tr}}&\leq\|\sigma-{\rho_{\leq s}}\|_{\mathrm{tr}}+\tr {\rho_{>s}}\\
  &=2\tau_s(\rho)-2\tr(E_-(\sigma-{\rho_{\leq s}}))\\
  &\leq2\tau_s(\rho)-2\tr(E_-\Delta)\\
  &\leq2\tau_s(\rho)+2\sqrt{s}\|\Delta\|_\F.
\end{align*}}%
The third line uses {\color{black}$\tr(E_-{\rho_{>s}})\geq0$}; the last uses
$\|E_-\|_\F=\sqrt{\rank E_-}\leq\sqrt{s}$.
Thus
\begin{equation}
 \|\Delta\|_{\mathrm{tr}}
 \leq2\tau_s(\rho)+2\sqrt{s}\|\Delta\|_\F.
 \label{eq:cone-bound}
\end{equation}
Substitute the assumed Frobenius bound into \eqref{eq:cone-bound} and use
the subadditivity of the square root. The term involving trace loss is
absorbed by
\[
 2\sqrt{\frac{se}{a}\|\Delta\|_{\mathrm{tr}}}
 \leq\frac12\|\Delta\|_{\mathrm{tr}}+\frac{2se}{a}.
\]
For \eqref{eq:trace-conversion}, curvature and Schatten duality give
\[
 \begin{aligned}
 a\|\Delta\|_\F^2
 &\leq\langle\Delta,\cL(\Delta)\rangle_\F\\
 &\leq\|\cL(\Delta)\|_{\op}\|\Delta\|_{\mathrm{tr}}.
 \end{aligned}
\]
Apply the first conclusion with $e=\|\cL(\Delta)\|_{\op}$ and
$\varepsilon=0$.
\end{proof}

\subsection{Primitive ensemble conditions for covariance control}
Fourth overlap moments and covariance conditioning are accessible through
standard Clifford design methods
\cite{Zhu2017,SchusterHaferkampHuang2025}.
Our circuit contribution is a uniform eighth overlap moment bound for
the periodic ensemble, including arbitrary entangled test directions.
This is a fourth projector moment, obtained by controlling the periodic
contraction of known local four copy formulas
\cite{GrossNezamiWalter2021,HelsenWalter2023}.
Under \eqref{eq:headline-block-condition},
Theorem~\ref{thm:ck-fourth-projector}, \eqref{eq:ck-u4-34}, verifies
\eqref{eq:primitive-fourth} with {\color{black}${\momconst{}}=34$};
Proposition~\ref{prop:periodic-low-order}, \eqref{eq:forward-covariance-conditioning},
verifies \eqref{eq:primitive-conditioning}.
The following proposition converts these inputs, via Born reweighting
and the sixth moment bound \eqref{eq:ck-born-sixth}, into the logarithm free
covariance estimate in operator norm \eqref{eq:forward-covariance-rate} using
\cite[Theorem~2, arXiv v3]{AbdallaZhivotovskiy2023}.
Its $O(\sqrt{d/T})$ expected operator error for $T\geq c_0d$ supplies
the stochastic term in both the OMD and weighted PLS trace loss oracles.

\begin{proposition}[From reference moments to Born covariance]
\label{prop:primitive-covariance}
Let $B=|\psi\rangle\!\langle\psi|$ have reference law $\nu$ as in
\eqref{eq:rank-one-statistical-law}. Suppose constants {\color{black}${\momconst{}},{\covlo{}},{\covhi{}}>0$} satisfy
{\color{black}\begin{align}
 \sup_{\langle u|u\rangle=1}\E_\nu|\langle u|\psi\rangle|^8
 &\leq\frac{{\momconst{}}}{d(d+1)(d+2)(d+3)},
 \label{eq:primitive-fourth}\\
 {\covlo{}}I\preceq(d+1)\E_\rho B&\preceq {\covhi{}}I\quad(\rho\in\cD_d).
 \label{eq:primitive-conditioning}
\end{align}}%
There are constants {\color{black}$c_0,{\covrate{}}$}, depending only on {\color{black}$({\momconst{}},{\covlo{}},{\covhi{}})$}, such that
independent Born observations satisfy
{\color{black}\begin{equation}
 T\geq c_0d\quad\Longrightarrow\quad
 \E_\rho\|\overline Y_T-\cL(\rho)\|_{\op}\leq {\covrate{}}\sqrt{d/T}.
 \label{eq:forward-covariance-rate}
\end{equation}}
\end{proposition}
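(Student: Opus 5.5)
The plan is to recast $\overline Y_T-\cL(\rho)$ as a sample covariance deviation and apply \cite[Theorem~2, arXiv v3]{AbdallaZhivotovskiy2023}, which gives a logarithm free bound under a directional $L_p$--$L_2$ norm equivalence with $p>4$. Write $B_t=|\psi_t\rangle\!\langle\psi_t|$ and set $X_t=\sqrt{d+1}\,\psi_t$. Then $Y_t+I=X_tX_t^*$ and
\[
 \overline Y_T-\cL(\rho)=\frac1T\sum_t X_tX_t^*-\Sigma,\qquad \Sigma=(d+1)\E_\rho B.
\]
By \eqref{eq:primitive-conditioning} we have $\covlo{} I\preceq\Sigma\preceq\covhi{} I$ and $\tr\Sigma=d+1$. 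So $\|\Sigma\|_{\op}\leq\covhi{}$, and the effective rank satisfies $\tr\Sigma/\|\Sigma\|_{\op}\leq(d+1)/\covlo{}\lesssim d$. A bound of the form $\|\Sigma\|_{\op}\sqrt{r(\Sigma)/T}$ therefore becomes $\covrate{}\sqrt{d/T}$ once $T\geq c_0 d$.

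The key step is a Born sixth moment bound. For a unit vector $u$,
\[
 \E_\rho|\langle u,X\rangle|^6=(d+1)^3\,\E_\nu\bigl[d\,\langle\psi|\rho|\psi\rangle\,|\langle u|\psi\rangle|^6\bigr].
\]
The expression is linear in $\rho$, so it suffices to take $\rho=|v\rangle\!\langle v|$ pure. Hölder with exponents $(4,4/3)$ bounds $\E_\nu|\langle v|\psi\rangle|^2|\langle u|\psi\rangle|^6$ by $(\E_\nu|\langle v|\psi\rangle|^8)^{1/4}(\E_\nu|\langle u|\psi\rangle|^8)^{3/4}$. This is at most $\momconst{}/[d(d+1)(d+2)(d+3)]$ by \eqref{eq:primitive-fourth}, uniformly over entangled $u,v$. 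Hence $\E_\rho|\langle u,X\rangle|^6\leq d(d+1)^3\momconst{}/[d(d+1)(d+2)(d+3)]\leq\momconst{}$. This is the sixth moment condition in the style of \eqref{eq:ck-born-sixth}.

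Since $\E_\rho|\langle u,X\rangle|^2=\langle u|\Sigma|u\rangle\geq\covlo{}$, we obtain the norm equivalence
\[
 (\E|\langle u,X\rangle|^6)^{1/6}\leq\momconst{}^{1/6}\covlo{}^{-1/2}\,(\E|\langle u,X\rangle|^2)^{1/2},
\]
which is the $p=6>4$ hypothesis.

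Next, the theorem is stated for real vectors, so I would realify. Map $X\mapsto(\operatorname{Re}X,\operatorname{Im}X)\in\R^{2d}$ and note that the Hermitian quadratic form $\langle u|M|u\rangle$ is a real quadratic form on $\R^{2d}$. The complex operator norm of $\frac1T\sum X_tX_t^*-\Sigma$ is then bounded by the real operator norm of the realified deviation; one can symmetrize with the complex structure $\realj{}$ if needed. Real directional moments follow from the complex ones, since $\langle w,\tilde X\rangle=\operatorname{Re}\langle u,X\rangle$ for the corresponding complex $u$. The real covariance still has effective rank $O(d)$ and eigenvalues within constant factors of $\covlo{},\covhi{}$, except that it may be degenerate in directions where the phase is trivial. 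I expect this to be handled by using the complex lower bound on $|\langle u,X\rangle|^2$ averaged over the global phase of $u$. If that degeneracy instead breaks norm equivalence, I would fall back on the complex version of the bound, or add an independent random global phase to $\psi_t$, which leaves $B_t$ unchanged.

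The main obstacle is this hypothesis matching: the exact form of the theorem (expectation versus high probability, and the required $p$), real versus complex vectors, and ensuring that the constants depend only on $(\momconst{},\covlo{},\covhi{})$. If the theorem is a high probability statement, integrate its tail to obtain the expectation bound. The failure event can be handled crudely, since $\|\overline Y_T-\cL(\rho)\|_{\op}\leq 2(d+1)$ deterministically; this requires the failure probability to decay like $e^{-cT/\ldots}$ or $(d/T)^{\Omega(1)}$ fast enough. Otherwise, use the theorem's truncated estimator comparison together with the sixth moment to control the tail contribution.
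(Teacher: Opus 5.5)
Your outline matches the paper's proof. You rewrite $\overline Y_T-\cL(\rho)$ as a sample covariance deviation for $X=\sqrt{d+1}\,\psi$. You get the Born sixth moment from H\"older with exponents $(4,4/3)$, which is exactly \eqref{eq:ck-born-sixth}. You realify with an independent global phase, invoke \cite[Theorem~2]{AbdallaZhivotovskiy2023} at $p=6$, and pay a factor two via $\mathcal R(XX^*)=ZZ^{\mathsf T}+\realj ZZ^{\mathsf T}\realj^{\mathsf T}$. The phase is not optional, however. Without it the real covariance can be degenerate or badly conditioned, for instance when the $\psi$ have real coordinates. Make it the primary device. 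With the four phases $0,\pi/2,\pi,3\pi/2$ one gets $\E Z=0$, $\Sigma_{\R}=\tfrac12\mathcal R(\Sigma_{\C})$, and a real $L_6$ to $L_2$ ratio at most $4\momconst/\covlo^3$.

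The gap is applying the theorem to the \emph{untruncated} average $\overline Y_T$; your fallbacks, as stated, would not close it. The cited theorem bounds a radially truncated sample covariance, with cutoff $(T\tr\Sigma_{\R}\|\Sigma_{\R}\|_{\op})^{1/4}$.
\begin{itemize}
\item Controlling the discarded tail with sixth moments does not work in general. Directional moment bounds alone lose roughly a factor of the effective rank there, which is why the theorem truncates.
\item Integrating a high probability bound with the crude estimate $\|\overline Y_T-\cL(\rho)\|_{\op}\lesssim d$ requires failure probability $O((dT)^{-1/2})$. A failure probability of order $(d/T)^{\Omega(1)}$ does not give this near $T=c_0d$.
\end{itemize}
The missing observation is that $B$ is a rank one projector, so $\|Z\|^2=\langle X|X\rangle=d+1=\tr\Sigma_{\R}$ \emph{deterministically}. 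The truncation is therefore inactive whenever $(d+1)^2\leq T(d+1)\|\Sigma_{\R}\|_{\op}$. This holds for $T\geq2d$, because $\|\Sigma_{\R}\|_{\op}\geq\tr\Sigma_{\R}/(2d)$. The truncated estimator then coincides with the raw average, and the theorem's expectation bound applies directly for $T\geq\max\{2d,c_{\rm BY}r_{\rm eff}\}$. No tail integration is needed, and $c_0,\covrate$ depend only on $(\momconst,\covlo,\covhi)$.
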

\begin{proof}
Independently choose {\color{black}${\phi_{\mathrm{ph}}}$} uniformly from the four phases
$0,\pi/2,\pi,3\pi/2$, and define the complex random vector
{\color{black}\begin{equation}
 \begin{aligned}
 |X\rangle&=\sqrt{d+1}\,e^{i{\phi_{\mathrm{ph}}}}|\psi\rangle,\\
 \Sigma_{\C}&=\E_\rho|X\rangle\!\langle X|=I+\cL(\rho).
 \end{aligned}
 \label{eq:ck-phase-lift}
\end{equation}}%
Then $\E|X\rangle=0$ and $\langle X|X\rangle=d+1$. For a unit direction $|v\rangle$,
convexity in a spectral decomposition of $\rho$ and
\eqref{eq:primitive-fourth} bound both $\E_\nu\tr(\rho B)^4$ and
$\E_\nu|\langle v|\psi\rangle|^8$ by
{\color{black}${\momconst{}}/[d(d+1)(d+2)(d+3)]$}. The Born change of measure is
\begin{equation}
 \E_\rho f(B)=d\E_\nu[\tr(\rho B)f(B)].
 \label{eq:ck-born-change-measure}
\end{equation}
Apply H\"older's inequality with exponent $4$ to $\tr(\rho B)$ and
exponent $4/3$ to $|\langle v|\psi\rangle|^6$.
The fourth and eighth moment bounds above then give
{\color{black}\begin{equation}
 \begin{aligned}
 &\E_\rho|\langle v|X\rangle|^6
 =d(d+1)^3\E_\nu[\tr(\rho B)|\langle v|\psi\rangle|^6]\\
 &\quad\leq d(d+1)^3
       \bigl(\E_\nu\tr(\rho B)^4\bigr)^{1/4}\\[-2pt]
 &\qquad\quad\times
       \bigl(\E_\nu|\langle v|\psi\rangle|^8\bigr)^{3/4}\\
 &\quad\leq {\momconst{}}\frac{(d+1)^2}{(d+2)(d+3)}\leq {\momconst{}}.
 \end{aligned}
 \label{eq:ck-born-sixth}
\end{equation}}%
The phases leave every $|X\rangle\!\langle X|=(d+1)B$ unchanged; they are a proof device.

Let $Z\in\R^{2d}$ stack the real and imaginary coordinates of $|X\rangle$,
and write {\color{black}$\mathcal{R}$} for matrix realification. Averaging over the four phases yields
{\color{black}\begin{equation}
 \Sigma_{\R}=\E ZZ^{\mathsf T}=\tfrac12\mathcal{R}(\Sigma_{\C}).
 \label{eq:forward-real-covariance}
\end{equation}}%
For every real direction $w$,
{\color{black}\begin{align}
 \E|\langle w,Z\rangle|^2&\geq({\covlo{}}/2)\|w\|^2,\\
 \E|\langle w,Z\rangle|^6&\leq({\momconst{}}/2)\|w\|^6.
 \label{eq:forward-real-sixth}
\end{align}}%
For the second inequality, the phase average of the sixth power of the real
part of {\color{black}${a_{\mathrm{re}}}+i{a_{\mathrm{im}}}$} is {\color{black}$({a_{\mathrm{re}}}^6+{a_{\mathrm{im}}}^6)/2\leq|{a_{\mathrm{re}}}+i{a_{\mathrm{im}}}|^6/2$}.
Thus the sixth power $L_6$ to $L_2$ ratio is at most {\color{black}$4{\momconst{}}/{\covlo{}}^3$};
interpolation gives the required intermediate moments. Also
{\color{black}$\|\Sigma_{\R}\|_{\op}\leq {\covhi{}}/2$} and its effective rank
$r_{\rm eff}=\tr(\Sigma_{\R})/\|\Sigma_{\R}\|_{\op}$ is at most $2d$.

For independent real observations $Z_1,\ldots,Z_T$,
\cite[Theorem~2, arXiv v3]{AbdallaZhivotovskiy2023}, applied at $p=6$,
and the truncation check below give
\begin{equation}
 \begin{aligned}
 &\E_\rho\left\|\frac1T\sum_{t=1}^T Z_tZ_t^{\mathsf T}
                  -\Sigma_{\R}\right\|_{\op}
 \leq C_{\rm BY}\|\Sigma_{\R}\|_{\op}\sqrt{r_{\rm eff}/T},\\
 &\qquad T\geq\max\{2d,c_{\rm BY}r_{\rm eff}\}.
 \end{aligned}
 \label{eq:forward-real-sample-covariance}
\end{equation}
The constants depend only on the moment ratio. The theorem's radial
cutoff is $(T\tr(\Sigma_{\R})\|\Sigma_{\R}\|_{\op})^{1/4}$.
Here $\tr(\Sigma_{\R})=\|Z\|^2=d+1$, and truncation is inactive since
\begin{equation}
 (d+1)^2\leq T(d+1)\|\Sigma_{\R}\|_{\op}\quad(T\geq2d),
 \label{eq:forward-truncation-inactive}
\end{equation}
The last inequality follows because the operator norm is at least the trace
divided by $2d$. Finally, let {\color{black}${\realj{}}$} be the real linear map on $\R^{2d}$
corresponding to multiplication by $i$ under the chosen realification. Then
{\color{black}\begin{equation}
 \mathcal{R}(|X\rangle\!\langle X|)=ZZ^{\mathsf T}+{\realj{}}ZZ^{\mathsf T}{\realj{}}^{\mathsf T}.
 \label{eq:forward-realification-identity}
\end{equation}}%
Realification preserves operator norm, so the complex error is at most
twice the real covariance error. Taking
$c_0=2\max\{1,c_{\rm BY}\}$ and {\color{black}${\covrate{}}=\sqrt2 {\covhi{}} C_{\rm BY}$} proves the claim.
\end{proof}

\subsection{OMD: from the fitting certificate to the oracle}
\label{sec:oracle-proof}
Feasibility of the true state and \eqref{eq:candidate-two-program} give
\begin{equation}
 \begin{aligned}
 \|\cL(\rhoCF-\rho)\|_{\op}
 &\leq f_T(\rhoCF)+f_T(\rho)\\
 &\leq2\|\overline Y_T-\cL(\rho)\|_{\op}+\delta_{\rm alg}.
 \end{aligned}
 \label{eq:forward-fit-triangle}
\end{equation}
Apply Lemma~\ref{lem:trace-conversion} with the periodic curvature bound
$a\geq a_*$ from Proposition~\ref{prop:periodic-low-order}:
\begin{equation}
 \begin{aligned}
 \|\rhoCF-\rho\|_{\mathrm{tr}}
 &\leq4\tau_s(\rho)\\
 &\quad+\frac{8}{a_*}s\|\overline Y_T-\cL(\rho)\|_{\op}
       +\frac{4}{a_*}s\delta_{\rm alg}.
 \end{aligned}
 \label{eq:polynomial-forward-oracle}
\end{equation}
The rank index $s$ appears only in the analysis.

\begin{proof}[Proof of Theorem~\ref{thm:statewise-oracle}]
Proposition~\ref{prop:periodic-low-order} and Theorem~\ref{thm:ck-fourth-projector} verify the
primitive conditions of Proposition~\ref{prop:primitive-covariance}.
Appendix~\ref{app:quantitative} specializes the resulting fixed norm
covariance bound to
$\E\|\overline Y_T-\cL(\rho)\|_{\op}\leq \covstar{}\sqrt{d/T}$ for $T\geq2d$.
Taking expectations in \eqref{eq:polynomial-forward-oracle}, with
$\delta_{\rm alg}\leq\sqrt{d/T}$, gives
\eqref{eq:polynomial-expected-oracle} and $B_0=(8\covstar{}+4)/a_*$.
The trace distance between two feasible states gives the loss cap.
\end{proof}

More generally, the same argument under \eqref{eq:curvature},
\eqref{eq:primitive-fourth}, and \eqref{eq:primitive-conditioning}
gives the statewise bound
{\color{black}\[
 \E_\rho\|\rhoCF-\rho\|_{\mathrm{tr}}
 \leq C_{\mathrm{OMD}}\min\bigl\{1,\inf_{1\leq s\leq d}
 [\tau_s(\rho)+s\sqrt{d/T}]\bigr\},
\]}%
with {\color{black}$C_{\mathrm{OMD}}$} depending only on the primitive parameters. For the finitely
many sample sizes below the covariance theorem's threshold, use the loss
cap and enlarge {\color{black}$C_{\mathrm{OMD}}$}.

\subsection{Weighted PLS: from the first order certificate to the oracle}
\label{sec:mw-oracle-proof}
Only the certificate changes for the quadratic fit.

\begin{proof}[Proof of Proposition~\ref{prop:mw-oracle}]
Write $\Delta=\sigma-\rho$. Since $\rho$ is feasible, the gap definition yields
{\color{black}\begin{equation}
 \begin{aligned}
 a\|\Delta\|_\F^2&\leq\langle\Delta,\cL(\Delta)\rangle_\F\\
 &=\langle\overline Y_T-\cL(\rho),\Delta\rangle_\F
      +\langle\nabla \mathcal Q_T(\sigma),\Delta\rangle_\F\\
 &\leq\|\overline Y_T-\cL(\rho)\|_{\op}\|\Delta\|_{\mathrm{tr}}
      +\varepsilon.
 \end{aligned}
 \label{eq:mw-basic-inequality}
\end{equation}}%
For the exact minimizer, first order optimality gives {\color{black}$g_{\mathrm{MW}}=0$}.
Apply Lemma~\ref{lem:trace-conversion} to \eqref{eq:mw-basic-inequality}
to obtain \eqref{eq:mw-oracle}.
The periodic bounds give $a\geq a_*$ and
$\E\|\overline Y_T-\cL(\rho)\|_{\op}\leq \covstar{}\sqrt{d/T}$ for $T\geq2d$,
so one can take
$B_{\rm MW}=4\covstar{}/a_*+4/\sqrt{a_*}$, using $\sqrt{s}\leq s$.
\end{proof}

For a feasible $\sigma$, the first order gap certificate
{\color{black}$g_{\mathrm{MW}}(\sigma)\leq\varepsilon$} and the primal objective gap certificate
{\color{black}$\mathcal Q_T(\sigma)-\mathcal Q_T^*\leq\varepsilon$} both yield a weighted PLS trace loss oracle.
The former gives \eqref{eq:mw-oracle}. For the latter, expansion at $\rho$ gives
\[
 \frac a2\|\sigma-\rho\|_\F^2
 \leq\|\overline Y_T-\cL(\rho)\|_{\op}\|\sigma-\rho\|_{\mathrm{tr}}
     +\varepsilon.
\]
The same lemma, with $a/2$ in place of $a$, yields
\begin{equation}
 \begin{aligned}
 \|\sigma-\rho\|_{\mathrm{tr}}
 &\leq4\tau_s(\rho)+\frac{8s}{a}\|\overline Y_T-\cL(\rho)\|_{\op}\\
 &\quad+4\sqrt{\frac{2s\varepsilon}{a}}.
 \end{aligned}
 \label{eq:mw-primal-oracle}
\end{equation}
Under \eqref{eq:headline-block-condition}, either certificate with
$\varepsilon\leq\min\{1,d/T\}$ therefore yields the expected oracle
\eqref{eq:mw-expected-oracle} for $T\geq2d$, with possibly different
universal constants.

\subsection{Spectral optimization and the minimax theorem}
\label{sec:minimax-proof}
\begin{proof}[Proof of Theorem~\ref{thm:headline-ck}]
Apply the optimization lemma in Appendix~\ref{sec:upper-rate-optimization}
with $\eta=(B_0/4)\sqrt{d/T}$ to Theorem~\ref{thm:statewise-oracle}; the loss
cap covers $T<2d$. The same substitution with $B_{\rm MW}$ treats
Proposition~\ref{prop:mw-oracle}. The block condition
\eqref{eq:headline-block-condition} gives $d\geq65536$, so
Theorem~\ref{thm:lower} supplies the common lower bound. Admissibility of the
periodic experiment gives the intervening infima. Neither experiment nor
estimator uses the class parameters.
\end{proof}

One may replace $L$ by $\min\{L,d^{\alpha-1}\}$ without changing either
the class or the rate. In the normalized range, the transitions are
\begin{equation}
 T_0=dL^{2/(\alpha-1)},\qquad T_1=d^{2\alpha+1}/L^2.
 \label{eq:exact-change-points}
\end{equation}
The constant, spectral, and full dimensional branches apply below $T_0$,
between $T_0$ and $T_1$, and above $T_1$. These are changes in the bound;
the OMD estimator does not switch at either point.

\section{Periodic Circuit Estimates}
\label{sec:fourth-moment}
We verify the ensemble conditions used in Section~\ref{sec:oracle}.
The reference law samples a fresh circuit and a uniformly random outcome;
the observed law is its Born reweighting
\eqref{eq:rank-one-statistical-law}. The principal circuit estimate controls
the fourth projector moment for every test state. Low order design bounds
then supply curvature, covariance conditioning, and scalar variance.

\subsection{Uniform fourth moment of projectors}
\label{sec:fourth-moment-proof}
We average the random projector itself before choosing a test state.
Lemma~\ref{lem:ck-four-moment-expansion} expresses its fourth tensor moment
as a finite sum of deterministic operators. Lemma~\ref{lem:ck-four-copy-boundary}
bounds their contractions against any test state; the theorem then reduces
to a scalar estimate of the coefficients.

\begin{finalrevision}
\emph{Replica and site ordering.}
For an ordered block $D=\{j_1,\ldots,j_\ell\}$, write
$\mathbf b_D^{(a)}=(b_{j_1}^{(a)},\ldots,b_{j_\ell}^{(a)})$ for the
bits in replica $a$, and $\mathbf b_{j_i}=(b_{j_i}^{(1)},\ldots,b_{j_i}^{(4)})$
for the four bits at site $j_i$. The unitary permutation
\[
 J_D\left(\bigotimes_{a=1}^4|\mathbf b_D^{(a)}\rangle\right)
 =\bigotimes_{i=1}^{\ell}|\mathbf b_{j_i}\rangle,
 \qquad \oparg_{\rm site}=J_D\oparg_{\rm rep}J_D^*
\]
changes replica order to site order and extends linearly to entangled
states. The local twirl, $B^{\otimes4}$, and $|u\rangle^{\otimes4}$
are initially written in replica order. The sector product
\eqref{eq:ck-sector-operator}, Gram identity \eqref{eq:ck-local-gram}, and
overlap factorization \eqref{eq:ck-four-overlap} use site order.
Supplement Section~\ref*{supp:replica-order} gives coordinate examples.
\end{finalrevision}

\begin{revision}
For a block of at least three qubits, the local Clifford four copy
commutant has a basis of 30 sector operators
\cite[Theorem~4.3]{GrossNezamiWalter2021}.
Their labels form the set $\Sigma_4$ of binary subspaces
$\Xi\subseteq\mathbb F_2^4\oplus\mathbb F_2^4$ defined in
Appendix~\ref{app:local-algebra}. For $\Xi\in\Sigma_4$, let $R_D(\Xi)$
denote the corresponding four replica sector operator on the physical
block $D$ of $\ell$ qubits. In column/site order, it is defined by
{\color{black}\begin{equation}
 R_D(\Xi)=\left(\sum_{({\mathbf b},{\mathbf c})\in\Xi}|{\mathbf b}\rangle\langle {\mathbf c}|\right)^{\otimes\ell},
 \label{eq:ck-sector-operator}
\end{equation}}%
where {\color{black}$|{\mathbf b}\rangle=|{\mathbf b}_1{\mathbf b}_2{\mathbf b}_3{\mathbf b}_4\rangle$} for
{\color{black}${\mathbf b}=({\mathbf b}_1,{\mathbf b}_2,{\mathbf b}_3,{\mathbf b}_4)\in\mathbb F_2^4$}. Its row/replica form is
$R_D^{\rm rep}(\Xi)=J_D^*R_D(\Xi)J_D$. We use site order by default
in both the article and Supplement. To display each change of coordinates,
write
{\color{black}\[
 {\oparg{}}^{[4]}:=J_D{\oparg{}}^{\otimes4}J_D^*,\qquad
 |v^{[4]}\rangle:=J_D|v\rangle^{\otimes4}
\]}%
for an operator {\color{black}${\oparg{}}$} or vector $v$ on $D$. The underlying support is
specified by {\color{black}${\oparg{}}$} or $v$; on the whole system use $J_n$.
Thus $B^{[4]}=J_nB^{\otimes4}J_n^*$, and contractions with sector
operators use $|u^{[4]}\rangle$. Tensor products over named physical supports
are embedded in this canonical site order, with the fixed permutation
for regrouping disjoint supports understood. Write {\color{black}$R_{\dloc{}}(\Xi)$} when only the block
dimension {\color{black}${\dloc{}}=2^\ell$} matters. The Gram matrix and, for {\color{black}${\dloc{}}>4$}, its inverse are
{\color{black}\begin{equation}
 \begin{aligned}
 G_{\dloc{}}(\Xi,\Xi')&:=\tr\bigl(R_{\dloc{}}(\Xi)^*R_{\dloc{}}(\Xi')\bigr)
          ={\dloc{}}^{\dim(\Xi\cap\Xi')},\\
 W_{\dloc{}}&:=G_{\dloc{}}^{-1}\qquad({\dloc{}}>4).
 \end{aligned}
 \label{eq:ck-local-gram}
\end{equation}}%
The inverse gives the coefficients of the local Clifford averaging
formula \cite[Eq.~(18), arXiv~v2]{HelsenWalter2023},
written out in \eqref{eq:ck-local-average}; its explicit entries are in
\eqref{eq:ck-four-copy-inverse}. Its signed row sum, independent of $\Xi$, is
{\color{black}\begin{equation}
\begin{aligned}
 \sgnrow_{\dloc}&:=\sum_{\Xi'\in\Sigma_4}W_{\dloc}(\Xi,\Xi')\\
 &=\frac1{\dloc(\dloc+1)(\dloc+2)(\dloc+4)}.
 \end{aligned}
 \label{eq:known-signed-row}
\end{equation}}%
This is the scalar coefficient appearing in the fourth moment expansion
of a Clifford randomized stabilizer state in
\cite[Eq.~(19), arXiv~v2]{HelsenWalter2023}.

For even $k\geq4$ dividing $n$, use the blocks $A_j,A'_j$ and layers $U_1,U_2$ from
\eqref{eq:periodic-supports}--\eqref{eq:periodic-unitary}, and put
\begin{equation}
 q=2^{k/2},\qquad Q=q^2=2^k,\qquad m=n/k,\qquad d=q^{2m}.
 \label{eq:ck-four-dimensions}
\end{equation}
Here $G_Q$ is {\color{black}$G_{\dloc{}}$} in \eqref{eq:ck-local-gram} evaluated at
{\color{black}${\dloc{}}=Q=2^k$} for a full $k$ qubit block; $G_q$ is the same family at
{\color{black}${\dloc{}}=q=2^{k/2}$} for a $k/2$ qubit half block. Both are $30\times30$
matrices indexed by $\Sigma_4$; their subscripts specify Hilbert space
dimensions. For $m\geq2$, the half blocks are the two disjoint overlaps
$A'_j\cap A_j$ and $A'_j\cap A_{j+1}$, with
$|A'_j\cap A_j|=|A'_j\cap A_{j+1}|=k/2$.
Each overlap contributes a $G_q$ factor. The local Clifford average acts
on all of $A'_j$, so it uses $W_Q=G_Q^{-1}$; $G_q$ is not inverted.
For $m=1$, use the two halves of the single block, as in the proof below.
Let {\color{black}${\xi},{\upsilon},\zeta\in\Sigma_4^m$}:
{\color{black}${\xi}_j$} labels the sector on $A_j$ after averaging $U_2$;
{\color{black}${\upsilon}_j$} is the summation label for testing the next local average
on $A'_j$; and $\zeta_j$ is its output sector.
All block indices run from $0$ to $m-1$ and are read modulo $m$.
For each $\zeta$, define the real coefficient
{\color{black}\begin{equation}
 \sectorcoef{}_\zeta:=\sgnrow{}_Q^m\sum_{{\xi},{\upsilon}\in\Sigma_4^m}
 \prod_{j=0}^{m-1}
 G_q({\upsilon}_j,{\xi}_j)G_q({\upsilon}_j,{\xi}_{j+1})
 W_Q({\upsilon}_j,\zeta_j).
 \label{eq:ck-four-expansion-coefficients}
\end{equation}}%
These coefficients depend only on the circuit geometry and block size;
no positivity is assumed. For the disjoint shifted partition
$A'=(A'_0,\ldots,A'_{m-1})$ and a unit test vector $|u\rangle$, define
\begin{equation}
 c_{u,A'}(\zeta):=\langle u^{[4]}|
 \left(\bigotimes_j R_{A'_j}(\zeta_j)\right)|u^{[4]}\rangle.
 \label{eq:ck-four-boundary-coefficient}
\end{equation}
This definition also applies to any disjoint block partition.
Lemma~\ref{lem:ck-four-copy-boundary} below proves
$|c_{u,A'}(\zeta)|\leq1$ for every unit $|u\rangle$, including states
entangled across the blocks.

\begin{lemma}[Exact fourth moment expansion]
\label{lem:ck-four-moment-expansion}
For the periodic reference law $\nu$ with even $k\geq4$ dividing $n$,
let $b=(b_0,\ldots,b_{n-1})\in\{0,1\}^n$ be uniform and independent of $U$.
Write $|b\rangle=|b_0\cdots b_{n-1}\rangle$ for the corresponding
computational basis state, and set
$|\psi\rangle=U^*|b\rangle=U_1^*U_2^*|b\rangle$.
With $B=|\psi\rangle\langle\psi|$, the following operator identity holds:
{\color{black}\begin{equation}
 \E_\nu B^{[4]}
 =\sum_{\zeta\in\Sigma_4^m}\sectorcoef{}_\zeta
   \bigotimes_{j=0}^{m-1}R_{A'_j}(\zeta_j).
 \label{eq:ck-four-operator-expansion}
\end{equation}}%
Consequently, for every unit vector $|u\rangle$,
{\color{black}\begin{equation}
 \E_\nu|\langle u|\psi\rangle|^8
 =\sum_{\zeta\in\Sigma_4^m}\sectorcoef{}_\zeta c_{u,A'}(\zeta).
 \label{eq:ck-four-exact-expansion}
\end{equation}}
\end{lemma}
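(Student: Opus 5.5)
The plan is to compute $\E_\nu B^{[4]}$ by averaging in three stages: first the uniform outcome $b$, then the second layer $U_2$, then the first layer $U_1$. Each stage uses the local Clifford four copy averaging formula (the Helsen--Walter formula \eqref{eq:ck-local-average} with coefficients $W_{\dloc}=G_{\dloc}^{-1}$). Since $|\psi\rangle=U_1^*U_2^*|b\rangle$, we have $B^{[4]}=U_1^{*[4]}U_2^{*[4]}(|b\rangle\langle b|)^{[4]}U_2^{[4]}U_1^{[4]}$. The layers factor as tensor products over disjoint supports, and so does $|b\rangle$. We therefore first average over $b$ and the $m$ independent gates $\cliffgate^{(2)}_{A_j}$ blockwise.

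For a single $k$ qubit block, the uniform-$b$, uniform-Clifford average of $|b\rangle\langle b|^{[4]}$ is the fourth moment of a random stabilizer state. By \cite[Eq.~(19)]{HelsenWalter2023} and \eqref{eq:known-signed-row}, it equals $\sgnrow_Q\sum_{\xi\in\Sigma_4}R_{A_j}(\xi)$. To verify this, apply the twirl formula $\E\,\cliffgate^{*[4]}X\cliffgate^{[4]}=\sum_{\xi,\xi'}W_Q(\xi,\xi')\tr(R(\xi')^*X)R(\xi)$ to $X=|b\rangle\langle b|^{[4]}$. We have $\tr(R(\xi')^*|b\rangle\langle b|^{[4]})=1$, because the diagonal vector $(b,b,b,b)$ pairs with itself in every $\Xi'$: each $\Xi'$ contains the diagonal $(\mathbf 1,\mathbf 1)$ direction and is built from self-dual codes containing the all-ones vector. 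Hence the coefficient is the signed row sum, independent of $b$. After the $U_2$ stage,
\[
\E_{b,U_2}(U_2^*|b\rangle\langle b|U_2)^{[4]}=\sgnrow_Q^m\sum_{\xi\in\Sigma_4^m}\bigotimes_j R_{A_j}(\xi_j).
\]

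Next, conjugate by $U_1$ and twirl each shifted block $A'_j$ with $W_Q$. The output is $\sum_{\zeta_j}\sum_{\upsilon_j}W_Q(\zeta_j,\upsilon_j)\tr\bigl(R_{A'_j}(\upsilon_j)^*\,\cdot\,\bigr)R_{A'_j}(\zeta_j)$, applied jointly on all shifted blocks. The key factorization step follows from the definition \eqref{eq:ck-sector-operator}: each $R_D(\Xi)$ is a tensor power over single sites. This gives $R_{A_j}(\xi_j)=R_{A'_{j-1}\cap A_j}(\xi_j)\otimes R_{A'_j\cap A_j}(\xi_j)$ in site order. Regrouping by shifted blocks, $A'_j$ receives $R_{A'_j\cap A_j}(\xi_j)\otimes R_{A'_j\cap A_{j+1}}(\xi_{j+1})$. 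The trace against $R_{A'_j}(\upsilon_j)^*=R_{A'_j\cap A_j}(\upsilon_j)^*\otimes R_{A'_j\cap A_{j+1}}(\upsilon_j)^*$ therefore factors into $G_q(\upsilon_j,\xi_j)G_q(\upsilon_j,\xi_{j+1})$ via \eqref{eq:ck-local-gram}. Collecting the sums gives exactly the coefficient \eqref{eq:ck-four-expansion-coefficients}. The symmetry $W_Q(\zeta,\upsilon)=W_Q(\upsilon,\zeta)$ holds because $G_Q$ is real symmetric, so \eqref{eq:ck-four-operator-expansion} follows.

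The case $m=1$ is handled the same way. The single block $A_0$ splits into the halves $A'_0\cap A_0$ and $A'_0\setminus A_0$, and both $G_q$ factors carry $\xi_0$ (indices mod $1$), consistent with the formula. The identity \eqref{eq:ck-four-exact-expansion} then follows by taking $\langle u^{[4]}|\cdot|u^{[4]}\rangle$, since $|\langle u|\psi\rangle|^8=\langle u^{[4]}|B^{[4]}|u^{[4]}\rangle$. Here $J_n$ is unitary and $(|u\rangle\langle u|)^{\otimes 4}$ is conjugated consistently.

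The main obstacle is the bookkeeping rather than the analysis. Two points need care. First, the twirl formula must be stated in the correct site or replica coordinates with the right adjoint and conjugation convention; since $U_1^*$ acts on the left, the trace pairs with $R^*$. Second, the regrouping permutation between $\{A_j\}$-site order and $\{A'_j\}$-site order must act trivially on the sitewise tensor-power structure. I would check the first point carefully on a single block with $\ell\ge3$, where the 30 operators are linearly independent and $G_Q$ is invertible for $Q>4$. Then I would argue the second point abstractly from the product form of \eqref{eq:ck-sector-operator}.
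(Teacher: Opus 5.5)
Your proposal is correct and follows the paper's proof essentially step by step. You average $U_2$ blockwise using the unit diagonal entries $\langle b_{A_j}^{[4]}|R_Q(\Xi)^*|b_{A_j}^{[4]}\rangle=1$ and the signed row sum $\sgnrow_Q$, twirl each shifted block with $W_Q$ after factoring the overlap into $G_q(\upsilon_j,\xi_j)G_q(\upsilon_j,\xi_{j+1})$, and contract with $|u^{[4]}\rangle$. One small slip in the $m=1$ case: there $A'_0=A_0$ as sets, so $A'_0\setminus A_0$ is empty, and the two halves should be $\{k/2,\ldots,k-1\}$ and $\{0,\ldots,k/2-1\}$ (the paper instead checks this case via $G_q(\Xi,\Xi')^2=G_Q(\Xi,\Xi')$ and $W_QG_Q=I$, which give $\sectorcoef_\zeta=\sgnrow_Q$).
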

\begin{proof}
Condition on $b$ and average the independent $U_2$ gates first.
Let {\color{black}${\cliffgate{}}_j$} be the local $k$ qubit matrix of the gate {\color{black}$\cliffgate{}^{(2)}_{A_j}$}
in \eqref{eq:periodic-unitary}. After grouping by the disjoint supports
$A_j$, we have
{\color{black}\[
 U_2=\bigotimes_{j=0}^{m-1}{\cliffgate{}}_j,\qquad
 |b\rangle=\bigotimes_{j=0}^{m-1}|b_{A_j}\rangle,
\]}%
where the {\color{black}${\cliffgate{}}_j$} are independent uniform Clifford gates and
$|b_{A_j}\rangle$ is the computational basis state on $A_j$.
For each block,
\[
 \langle b_{A_j}^{[4]}|R_Q(\Xi)^*|b_{A_j}^{[4]}\rangle=1
 \qquad\text{for every }\Xi\in\Sigma_4.
\]
Indeed, copying the same basis string four times gives, at each physical
site, the column state $|0000\rangle$ or $|1111\rangle$.
Both $(\mathbf0,\mathbf0)$ and $(\mathbf1,\mathbf1)$ belong to every
$\Xi$: the former by linearity, and the latter by definition.
Thus \eqref{eq:ck-sector-operator} gives a diagonal entry of one at
each site, and their product is one (also for the adjoint).
This is a statement about these diagonal entries; $R_Q(\Xi)$ need not be
unitary or fix $|b_{A_j}^{[4]}\rangle$.
For each block, the fourth moment formula in
\cite[Eq.~(19), arXiv~v2]{HelsenWalter2023} therefore gives
{\color{black}\[
 \E_{{\cliffgate{}}_j}({\cliffgate{}}_j^*|b_{A_j}\rangle\langle b_{A_j}|{\cliffgate{}}_j)^{[4]}
 =\sgnrow{}_Q\sum_{{\xi}_j\in\Sigma_4}R_{A_j}({\xi}_j).
\]}%
Taking four copies and regrouping the tensor factors by physical block,
independence yields
{\color{black}\[
 \begin{aligned}
 &\E_{U_2}(U_2^*|b\rangle\langle b|U_2)^{[4]}\\
 &\quad=\bigotimes_{j=0}^{m-1}
   \E_{{\cliffgate{}}_j}({\cliffgate{}}_j^*|b_{A_j}\rangle\langle b_{A_j}|{\cliffgate{}}_j)^{[4]}\\
 &\quad=\bigotimes_{j=0}^{m-1}
   \left(\sgnrow{}_Q\sum_{{\xi}_j\in\Sigma_4}R_{A_j}({\xi}_j)\right).
 \end{aligned}
\]}%
Expanding this tensor product of sums and writing
{\color{black}${\xi}=({\xi}_0,\ldots,{\xi}_{m-1})\in\Sigma_4^m$} gives
{\color{black}\begin{equation}
 \E_{U_2}(U_2^*|b\rangle\langle b|U_2)^{[4]}
 =\sgnrow{}_Q^m\sum_{{\xi}\in\Sigma_4^m}\bigotimes_j R_{A_j}({\xi}_j).
 \label{eq:ck-four-first-layer}
\end{equation}}%
Next average the independent $U_1$ gates using the inverse Gram formula
\eqref{eq:ck-local-average}. Fix {\color{black}${\xi}$} in \eqref{eq:ck-four-first-layer}
and write {\color{black}$\sectorprod{}_\xi:=\bigotimes_{r=0}^{m-1}R_{A_r}({\xi}_r)$}.
In column/site order, \eqref{eq:ck-sector-operator} lets us regroup its
site factors by the shifted blocks:
{\color{black}$\sectorprod{}_\xi=\bigotimes_j(\sectorprod{}_\xi|_{A'_j})$}, where {\color{black}$\sectorprod{}_\xi|_{A'_j}$} denotes
the tensor factor on four replicas of $A'_j$.
For $m\geq2$, fix $j$. The block $A'_j$ consists of the two disjoint
halves $A'_j\cap A_j$ and $A'_j\cap A_{j+1}$, each of Hilbert space
dimension $q$.
The input has sector {\color{black}${\xi}_j$} on the first half and {\color{black}${\xi}_{j+1}$} on
the second, whereas the testing operator {\color{black}$R_{A'_j}({\upsilon}_j)$} has
the same sector {\color{black}${\upsilon}_j$} on both. Thus, by trace multiplicativity
and the Gram definition \eqref{eq:ck-local-gram},
{\color{black}\[
 \begin{aligned}
 \sectorprod{}_\xi|_{A'_j}
 &=R_{A'_j\cap A_j}({\xi}_j)
   \otimes R_{A'_j\cap A_{j+1}}({\xi}_{j+1}),\\
 R_{A'_j}({\upsilon}_j)
 &=R_{A'_j\cap A_j}({\upsilon}_j)
   \otimes R_{A'_j\cap A_{j+1}}({\upsilon}_j),\\[2pt]
 \tr\bigl[R_{A'_j}({\upsilon}_j)^*\sectorprod{}_\xi|_{A'_j}\bigr]
 &=\tr\bigl[R_{A'_j\cap A_j}({\upsilon}_j)^*
             R_{A'_j\cap A_j}({\xi}_j)\bigr]\\
 &{}\times\tr\bigl[R_{A'_j\cap A_{j+1}}({\upsilon}_j)^*
                      R_{A'_j\cap A_{j+1}}({\xi}_{j+1})\bigr]\\
 &=G_q({\upsilon}_j,{\xi}_j)G_q({\upsilon}_j,{\xi}_{j+1}).
 \end{aligned}
\]}%
This is the local Hilbert--Schmidt overlap required by
\eqref{eq:ck-local-average} on $A'_j$. Multiplying these overlaps
over the disjoint shifted blocks, using trace multiplicativity and then
substituting the local identity above, gives
{\color{black}\begin{equation}
 \begin{aligned}
 &\tr\left[\left(\bigotimes_jR_{A'_j}({\upsilon}_j)\right)^*
                 \left(\bigotimes_jR_{A_j}({\xi}_j)\right)\right]\\
 &\hspace{8mm}=\tr\left[\bigotimes_j
   \left(R_{A'_j}({\upsilon}_j)^*(\sectorprod{}_\xi|_{A'_j})\right)\right]\\
 &\hspace{8mm}=\prod_j\tr\bigl[R_{A'_j}({\upsilon}_j)^*
                                     (\sectorprod{}_\xi|_{A'_j})\bigr]\\
 &\hspace{8mm}=\prod_jG_q({\upsilon}_j,{\xi}_j)G_q({\upsilon}_j,{\xi}_{j+1}).
 \end{aligned}
 \label{eq:ck-four-overlap}
\end{equation}}%
For each $j$, \eqref{eq:ck-local-average} multiplies the local overlap
by {\color{black}$W_Q({\upsilon}_j,\zeta_j)$} and outputs $R_{A'_j}(\zeta_j)$.
Independence of the $U_1$ gates makes these local coefficients multiply.
For a fixed output tuple $\zeta$, summing over {\color{black}${\upsilon}$}, then over the
input tuple {\color{black}${\xi}$} with the factor {\color{black}$\sgnrow{}_Q^m$} from
\eqref{eq:ck-four-first-layer}, gives exactly {\color{black}$\sectorcoef{}_\zeta$} in
\eqref{eq:ck-four-expansion-coefficients}. This proves
\eqref{eq:ck-four-operator-expansion}. The result is independent of $b$,
so averaging the uniform outcome introduces no additional factor.
Contracting \eqref{eq:ck-four-operator-expansion} against $\langle u^{[4]}|$ and
$|u^{[4]}\rangle$ proves \eqref{eq:ck-four-exact-expansion}.
For $m=1$, the two halves belong to the same block:
$G_q(\Xi,\Xi')^2=G_Q(\Xi,\Xi')$ and $W_QG_Q=I$ reduce
\eqref{eq:ck-four-expansion-coefficients} to {\color{black}$\sectorcoef{}_\zeta=\sgnrow{}_Q$}.
Hence the identity also includes the single block case.
\end{proof}

The local four copy formulas \cite{GrossNezamiWalter2021,HelsenWalter2023}
and the general framework for Clifford replica contractions
\cite{MagniEtAl2025} are established. The contribution here is the explicit
periodic expansion \eqref{eq:ck-four-operator-expansion} with the signed
coefficients \eqref{eq:ck-four-expansion-coefficients}, which separates
circuit geometry from the test state. The coefficients {\color{black}$\sectorcoef{}_\zeta$} depend
only on the block size and periodic geometry; all dependence on $|u\rangle$ is in
$c_{u,A'}(\zeta)$. This separation, combined with
Lemma~\ref{lem:ck-four-copy-boundary} and the coefficient estimates below,
yields the uniform bound of Theorem~\ref{thm:ck-fourth-projector} for
arbitrary test states, including states entangled across the blocks.

\begin{lemma}[Uniform sector contraction bound]
\label{lem:ck-four-copy-boundary}
For any disjoint block partition $A'$, any $\zeta\in\Sigma_4^m$, and
any unit vector $|u\rangle$,
\begin{equation}
 |c_{u,A'}(\zeta)|\leq1.
 \label{eq:ck-four-boundary}
\end{equation}
\end{lemma}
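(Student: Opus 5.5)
The plan is to exploit the explicit structure of the 30 local sector operators in \eqref{eq:ck-sector-operator}. They are not contractions in operator norm: a defect sector on one qubit has norm $2$. So the bound cannot come from $\|\bigotimes_jR_{A'_j}(\zeta_j)\|_{\op}\le1$. It must use the fact that the test vector is a fourth tensor power, $|u^{[4]}\rangle=J_n|u\rangle^{\otimes4}$, including when $u$ is entangled across blocks.

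\emph{Step 1 (classification of $\Sigma_4$).} Using Appendix~\ref{app:local-algebra} and \cite[Theorem~4.3]{GrossNezamiWalter2021}, I will split $\Sigma_4$ into two types.
\begin{itemize}
\item The $24$ graph subspaces $\{(\pi\mathbf c,\mathbf c)\}$ of replica permutations $\pi\in S_4$. For these, the single-site operator is the permutation of the four replica bits, so $R_D(\Xi)$ is the unitary permuting the four copies of $D$.
\item The remaining $6$ defect subspaces, built from $N=\langle\mathbf1\rangle$. Each has the form $\{(\pi(\mathbf x+\mathbf a),\mathbf x+\mathbf a'):\mathbf x\in N^\perp,\ \mathbf a,\mathbf a'\in N\}$ for a permutation $\pi$.
\end{itemize}
For a defect sector I will verify, site by site, the identity
\[
 \sum_{(\mathbf b,\mathbf c)\in\Xi}|\mathbf b\rangle\langle\mathbf c|
 =V_\pi\cdot\tfrac12\sum_{P\in\{I,\paulix,\pauliz,\paulix\pauliz\}}P^{\otimes4}.
\]
Here $V_\pi$ is the unitary replica permutation. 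The sign of $\paulix\pauliz$ cancels in the fourth tensor power. Taking the $\ell$-fold tensor product then gives
\[
 R_D(\Xi)=V_\pi^{(D)}\Pi_D,\qquad
 \Pi_D=2^{-\ell}\sum_{P}P^{\otimes4},
\]
with the sum over the $4^\ell$ Hermitian-up-to-sign Pauli strings on $D$, taken in site order. Two facts about $\Pi_D$ are needed: it is positive semidefinite (its per-site factor squares to twice itself, so its eigenvalues are $0$ and $2$), and it commutes with every replica permutation, because each $P^{\otimes4}$ is permutation symmetric.

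\emph{Step 2 (global factorization).} Fix $\zeta$ and let $S$ be the union of the blocks $A'_j$ on which $\zeta_j$ is a defect sector. Because the operators act on disjoint supports, the global operator factorizes as
\[
 \bigotimes_jR_{A'_j}(\zeta_j)=V\,\Pi_S .
\]
Here $V=\bigotimes_jV^{(A'_j)}_{\pi_j}$ is unitary, and $\Pi_S=2^{-|S|}\sum_{P\text{ on }S}P^{\otimes4}$, tensored with the identity off $S$. The operator $\Pi_S$ is positive semidefinite and commutes with $V$, since $V$ permutes replicas blockwise and every $P^{\otimes4}$ is invariant under such permutations. Consequently
\[
 c_{u,A'}(\zeta)=\langle \Pi_S^{1/2}u^{[4]}|\,V\,|\Pi_S^{1/2}u^{[4]}\rangle .
\]
By Cauchy--Schwarz and unitarity of $V$, this gives $|c_{u,A'}(\zeta)|\le\langle u^{[4]}|\Pi_S|u^{[4]}\rangle$. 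If no block is a defect sector, then $\Pi_S=I$ and the bound is immediate.

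\emph{Step 3 (Pauli moment bound).} Let $\rho_S=\tr_{S^c}|u\rangle\langle u|$. Then
\[
 \langle u^{[4]}|\Pi_S|u^{[4]}\rangle
 =2^{-|S|}\sum_P\tr(\rho_SP)^4 .
\]
Each coefficient satisfies $|\tr(\rho_SP)|\le1$, so the fourth powers are bounded by the squares. Pauli orthogonality then gives
\[
 2^{-|S|}\sum_P\tr(\rho_SP)^4\le 2^{-|S|}\sum_P\tr(\rho_SP)^2=\tr\rho_S^2\le1 .
\]
This holds for arbitrary, possibly entangled $u$, because entanglement enters only through the purity of the reduced state $\rho_S$.

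\emph{Main obstacle.} The delicate step is Step 1: checking, in the paper's site-order convention and with the exact set $\Sigma_4$ of Appendix~\ref{app:local-algebra}, that every one of the 30 subspaces is either a permutation graph or a permutation composed with the $N$-defect. In particular, the anti-identity-type elements of the stochastic orthogonal group must not contribute further non-permutation unitaries for four replicas. The normalization also has to be tracked, so that the per-qubit defect operator equals $V_\pi\Pi$ exactly rather than a multiple of it. I would first verify this by writing out the per-site $16\times16$ $0/1$ matrices for the defect subspace and checking the identity directly. If extra unitary sectors did appear, Step 2 still goes through provided each such sector is a unitary commuting with $\Pi_S$; I would then check that commutation, or else handle that sector by a direct Cauchy--Schwarz argument.
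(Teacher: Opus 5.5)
Your proposal is correct and follows the paper's proof essentially step for step: you factor $\bigotimes_jR_{A'_j}(\zeta_j)=\Omega_{D_{\mathrm{ex}}}V$ into a commuting positive Pauli sum and a unitary replica permutation, reduce by Cauchy--Schwarz to $\langle u^{[4]}|\Omega_{D_{\mathrm{ex}}}|u^{[4]}\rangle$, and bound this by $\tr\rho_{D_{\mathrm{ex}}}^2\le1$ via the fourth-power Pauli Parseval estimate. The differences are cosmetic: your $\Pi_S$ is the paper's $\Omega_{D_{\mathrm{ex}}}=d_{D_{\mathrm{ex}}}\Pi_{D_{\mathrm{ex}}}$, which is not itself a projector (as you note from its eigenvalues), and you propose to verify the $24+6$ classification and the per-site identity $V_\pi\cdot\frac12\sum_PP^{\otimes4}$ directly, whereas the paper cites Helsen--Walter for them.
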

\begin{proof}
Let $D=\{j_1,\ldots,j_\ell\}\subseteq\{0,\ldots,n-1\}$ be a set of
$\ell$ physical qubits, as in the replica/site notation above.
Its Hilbert space dimension is $d_D=2^\ell=2^{|D|}$. Write
$\Omega_D=d_D^{-1}\sum_{P\in\mathcal P(D)}P^{[4]}$,
where $\mathcal P(D)$ contains one Hermitian representative of each
Pauli label on $D$ \cite[Eq.~(16), arXiv~v2]{HelsenWalter2023}.
The classification into 24 replica permutation sectors and six exceptional
sectors \cite[pp.~7--8, arXiv~v2]{HelsenWalter2023} gives, on each block,
{\color{black}$R_{A'_j}(\zeta_j)=\Omega_{A'_j}^{\sectorflag{}_j}V_j$},
with {\color{black}$\sectorflag{}_j\in\{0,1\}$},
where $V_j$ is the site representation of a replica permutation and {\color{black}$\sectorflag{}_j=1$} exactly for
an exceptional sector. When {\color{black}$\sectorflag{}_j=0$}, $\Omega_{A'_j}^{0}=I$,
so $R_{A'_j}(\zeta_j)=V_j$. Set {\color{black}${D_{\mathrm{ex}}}=\bigcup_{\sectorflag{}_j=1}A'_j$} and
$V=\bigotimes_jV_j$. The blocks are disjoint, so their Pauli sums factor:
{\color{black}\[
 \bigotimes_jR_{A'_j}(\zeta_j)
 =\left(\bigotimes_{\sectorflag{}_j=1}\Omega_{A'_j}\right)V
 =\Omega_{D_{\mathrm{ex}}}V.
\]}%
Operators on {\color{black}${D_{\mathrm{ex}}}$} are extended by identity outside {\color{black}${D_{\mathrm{ex}}}$}; if
{\color{black}${D_{\mathrm{ex}}}=\varnothing$}, then {\color{black}$\Omega_{D_{\mathrm{ex}}}=I$} and {\color{black}$d_{D_{\mathrm{ex}}}=1$}. Moreover,
{\color{black}\begin{equation}
 \Omega_{D_{\mathrm{ex}}}=\frac1{d_{D_{\mathrm{ex}}}}\sum_{P\in\mathcal P({D_{\mathrm{ex}}})}P^{[4]}
          =d_{D_{\mathrm{ex}}}\Pi_{D_{\mathrm{ex}}},
 \label{eq:ck-pauli-code-projector}
\end{equation}}%
where {\color{black}$\Pi_{D_{\mathrm{ex}}}$} is an orthogonal projector. Each $V_j$ permutes the four
identical copies of a Pauli operator, so $V$ is unitary and commutes with
{\color{black}$\Omega_{D_{\mathrm{ex}}}$} and {\color{black}$\Pi_{D_{\mathrm{ex}}}$}.
For {\color{black}$|{\utensor{}}\rangle=|u^{[4]}\rangle$}, we have
{\color{black}$c_{u,A'}(\zeta)=\langle {\utensor{}}|\Omega_{D_{\mathrm{ex}}}V|{\utensor{}}\rangle$}.
Cauchy--Schwarz inside the range of {\color{black}$\Pi_{D_{\mathrm{ex}}}$} gives
{\color{black}$|\langle {\utensor{}}|\Omega_{D_{\mathrm{ex}}}V|{\utensor{}}\rangle|
\leq\langle {\utensor{}}|\Omega_{D_{\mathrm{ex}}}|{\utensor{}}\rangle$}.
Let {\color{black}$\rho_{D_{\mathrm{ex}}}$} be the reduced state of $|u\rangle\!\langle u|$ on {\color{black}${D_{\mathrm{ex}}}$}.
Since each $P$ is Hermitian and unitary, {\color{black}$\tr(P\rho_{D_{\mathrm{ex}}})\in[-1,1]$}.
Pauli Parseval therefore yields
{\color{black}\[
 \begin{aligned}
 \langle {\utensor{}}|\Omega_{D_{\mathrm{ex}}}|{\utensor{}}\rangle
 &=\frac1{d_{D_{\mathrm{ex}}}}\sum_P\{\tr(P\rho_{D_{\mathrm{ex}}})\}^4\\
 &\leq\frac1{d_{D_{\mathrm{ex}}}}\sum_P\{\tr(P\rho_{D_{\mathrm{ex}}})\}^2\\
 &=\tr(\rho_{D_{\mathrm{ex}}}^2)\leq1.
 \end{aligned}
\]}%
This proves \eqref{eq:ck-four-boundary} without a product state assumption on
$|u\rangle$.
\end{proof}

The moment bound now follows from the two lemmas and the scalar estimates
in Appendix~\ref{app:local-algebra}.
\begin{theorem}[Uniform fourth projector moment]
\label{thm:ck-fourth-projector}
Under the reference law of the periodic ensemble, let $|\psi\rangle=U^*|b\rangle$,
where $|b\rangle$ is a computational basis state.
Then condition \eqref{eq:headline-block-condition} implies
\begin{equation}
 \sup_{\langle u|u\rangle=1}\E_\nu|\langle u|\psi\rangle|^8
 \leq\frac{34}{d(d+1)(d+2)(d+3)}.
 \label{eq:ck-u4-34}
\end{equation}
The supremum includes arbitrary entangled test states.
\end{theorem}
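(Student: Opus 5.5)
By Lemma~\ref{lem:ck-four-moment-expansion} and Lemma~\ref{lem:ck-four-copy-boundary}, for every unit $|u\rangle$,
\[
 \E_\nu|\langle u|\psi\rangle|^8\leq\sum_{\zeta\in\Sigma_4^m}|\sectorcoef{}_\zeta|,
\]
so the whole proof reduces to a scalar estimate of the $\ell_1$ norm of the coefficients \eqref{eq:ck-four-expansion-coefficients}. The target is $\sum_\zeta|\sectorcoef{}_\zeta|\leq 34/[d(d+1)(d+2)(d+3)]$. Since $d=Q^m$ and the condition gives $Q\geq2^{16}$, it suffices to show $\sum_\zeta|\sectorcoef{}_\zeta|\leq 33\,d^{-4}$. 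The ratio $d^4/[d(d+1)(d+2)(d+3)]$ is then within $1+O(1/d)$ of one, and the slack absorbs it.

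I would write $\sectorcoef{}$ as a transfer-matrix product around the ring. Define the $30\times30$ matrices $(M_{\upsilon})(\xi,\xi')=G_q(\upsilon,\xi)G_q(\upsilon,\xi')$. Summing the absolute value over $\zeta$ and bounding $|\sum_\upsilon\cdots|\leq\sum_\upsilon|\cdots|$ with the $W_Q$ row $\ell_1$ norms $\absrow_Q:=\max_\upsilon\sum_\zeta|W_Q(\upsilon,\zeta)|$ is too crude: the signed row sum $\sgnrow{}_Q\approx Q^{-4}$ reflects cancellation, while $\absrow_Q\approx Q^{-4}\cdot(1+O(1/Q))$ only if the off-diagonal inverse entries are $O(Q^{-5})$. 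Using the explicit entries \eqref{eq:ck-four-copy-inverse}, I expect $W_Q=Q^{-4}(I+E)$ with $\|E\|=O(1/Q)$ entrywise. Similarly $G_q(\upsilon,\xi)=q^{\dim(\upsilon\cap\xi)}$ has diagonal $q^4$ and off-diagonal at most $q^3$, so $G_q=q^4(I+F)$ with entries of $F$ at most $1/q$.

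Then $\sum_\zeta|\sectorcoef{}_\zeta|\leq\sgnrow{}_Q^m\, q^{8m}Q^{-4m}\,\tr\bigl(\transfermat^m\bigr)$-type expression. Here $\transfermat(\xi,\xi')=\sum_\upsilon(I+|F|)(\upsilon,\xi)(I+|F|)(\upsilon,\xi')(1+\absrow')$ with $\absrow'$ collecting the normalized $W_Q$ row sums. Because $q^{8m}=Q^{4m}$, the prefactor is $\sgnrow{}_Q^m\approx Q^{-4m}=d^{-4}$. The leading part of $\transfermat$ is the diagonal, contributing $30$ to the trace, which explains a constant near $30$ and the target $34$. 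I would bound the off-diagonal perturbation: each entry is $O(1/q)$, so $\|\transfermat-I\|\leq 30\cdot c/q$ in a suitable norm, and $\tr\transfermat^m\leq30(1+c'/q)^m\leq30e^{c'm/q}$. With $m=n/k$, the condition $92n/\ln2\leq kq$ gives exactly $m/q\leq\ln2/92$, making the exponential factor at most $2^{c'/92}$, close to $1$. Thus the expected bound is $30\cdot(1+\text{small})\cdot(1+O(1/Q))\leq33$. The corrections from $\sgnrow{}_Q$ versus $Q^{-4}$ and from $d(d+1)(d+2)(d+3)$ versus $d^4$ are of order $m/Q$ and $1/d$, which are negligible.

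The main obstacle is controlling the $1/q$ off-diagonal mass precisely enough. Each sector $\Xi$ has a bounded number of neighbours with $\dim(\Xi\cap\Xi')=3$, and I would count these from the explicit structure of $\Sigma_4$ in Appendix~\ref{app:local-algebra}, so that the per-step growth constant is explicit (the $92$ presumably encodes it). I would also verify that the $W_Q$ off-diagonal entries really are $O(Q^{-5})$ rather than $O(q^{-1}Q^{-4})$, using the Gram-inverse formula. The $m=1$ case is immediate from $\sectorcoef{}_\zeta=\sgnrow{}_Q$ times the $30$ sectors.
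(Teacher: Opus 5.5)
Your route is the paper's route. You apply the triangle inequality against Lemma~\ref{lem:ck-four-copy-boundary}, sum $|W_Q|$ over $\zeta$, and sum the nonnegative $G_q$ products over $\upsilon$ to obtain $\tr(G_q^{2m})$. You then bound that trace by $30$ times the $2m$-th power of a row sum.

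Two points in your sketch are wrong as written and need fixing.

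\textbf{The absolute row sum bound is not too crude.} Bounding by the absolute row sum of $W_Q$ is exactly what the paper does. By \eqref{eq:ck-four-copy-inverse}, the off-diagonal entries of $W_Q$ are $O(Q^{-5})$. The absolute row sum is exactly $\absrow_Q=1/[Q(Q-1)(Q-2)(Q-4)]$. The row profile $8,14,7,1$ gives $G_q$ the exact constant row sum $g_q=q(q+1)(q+2)(q+4)$. Combining these with $\sgnrow_Q=1/[Q(Q+1)(Q+2)(Q+4)]$ yields the bound $30d^{-4}\mathcal A(q)^m$ with $\ln\mathcal A(q)\leq 14/q$.

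\textbf{The norm step needs the exact count.} Your bound ``$\|\transfermat-I\|\leq 30c/q$ in a suitable norm'' only works if the norm is the maximal row sum and you use this exact count. The per-step growth constant is $14=2\times 7$: each of the two $G_q$ factors contributes its seven $h=3$ neighbours. Bounding instead by $29$ times the largest off-diagonal entry, about $2/q$, gives a growth constant near $60$. The final constant is then well above $34$, so that version fails.

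\textbf{Your intermediate target of $33$ is not reached.} Even with the exact count, the worst admissible case $m/q=\ln 2/92$ gives $30\cdot 2^{7/46}\approx 33.34$. So $\sum_\zeta|\sectorcoef_\zeta|\leq 33d^{-4}$ does not follow from this argument. Use the full available slack instead, as the paper does: check that $30\cdot 2^{7/46}(1+d^{-1})(1+2d^{-1})(1+3d^{-1})<34$ for $d\geq 65536$.

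With these corrections your argument coincides with the paper's proof.
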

\begin{proof}
The block condition \eqref{eq:headline-block-condition} gives $q\geq256$,
so Lemma~\ref{lem:ck-four-moment-expansion} applies.
By Lemma~\ref{lem:ck-four-copy-data}, specifically
\eqref{eq:ck-four-absolute-row}, the absolute row sum of $W_Q$ is
{\color{black}\[
 \absrow{}_Q:=\sum_{\Xi'\in\Sigma_4}|W_Q(\Xi,\Xi')|
     =\frac1{Q(Q-1)(Q-2)(Q-4)}.
\]}%
Apply \eqref{eq:ck-four-exact-expansion} and
Lemma~\ref{lem:ck-four-copy-boundary} to obtain
{\color{black}\[
 \E_\nu|\langle u|\psi\rangle|^8
 \leq\sum_{\zeta\in\Sigma_4^m}|\sectorcoef{}_\zeta|.
\]}%
For fixed {\color{black}${\upsilon}$}, the output labels $\zeta_j$ are summed independently:
{\color{black}\[
 \sum_{\zeta\in\Sigma_4^m}\prod_j|W_Q({\upsilon}_j,\zeta_j)|
 =\prod_j\sum_{\zeta_j\in\Sigma_4}|W_Q({\upsilon}_j,\zeta_j)|=\absrow{}_Q^m.
\]}%
Next, symmetry of $G_q$ gives, for each fixed $j$,
{\color{black}\[
 \sum_{{\upsilon}_j\in\Sigma_4}
 G_q({\upsilon}_j,{\xi}_j)G_q({\upsilon}_j,{\xi}_{j+1})
 =(G_q^2)_{{\xi}_j,{\xi}_{j+1}},
\]}%
where $G_q^2$ is the ordinary matrix square. Taking absolute values in
\eqref{eq:ck-four-expansion-coefficients} and using {\color{black}$\sgnrow{}_Q>0$} and $G_q\geq0$
entrywise, sum first over $\zeta$ and then independently over the
{\color{black}${\upsilon}_j$}. The two identities above yield
{\color{black}\begin{equation}
 \begin{aligned}
 \sum_{\zeta\in\Sigma_4^m}|\sectorcoef{}_\zeta|
 &\leq \sgnrow{}_Q^m\absrow{}_Q^m\sum_{{\xi}\in\Sigma_4^m}\sum_{{\upsilon}\in\Sigma_4^m}\\[-2pt]
 &\qquad{}\prod_{j=0}^{m-1}
 G_q({\upsilon}_j,{\xi}_j)G_q({\upsilon}_j,{\xi}_{j+1})\\
 &=\sgnrow{}_Q^m\absrow{}_Q^m\sum_{{\xi}\in\Sigma_4^m}
       \prod_{j=0}^{m-1}(G_q^2)_{{\xi}_j,{\xi}_{j+1}}.
 \end{aligned}
 \label{eq:ck-four-absolute-expansion}
\end{equation}}%
The cyclic sum is
{\color{black}\begin{equation}
 \sum_{{\xi}\in\Sigma_4^m}
 \prod_{j=0}^{m-1}(G_q^2)_{{\xi}_j,{\xi}_{j+1}}
 =\tr(G_q^{2m}).
 \label{eq:ck-periodic-fourth-cycle}
\end{equation}}%
The matrix $G_q$ has nonnegative entries and constant row sum
\cite[pp.~7--8, arXiv~v2]{HelsenWalter2023}: for every $\Xi\in\Sigma_4$,
\[
 \sum_{\Xi'\in\Sigma_4}G_q(\Xi,\Xi')=g_q
 :=q(q+1)(q+2)(q+4).
\]
Hence every row of $G_q^{2m}$ sums to
$g_q^{2m}$. Bounding each diagonal entry by its row sum gives
$\tr(G_q^{2m})\leq30g_q^{2m}$, and substitution gives
\begin{equation}
 \begin{aligned}
 \E_\nu|\langle u|\psi\rangle|^8
 &\leq30d^{-4}\mathcal A(q)^m,\\[-2pt]
 \mathcal A(q)&=\prod_{c\in\{1,2,4\}}
 \frac{(1+c/q)^2}{1-c^2/q^4}.
 \end{aligned}
 \label{eq:ck-four-preconstant}
\end{equation}
Appendix~\ref{app:local-algebra} proves
\begin{equation}
 \ln\mathcal A(q)\leq14/q\qquad(q\geq8).
 \label{eq:ck-four-local-inflation}
\end{equation}
Condition \eqref{eq:headline-block-condition} also gives
$m/q\leq\ln2/92$ and $d\geq65536$, hence
$\mathcal A(q)^m\leq2^{7/46}$. Finally,
\[
 30\,2^{7/46}(1+d^{-1})(1+2d^{-1})(1+3d^{-1})<34
\]
for $d\geq65536$, proving \eqref{eq:ck-u4-34} uniformly in $|u\rangle$.
\end{proof}
\end{revision}
\subsection{Curvature and low order conditioning}
\label{sec:periodic-curvature}
The ingredients of the next proposition have established antecedents.
Cho and Kim analyze the same periodic architecture through its Pauli diagonal
measurement channel and lower bounds on its Pauli eigenvalues
\cite[Appendix~B]{ChoKim2026}. Exact third moments of multiqubit Clifford
gates were proved by Webb and Zhu \cite{Webb2016,Zhu2017}; our local
argument uses the same Pauli orbit method. The comparison with a
global Haar third moment follows the overlapping block gluing method of
Schuster, Haferkamp, and Huang
\cite[Appendices~B.3 and D.2, arXiv~v2]{SchusterHaferkampHuang2025}.
Proposition~\ref{prop:periodic-low-order} assembles these ingredients for
the forward channel and records the constants used here.

\begin{proposition}[Curvature, conditioning, and scalar variance]
\label{prop:periodic-low-order}
Under \eqref{eq:headline-block-condition}, the periodic channel satisfies
\eqref{eq:curvature} with $a\geq a_*=32767/32768$, and, for every state,
\begin{equation}
 \frac{19627}{50000}I\preceq I+\cL(\rho)\preceq\frac{80373}{25000}I.
 \label{eq:forward-covariance-conditioning}
\end{equation}
Moreover,
\begin{equation}
 \sup_{\rho,\langle v|v\rangle=1}\operatorname{Var}_\rho(\langle v|Y_1|v\rangle)
 \leq\kappa_*:=\frac{241119}{25000}<13.
 \label{eq:scalar-variance}
\end{equation}
\end{proposition}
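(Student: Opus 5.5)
The plan is to diagonalize $\cM$ in the Pauli basis and reduce all three claims to bounds on the Pauli eigenvalues $\measeig{P}$ of $\cL$, together with a third moment computation for the scalar variance.

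\textbf{Curvature.} For a Pauli string $P\neq I$, write $\cL(P)=\measeig{P}P$. I expect $\measeig{P}$ to factor through the two layers. The inner layer $U_2$ averages each local block to a depolarizing-type Pauli transfer. The outer layer $U_1$ mixes nonidentity local Paulis uniformly over the nonidentity Paulis on its block. The eigenvalue then becomes a product or transfer-matrix expression over the support pattern of $P$ on the $2m$ half blocks; this is the activity formula of the Supplement. Following \cite[Appendix~B]{ChoKim2026}, each active half block should contribute a factor of the form $(d_{\rm loc}\text{-dependent ratio})$ that is close to $1$ up to $O(1/q)$. Multiplying $d+1$ by these per-block weights should give $\measeig{P}\geq\prod(1-O(q^{-2}))$ over at most $m$ blocks. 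Under \eqref{eq:headline-block-condition} we have $m/q\leq\ln2/92$ and $q\geq256$. These should yield the explicit bound $a_*=1-2^{-15}$. Since $\cL$ is Pauli diagonal, curvature on traceless $H$ follows immediately from $\min_{P\neq I}\measeig{P}\geq a_*$.

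\textbf{Conditioning.} I would write $I+\cL(\rho)=(d+1)\E_\rho B$. For the upper and lower bounds, I will compare with the global Haar second and third moments, as in \cite{SchusterHaferkampHuang2025}. For a unit $|v\rangle$, the quantity $\langle v|(d+1)\E_\rho B|v\rangle=d(d+1)\E_\nu[\tr(\rho B)\langle v|B|v\rangle]$ is a second-order overlap under the reference law. I would expand it via the two-copy commutant (swap and identity) on each block, exactly as in Lemma~\ref{lem:ck-four-moment-expansion} but with two replicas. The result is a periodic transfer-matrix sum $\tr(\cdot^{m})$ of $2\times2$ Gram matrices. Its deviation from the Haar value $1+\tr(\rho|v\rangle\langle v|)\in[1,2]$ is controlled by $(1+O(1/q))^m$. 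The lower bound is the delicate direction. Here I would instead use $I+\cL(\rho)=I+\rho+(\cL-\mathrm{Id})(\rho)$ and bound $\|(\cL-\mathrm{Id})(\rho)\|_{\op}$ through the Pauli eigenvalue deviations $|\measeig{P}-1|$ combined with $\rho\succeq0$. The stated constants (roughly $0.39$ and $3.2$) suggest a crude but explicit such bound.

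\textbf{Scalar variance.} I would write $\operatorname{Var}_\rho\langle v|Y_1|v\rangle\leq(d+1)^2\E_\rho\langle v|B|v\rangle^2=d(d+1)^2\E_\nu[\tr(\rho B)|\langle v|\psi\rangle|^4]$. This is a third moment. I would bound it by the three-copy local Clifford commutant, which for qubits is still spanned by permutations and one stabilizer-code sector \cite{Webb2016,Zhu2017}. The same periodic contraction as in Theorem~\ref{thm:ck-fourth-projector} applies, with the sector contraction bound of Lemma~\ref{lem:ck-four-copy-boundary}. This gives at most $\approx 6\cdot(1+O(1/q))^m$ times the Haar value. Alternatively, Hölder against \eqref{eq:ck-u4-34} gives a constant directly, but likely larger than $13$, so I would try the three-copy route first.

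The main obstacle is the explicit lower conditioning constant, since it requires uniform control over all states, including entangled ones, of a signed transfer-matrix expression. I would try to reduce it to the Pauli eigenvalue bounds rather than to the replica expansion.
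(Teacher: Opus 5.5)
\paragraph{Gap: the lower conditioning bound.}
You rightly identify the lower bound in \eqref{eq:forward-covariance-conditioning} as the crux. The route you propose for it fails, because the Pauli eigenvalues of $\cL$ are not uniformly close to $1$; curvature is only a one-sided statement.

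Write $\cM(P)=\lambda_PP$, so that $\cL(P)=(d+1)\lambda_PP$, and take $P$ active on exactly one first-layer block $A'_j$:
\begin{itemize}
\item After $U_1$, it is a uniform nonidentity Pauli on $A'_j$.
\item This Pauli is nontrivial on one half, or on both halves, with probabilities $2/(Q+1)$ and $(Q-1)/(Q+1)$.
\item Each active second-layer block diagonalizes it with probability $1/(Q+1)$.
\end{itemize}
Hence $\lambda_P=(3Q+1)/(Q+1)^3$, consistent with \eqref{eq:forward-pauli-multiplier}. The corresponding $\cL$-eigenvalue is about $3Q^{m-2}$: roughly $3$ when $m=2$, and growing like $Q^{m-2}$ for $m\geq3$.

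So any bound through $\max_{P\neq I}|(d+1)\lambda_P-1|$ is useless. For example, $\|(\cL-\mathrm{Id})(\rho)\|_{\op}\leq\|(\cL-\mathrm{Id})(\rho)\|_\F\leq\max_{P\neq I}|(d+1)\lambda_P-1|$ gives nothing. The phrase ``combined with $\rho\succeq0$'' does not supply the missing mechanism.

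Relatedly, your curvature sketch never names its mechanism, and it mixes an $O(1/q)$ loss per half block (which could not reach $a_*$) with the correct $O(q^{-2})$ loss per block. The actual reason is short: whatever Pauli $U_1$ produces, each of its at most $m$ active second-layer blocks is diagonalized with probability $1/(Q+1)$. So $\lambda_P\geq(Q+1)^{-m}$, and therefore $(d+1)\lambda_P\geq(1+1/Q)^{-m}\geq1-m/q^2\geq1-2^{-15}$.

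\paragraph{What the paper does, and how to repair your route.}
The paper uses a \emph{relative} (multiplicative) comparison with a global $3$-design.
\begin{itemize}
\item The overlapping-block gluing theorem of Schuster, Haferkamp and Huang, applied to exact local Clifford $3$-designs on the overlap cycle with $2m-1$ joins, gives a relative CP-order error with $1+\epsilon_3\leq(1+f_3(q))^{2m-1}\leq2^{63/92}$.
\item Passing to reference projector moments by adjoints, taking the two-copy marginal, and contracting with the Born density yields $(1-\epsilon_3)(I+\rho)\preceq I+\cL(\rho)\preceq(1+\epsilon_3)(I+\rho)$.
\item Testing the three-copy comparison on $\rho\otimes(|v\rangle\!\langle v|)^{\otimes2}$ bounds the variance by $6(1+\epsilon_3)$.
\end{itemize}
All three stated constants come from this single $\epsilon_3$: $1-\epsilon_3\geq19627/50000$, $2(1+\epsilon_3)\leq80373/25000$, and $6(1+\epsilon_3)\leq\kappa_*$.

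If you want to stay with your own tools, either of two fixes closes the gap.
\begin{itemize}
\item In your two-replica expansion, isolate the all-identity and all-swap sectors; each has a positive coefficient within $O(m/Q)$ of $1/(d(d+1))$. Bound the mixed sectors by the remaining absolute coefficient mass, which is $O(m^2/Q)$.
\item Alternatively, exploit the $1/d$ weight in $\rho=d^{-1}\sum_P\tr(\rho P)P$. Group Paulis by first-layer activity set $S$ and use $\sum_{\operatorname{supp}P\subseteq D}\tr(\rho P)^2=d_D\tr(\rho_D^2)\leq d_D$ for $D=\bigcup_{j\in S}A'_j$, and likewise for $v$. This reduces the error to transfer sums such as $\sum_S\lambda_SQ^{|S|}=\tr[(\mathsf T^{(0)}+Q\mathsf T^{(1)})^m]$.
\end{itemize}

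\paragraph{Variance.}
Your three-copy contraction route is otherwise workable, with two corrections. For qubits the local three-copy commutant is spanned by the six permutations alone, since Clifford groups are $3$-designs, which is what \cite{Webb2016,Zhu2017} prove; there is no stabilizer-code sector. Also, the contraction against $\rho\otimes(|v\rangle\!\langle v|)^{\otimes2}$ is bounded by one because block-wise permutation operators are unitary, not by Lemma~\ref{lem:ck-four-copy-boundary}. The resulting bound is about $6e^{6m/q}<\kappa_*$.
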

\begin{revision}
The proof is given in Appendix~\ref{app:prop10-proof}.
\end{revision}

Theorem~\ref{thm:ck-fourth-projector} supplies the fourth projector moment
bound \eqref{eq:primitive-fourth}, and Proposition~\ref{prop:periodic-low-order}
supplies the covariance bounds \eqref{eq:primitive-conditioning} required
by Proposition~\ref{prop:primitive-covariance}.
The remaining
curvature and scalar variance conclusions of
Proposition~\ref{prop:periodic-low-order} are used in the reconstruction
bounds through Lemma~\ref{lem:trace-conversion} and the finite sample estimates
in Appendix~\ref{app:quantitative}, respectively.

\begin{revision}
\section{Nonadaptive Single Copy Lower Bound}
\label{sec:lower}
The state space {\color{black}$\cD_d$} and the risk in Section~\ref{sec:risks}
are defined for every integer {\color{black}$d$}; only the circuit construction
specializes to {\color{black}$d=2^n$}. The lower bound therefore applies in arbitrary
integer dimensions and does not require the block condition \eqref{eq:headline-block-condition}.

\begin{theorem}[Spectral decay lower bound]
\label{thm:lower}
For every {\color{black}$\alpha>1$}, {\color{black}$d\geq514$}, {\color{black}$T\geq1$}, and {\color{black}$L\geq1$},
{\color{black}\begin{equation}
 \mathfrak R_T(\cC_{\alpha,d}(L))\geq c_\alpha\notationmark{\mathsf{r}}_{\alpha,L}(d,T),
 \label{eq:lower-main}
\end{equation}}%
where {\color{black}$c_\alpha>0$} depends only on {\color{black}$\alpha$}. Moreover, for every randomized
nonadaptive single copy design and estimator as specified in
Section~\ref{sec:risks} (see \eqref{eq:minimax-risk}),
there is a state {\color{black}$\rho\in\cC_{\alpha,d}(L)$} for which the error is at least a
constant multiple of the right side with probability at least {\color{black}$1/2$}.

\end{theorem}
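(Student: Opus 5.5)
We prove \eqref{eq:lower-main} by establishing one lower bound for each of the three branches of $\mathsf r_{\alpha,L}(d,T)$. A lower bound of order $\min\{1,x\}$ for each branch $x$ then gives $c_\alpha\min\{1,x_1,x_2\}$, because the minimax risk dominates each branch bound. Throughout we may replace $L$ by $\min\{L,d^{\alpha-1}\}$. We fix a tail size $\widetilde m$ and a packing rank $\widetilde m_0$, both at most $d/3$, and use the hard family
\[
 \rho_{V}=(1-\eta)\,|e_1\rangle\!\langle e_1|+\frac{\eta}{\widetilde m}\,\Pi_V,
\]
where $\Pi_V$ is the projector onto an $\widetilde m$-dimensional subspace $V$ of the orthogonal complement of a two-dimensional anchor. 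More precisely, we use a rank-$(\widetilde m+2)$ variant that keeps the trace exactly one, matching the rank $\widetilde m+2$ mentioned in Corollary~\ref{cor:exact-rank}. The spectrum consists of one large eigenvalue and $\widetilde m$ equal eigenvalues $\eta/\widetilde m$. Hence $\tau_s(\rho_V)\le\eta\,(1-s/\widetilde m)_+$. This lies in $\cC_{\alpha,d}(L)$ provided $\eta\lesssim L\widetilde m^{1-\alpha}$. The worst case is $s\approx\widetilde m/2$, which costs only an $\alpha$-dependent constant.

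For the packing we take the subspaces $V$ from the Grassmann packing of \cite[Lemma~2, Eq.~(19)]{FlammiaGrossLiuEisert2012} inside a $d/2$-dimensional subspace. This gives $e^{c\,\widetilde m d}$ subspaces whose projectors are pairwise separated by $\|\Pi_V-\Pi_{V'}\|_{\mathrm{tr}}\gtrsim\widetilde m$. The corresponding states are then separated by $\gtrsim\eta$ in trace norm.

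The information bound uses a shared hidden Haar rotation. We draw a Haar unitary $W$ on the tail subspace, apply it to every hypothesis, and reveal nothing about $W$. Conditional on the seed $R$ and the packing index, each copy $t$ then sees a POVM applied to the rotated state. We bound the mutual information between the index and the record by the sum of one-copy chi-square divergences to the rotation-averaged reference state $\bar\rho=(1-\eta)|e_1\rangle\!\langle e_1|+\eta\,P/(d/2)$, where $P$ is the tail projector. This is valid because the design is nonadaptive and the copies are conditionally independent given $(R,W,\text{index})$. For a single POVM element $M$, the chi-square contribution is $\sum_M\tr(M(\rho-\bar\rho))^2/\tr(M\bar\rho)$. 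Averaging over Haar $W$ uses second moments of Haar-rotated projectors and gives
\[
 \E_W\chi^2\lesssim\frac{\eta^2}{\widetilde m}\cdot\frac{1}{d}\cdot d
 =O\!\left(\frac{\eta^2 d}{\widetilde m\,d}\right)\cdot\frac{d}{\eta}\ \text{-type terms}.
\]
After normalizing by $\tr(M\bar\rho)\ge\eta\tr(MP)/(d/2)$, the per-copy information is $O(\eta^2 d/\widetilde m^2)$ times the dimension factors. Fano's inequality \cite[Thm.~2.10.1]{CoverThomas2006} then requires $T\eta^2 d/\widetilde m^{2}\cdot(\ldots)\gtrsim\widetilde m d$.

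The exponents must be chosen so that the calibrated outcome is $\eta\asymp\widetilde m\sqrt{d/T}$. This is the rank-$r$ rate of Corollary~\ref{cor:exact-rank} with $r=\widetilde m$. Balancing $\widetilde m\sqrt{d/T}$ against $L\widetilde m^{1-\alpha}$ gives $\widetilde m\asymp(L^2T/d)^{1/(2\alpha)}$ and risk $L^{1/\alpha}(d/T)^{(\alpha-1)/(2\alpha)}$. When $\widetilde m$ would exceed $d/3$, we cap it at $d/3$, which yields the $\sqrt{d^3/T}$ branch. When it would drop below a constant, the risk is of order one, which yields the constant branch; for bounded ranks we instead use the pure-state hypercube and Assouad argument. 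Both Fano's inequality and Assouad's lemma give error at least a constant multiple of the separation with probability at least $1/2$, so the high-probability statement follows. The condition $d\ge514$ ensures that the packing cardinality and the dimension constraints $\widetilde m\le d/3$ are nonvacuous.

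The main obstacle is the one-copy chi-square bound, uniformly over arbitrary POVMs, including non-rank-one and adversarially entangled ones, after averaging over $W$. We must show that the per-copy divergence is $O(\eta^2/\widetilde m\cdot d/\ldots)$ with the correct dimension scaling, so that the record information is $O(T\eta^2 d^{-1}\cdot d^2/\widetilde m^2)$, rather than the weaker bound available from direct trace-distance arguments. The plan is to reduce to rank-one effects by convexity of chi-square in refinement, and then to evaluate the Haar integral $\E_W\tr(M W\Pi_VW^*)^2$ with the Weingarten second-moment formula. This leaves only the variance term $\propto\tr(M P)^2/(\widetilde m d)$, and the denominator $\tr(M\bar\rho)$ cancels it.
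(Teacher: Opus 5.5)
The decisive gap is the one-copy information bound, and it cannot be repaired within your hard family. Your Haar rotation acts only on the tail subspace, so the large head $|e_1\rangle$ stays fixed and known. The denominator bound you use, $\tr(M\bar\rho)\geq\eta\tr(MP)/(d/2)$, therefore cancels only one of the two powers of $\eta$ in the variance term. The resulting chi-square is genuinely linear in $\eta$. Take the POVM consisting of $|e_1\rangle\!\langle e_1|$, an orthonormal basis $\{f_j\}_{j=1}^{D}$ of the tail range ($D=d/2$), and the complementary projector. Then $p_0(f_j)=\eta/D$ and $\langle f_j|W\Pi_VW^*|f_j\rangle\sim\mathrm{Beta}(\widetilde m,D-\widetilde m)$, so $\E_W\chi^2=\eta(D-\widetilde m)/(\widetilde m(D+1))\asymp\eta/\widetilde m$. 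Fano then only forces $T\eta/\widetilde m\lesssim\widetilde m d$, i.e.\ $\eta\lesssim\widetilde m^2d/T$. This is the square of your target $\widetilde m\sqrt{d/T}$, hence far smaller whenever the target is below one. This is not slack in the estimate. Because the head is known, you can measure $|e_1\rangle\!\langle e_1|$ together with a fixed tomographically complete POVM on the tail. That yields about $\eta T$ copies of the normalized rank-$\widetilde m$ tail state, and low rank tomography identifies the packing element once $\eta T$ exceeds $\widetilde m^2d$, up to constant and logarithmic factors. So no calibration of $\eta$ and $\widetilde m$ in this family reaches the spectral or $\sqrt{d^3/T}$ branches.

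The missing idea is to hide the head as well. The paper conjugates by a single Haar unitary $W$ on all of $\C^d$ and uses a two-dimensional head, $\rho_x=\frac{1-\beta}{2}\Pi_H+\frac{\beta}{\widetilde m}P_x$, with reference $\rho_0=\frac{1-\beta}{2}\Pi_H+\frac{\beta}{d_\perp}\Pi_K$ rotated by the same $W$. For a unit rank one effect $|v\rangle\!\langle v|$, the vector $W^\dagger|v\rangle$ splits into a head weight $S\sim\mathrm{Beta}(2,d_\perp)$ and an independent uniform tail direction $u$. The denominator is now bounded by the head term, $p_0\geq\frac{1-\beta}{2}S$, while the numerator is $\beta^2(1-S)^2\langle u|A_x|u\rangle^2$. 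Use $\E_u\langle u|A_x|u\rangle^2=\tr(A_x^2)/(d_\perp(d_\perp+1))$ and $\E[(1-S)^2/S]=d_\perp(d_\perp+1)/d$, then weighted Cauchy--Schwarz and Haar invariance for general effects. This gives $\E_W\chi^2\leq\frac{2\beta^2}{1-\beta}\tr(A_x^2)\leq\frac{8\beta^2}{3\widetilde m}$ uniformly over POVMs, which is exactly what Fano needs for $\beta\asymp\widetilde m\sqrt{d/T}$. The second head dimension supplies the factor $s$ in the Beta density. A one-dimensional head gives $S\sim\mathrm{Beta}(1,d-1)$ with $\E[1/S]=\infty$, and the same computation then loses a factor of order $\ln(1/\beta)$. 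That, rather than normalization (your displayed state already has unit trace), is why the hard states have rank $\widetilde m+2$.

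Two smaller points. First, a standalone bound $c\min\{1,L^{1/\alpha}(d/T)^{(\alpha-1)/(2\alpha)}\}$ is false (take $L=d^{\alpha-1}$ and $T\gg d^3$). The branches must be optimized jointly through $\max_{1\leq\widetilde m\leq\lfloor d_\perp/3\rfloor}\min\{1,L\widetilde m^{1-\alpha},\widetilde m\sqrt{d_\perp/T}\}$, which your capping step effectively does. Second, Assouad is unnecessary: Fano at $\widetilde m=1$ already covers the constant branch with failure probability $11/16$, since $\ln N\geq\gamma_0d_\perp>4\ln2$ for $d\geq514$. Moreover, an averaged Hamming bound does not by itself yield a single state that fails with probability at least $1/2$.
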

\begin{proof}
Sections~\ref{sec:lower-packing}--\ref{sec:lower-fano} construct a fixed
admissible state with the probability and expectation bounds
\eqref{eq:fixed-hard-state-probability}--\eqref{eq:fixed-hard-state-expectation}.
Section~\ref{sec:lower-envelope} optimizes its separation using the packing
and integer lemmas in Appendix~\ref{app:lower-details}, yielding the stated
rate with error probability at least {\color{black}$11/16$}.
\end{proof}

\subsection{Grassmann packing and the hard family}
\label{sec:lower-packing}

We use the metric Grassmann packing of
{\color{black}\cite[Lemma~2, Eq.~(19)]{FlammiaGrossLiuEisert2012}}. Its explicit specialization
Lemma~\ref{lem:grassmann-packing}, proved in Appendix~\ref{app:lower-details},
gives {\color{black}$N$} rank {\color{black}${\rtail}$} projectors with {\color{black}$\ln N\geq\gamma_0{\rtail} d_\perp$},
{\color{black}$\gamma_0=25/4608$}, and pairwise trace separation at least {\color{black}${\rtail}/2$}.

Put {\color{black}$d_\perp=d-2$} and split {\color{black}$\C^d=H\oplus K$} with {\color{black}$\dim H=2$} and
{\color{black}$\dim K=d_\perp$}.  For {\color{black}$1\leq {\rtail}\leq\lfloor d_\perp/3\rfloor$}, apply
Lemma~\ref{lem:grassmann-packing} on {\color{black}$K$}.
Choose
{\color{black}\begin{equation}
 0<\beta\leq\min\{1/4,\,2^{1-\alpha}L{\rtail}^{1-\alpha}\}
 \label{eq:b-class-condition}
\end{equation}}%
and define
{\color{black}\begin{equation}
\rho_x=\frac{1-\beta}{2}\Pi_H+\frac{\beta}{{\rtail}}P_x.
\label{eq:hard-states}
\end{equation}}%
Since {\color{black}$\beta\leq1/4\leq {\rtail}/({\rtail}+2)$}, the ordered eigenvalues are two copies of
{\color{black}$(1-\beta)/2$}, followed by {\color{black}${\rtail}$} copies of {\color{black}$\beta/{\rtail}$}, followed by zeros.  Hence
{\color{black}\begin{equation}
 \tau_s(\rho_x)=
 \begin{cases}
  (1+\beta)/2,&s=1,\\
  \beta({\rtail}+2-s)/{\rtail},&2\leq s\leq {\rtail}+1,\\
  0,&{\rtail}+2\leq s\leq d.
 \end{cases}
 \label{eq:hard-exact-tails}
\end{equation}}%
For {\color{black}$s=1$}, \eqref{eq:hard-exact-tails} is at most {\color{black}$5/8\leq L$}.  For
{\color{black}$2\leq s\leq {\rtail}+1$},
{\color{black}\[
 \tau_s(\rho_x)\leq \beta
 \leq L(2{\rtail})^{1-\alpha}\leq Ls^{1-\alpha},
\]}%
The last inequality uses {\color{black}$s\leq {\rtail}+1\leq2{\rtail}$} and {\color{black}$1-\alpha<0$}; the remaining tails vanish.  Hence {\color{black}$\rho_x\in\cC_{\alpha,d}(L)$}.  Also,
{\color{black}\begin{equation}
 \|\rho_x-\rho_{x'}\|_{\mathrm{tr}}
 =\frac{\beta}{{\rtail}}\|P_x-P_{x'}\|_{\mathrm{tr}}\geq\frac{\beta}{2}.
 \label{eq:hard-separation}
\end{equation}}

Let {\color{black}$W\sim\mu_{\mathrm{Haar},d}$} be one shared hidden rotation, independent
of {\color{black}$x$}, and present the measurement design with the state
{\color{black}\begin{equation}
 \rho_{x,W}=W\rho_xW^\dagger.
 \label{eq:rotated-hard-state}
\end{equation}}%
For comparison define
{\color{black}\begin{equation}
 \rho_0=\frac{1-\beta}{2}\Pi_H+\frac \beta{d_\perp}\Pi_K,
 \qquad
 A_x=\frac1{\rtail}P_x-\frac1{d_\perp}\Pi_K.
 \label{eq:reference-state}
\end{equation}}%
Then {\color{black}$\rho_x=\rho_0+\beta A_x$}, {\color{black}$\tr A_x=0$}, and
{\color{black}\begin{equation}
 \tr(A_x^2)={\rtail}^{-1}-d_\perp^{-1}\leq {\rtail}^{-1}.
 \label{eq:A-second-moment}
\end{equation}}%
The full rank state {\color{black}$\rho_0$} is only an information radius reference; it
need not belong to the decay class.

\subsection{One copy information after a shared rotation}

For probability densities $p$ and $q$ with respect to a common
measure $\mu$ on $\Omega$, define
\begin{equation}
 \begin{aligned}
 \chi^2(p\|q)&:=\int_\Omega\frac{(p(z)-q(z))^2}{q(z)}\,d\mu(z),\\
 D_{\KL}(p\|q)&:=\int_\Omega p(z)\ln\frac{p(z)}{q(z)}\,d\mu(z).
 \end{aligned}
 \label{eq:information-divergences}
\end{equation}
These divergences satisfy $D_{\KL}(p\|q)\leq\chi^2(p\|q)$.

\begin{lemma}[Haar averaged one copy chi square bound]
\label{lem:one-copy-chi-square}
Fix any one copy POVM on a standard Borel outcome space. For every packing
label $x$, let $p_{x,W}$ and $p_{0,W}$ be its outcome densities under
$W\rho_xW^\dagger$ and $W\rho_0W^\dagger$, respectively, relative to a
common dominating measure. Then
\begin{equation}
 \E_W\chi^2(p_{x,W}\|p_{0,W})
 \leq\frac{2\beta^2}{1-\beta}\tr(A_x^2)
 \leq\frac{8\beta^2}{3{\rtail}}.
 \label{eq:one-copy-chi-square}
\end{equation}
\end{lemma}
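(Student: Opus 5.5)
The plan is to reduce to rank-one POVM elements, bound each denominator from below by its component in the two-dimensional block $H$, and then evaluate the Haar average in closed form with a complex Gaussian surrogate. The key point is that $\dim H=2$ makes $\E[1/\|g_H\|^2]$ finite and equal to one, which is what produces the factor $2/(1-\beta)$.

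\emph{Step 1: reduction to rank-one elements.} Write the POVM through an operator-valued density $m(z)\succeq0$ with respect to $\mu$, with $\int m(z)\,d\mu(z)=I$. Then
\[
p_{x,W}(z)-p_{0,W}(z)=\beta\tr\bigl(WA_xW^\dagger m(z)\bigr),\qquad p_{0,W}(z)=\tr\bigl(W\rho_0W^\dagger m(z)\bigr).
\]
Refine each $m(z)$ into its spectral rank-one pieces $|\varphi\rangle\!\langle\varphi|$, with $\varphi$ unnormalized. The coarse outcome law is a deterministic function of the refined one, so by data processing for the $\chi^2$ divergence, $\chi^2$ can only increase under the refinement. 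It therefore suffices to bound, for each fixed $W$, the integral of
\[
\frac{\beta^2\langle v|A_x|v\rangle^2}{\langle v|\rho_0|v\rangle},\qquad v=W^\dagger\varphi,
\]
and then average over $W$. Measurability of the spectral refinement on a standard Borel space is routine.

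\emph{Step 2: drop the $K$-part of the denominator.} Use $\langle v|\rho_0|v\rangle\geq\frac{1-\beta}{2}\|\Pi_Hv\|^2$, which discards the tiny $\beta/d_\perp$ contribution on $K$. The naive Cauchy--Schwarz route via $\rho_0^{-1}$ would cost a factor $d_\perp/\beta$, so this step is essential. Since $A_x$ is supported on $K$, the summand is now at most
\[
\frac{2\beta^2}{1-\beta}\,\frac{\langle \Pi_Kv|A_x|\Pi_Kv\rangle^2}{\|\Pi_Hv\|^2}.
\]
This expression is homogeneous of degree two in $v$.

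\emph{Step 3: Haar average via a Gaussian surrogate.} For Haar $W$, the vector $W^\dagger\varphi$ has the law $\|\varphi\|\,\theta$ with $\theta$ uniform on the unit sphere. Realize $\theta=g/\|g\|$ for a standard complex Gaussian $g\in\C^d$. Because $\|g\|$ is independent of $\theta$ and the ratio $f$ above is degree-two homogeneous,
\[
\E f(\theta)=\E f(g)/\E\|g\|^2=\E f(g)/d .
\]
The Gaussian components $g_H=\Pi_Hg$ and $g_K=\Pi_Kg$ are independent, so $\E f(g)$ factors as follows.
\begin{itemize}
\item For the numerator, $\E\langle g_K|A_x|g_K\rangle^2=\tr(A_x^2)+(\tr A_x)^2=\tr(A_x^2)$, by the complex Wick formula and $\tr A_x=0$.
\item For the denominator, $\|g_H\|^2\sim\mathrm{Gamma}(2,1)$, so $\E\|g_H\|^{-2}=1$.
\end{itemize}
Hence each rank-one piece contributes at most $\frac{2\beta^2}{1-\beta}\tr(A_x^2)\|\varphi\|^2/d$. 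Summing and integrating over the pieces gives $\int\tr m(z)\,d\mu(z)=\tr I=d$, which cancels the $1/d$. By Tonelli, this yields the first inequality in \eqref{eq:one-copy-chi-square}.

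\emph{Step 4: numerical bound.} The second inequality follows from $\beta\leq1/4$, which gives $2/(1-\beta)\leq8/3$, together with \eqref{eq:A-second-moment}.

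The main obstacle is Step 3. It requires a careful justification that the Haar average over the rotation $W$ equals the Gaussian average divided by $d$, and a check that $\E\|g_H\|^{-2}$ is finite.
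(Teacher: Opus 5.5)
Your proof is correct. It shares the paper's skeleton: reduce to rank-one effects, keep only $\frac{1-\beta}{2}\|\Pi_H v\|^2$ in the denominator, compute one Haar average per unit effect, and recover the factor $d$ from $\int \tr m\,d\mu=\tr I$. The two key computations, however, are done differently.

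\textbf{The Haar integral.} The paper stays on the sphere. It writes $S=\|\Pi_H W^\dagger v\|^2\sim\mathrm{Beta}(2,d_\perp)$, together with an independent uniform unit vector $u$ in $K$. It then multiplies the spherical moment $\E_u\langle u|A_x|u\rangle^2=\tr(A_x^2)/(d_\perp(d_\perp+1))$ by the Beta integral $\E(1-S)^2/S=d_\perp(d_\perp+1)/d$.

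Your Gaussian polar decomposition makes the $H$- and $K$-components independent from the outset. The factors $d_\perp(d_\perp+1)$ therefore never appear, and $\tr(A_x^2)/d$ comes directly from the complex Wick formula and $\E\|g_H\|^{-2}=1$. This is somewhat lighter: no spherical moment formula or Beta integral is needed. Both routes expose the same structural point, namely that $\dim H=2$ is what makes the inverse moment finite. In the paper this is the factor $s$ in the Beta density; in your version it is $\mathrm{Gamma}(2,1)$ rather than $\mathrm{Exp}(1)$.

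\textbf{The rank-one reduction.} Your data-processing step requires the refined POVM to exist as a measurable object. That means a measurable choice of the spectral pieces of $m(z)$, which you assert is routine. It is true, but the paper avoids it. The paper first applies Tonelli to the original integrand. It then uses the same convexity inequality (weighted Cauchy--Schwarz) pointwise at each fixed $z$. Finally, it notes that by Haar invariance $\E_W(\delta_i^2/q_i)$ takes the same value for every unit vector $v_i$, so the bound is a constant multiple of $\tr F(z)$ and no eigenvector selection is ever needed.

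Your Step~3 bound depends on $\varphi$ only through $\|\varphi\|^2$. You could therefore use the same shortcut and drop the measurability remark altogether.
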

The proof is given in Appendix~\ref{sec:one-copy-details}.

\subsection{Information reduction and testing lower bound}
\label{sec:lower-fano}

The following lemma combines the one copy bound of
Lemma~\ref{lem:one-copy-chi-square} with Fano's inequality to lower bound
the probability of misidentifying the packing label from $T$
nonadaptive single copy measurements.

\begin{lemma}[Nonadaptive Fano reduction]
\label{lem:nonadaptive-fano}
Let $\{\rho_x\}_{x=1}^N$, $\rho_0$, and the shared Haar rotation $W$ be as
in Section~\ref{sec:lower-packing}, with $N\geq2$. Choose the hidden label
$X$ uniformly from $\{1,\ldots,N\}$, with $X$, $W$, and the design seed
mutually independent. Suppose $\kappa\geq0$ satisfies, for every label $x$ and every
one copy POVM, the bound
\begin{equation}
 \E_W\chi^2(p_{x,W}\|p_{0,W})\leq\kappa,
 \label{eq:nonadaptive-chi-budget}
\end{equation}
with the densities defined in Lemma~\ref{lem:one-copy-chi-square}.
For every randomized nonadaptive single copy design using $T\geq1$
copies, every possibly randomized decoder $\widehat X$ based on the
outcomes, the public design seed, and $W$ revealed after measurement
has error probability, averaged over all randomness, satisfying
\begin{equation}
 \Prob\{\widehat X\neq X\}
 \geq1-\frac{T\kappa+\ln2}{\ln N}.
 \label{eq:nonadaptive-fano-bound}
\end{equation}
In particular, if $T\kappa\leq(\ln N)/16$ and $\ln N\geq4\ln2$, then
\begin{equation}
 \Prob\{\widehat X\neq X\}\geq\frac{11}{16}.
 \label{eq:fano-error}
\end{equation}
\end{lemma}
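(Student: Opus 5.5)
The plan is Fano's inequality applied to the mutual information between $X$ and the full observation $\mathcal O=(\seed,W,\obs_1,\ldots,\obs_T)$. Since $X\to\mathcal O\to\widehat X$ is a Markov chain (the decoder's extra randomness is independent of $X$), Fano gives
\[
 \Prob\{\widehat X\neq X\}\geq 1-\frac{I(X;\mathcal O)+\ln2}{\ln N}.
\]
It therefore suffices to prove $I(X;\mathcal O)\leq T\kappa$.

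To bound the information, I will use the chain rule together with independence of $X$ from $(\seed,W)$:
\[
 I(X;\mathcal O)=I(X;\seed,W)+I(X;\obs^T\mid\seed,W)=I(X;\obs^T\mid\seed,W).
\]
Conditional on $(\seed,W)=(\seedval,w)$, the design is a fixed POVM sequence $M_t(\seedval)$, and the outcomes are independent with densities $p^{(t)}_{x,w}$ under $w\rho_xw^\dagger$.

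The key step is the variational (golden) formula for mutual information. For any reference law $q$, one has $I(X;\obs^T\mid\seedval,w)\leq \frac1N\sum_x D_{\KL}(\prod_t p^{(t)}_{x,w}\,\|\,q)$. I take $q=\prod_t p^{(t)}_{0,w}$, the product law under $w\rho_0w^\dagger$. This choice is legitimate even though $\rho_0$ is not in the class, because only a reference distribution is needed. Additivity of KL over product measures gives $\sum_t D_{\KL}(p^{(t)}_{x,w}\|p^{(t)}_{0,w})$. Each summand is then bounded using $D_{\KL}\leq\chi^2$.

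Averaging over $W$ and $\seed$ and exchanging expectations by Tonelli (all terms are nonnegative, and measurability follows from the standard Borel and joint measurability assumptions) gives
\[
 I(X;\mathcal O)\leq\frac1N\sum_x\sum_t\E_{\seed}\E_W\chi^2\bigl(p^{(t)}_{x,W}\big\|p^{(t)}_{0,W}\bigr)\leq T\kappa.
\]
The last inequality applies hypothesis \eqref{eq:nonadaptive-chi-budget} to the fixed POVM $M_t(\seedval)$. This step uses that $W$ is independent of the seed and that nonadaptivity fixes the POVMs before any outcome is seen.

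The main technical care lies in this conditioning on a continuous $W$ and a general seed. Conditional mutual information must be defined through regular conditional distributions, which exist on standard Borel spaces, and a common dominating measure is needed for each POVM. The dominating measure can be taken as $\mu_t=\tr M_t(\cdot)$, since $w\rho w^\dagger\leq I$ makes every Born law absolutely continuous with respect to it.

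Finally, \eqref{eq:fano-error} follows by substitution. If $T\kappa\leq(\ln N)/16$ and $\ln2\leq(\ln N)/4$, then the error probability is at least $1-1/16-1/4=11/16$.
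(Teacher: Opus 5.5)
Your proof is correct and follows the paper's argument: conditional on $(R,W)$ you bound $I(X;Y^T\mid W,R)$ by the information-radius inequality against the product reference law $p_{0,W,R}^{(T)}$, then use additivity of KL over the product, $D_{\KL}\le\chi^2$, and the hypothesis applied to each fixed POVM $M_t(r)$, and finish with Fano. The paper's main route instead bounds the per-copy informations and sums them via the chain rule and the interaction-information identity, but it explicitly notes that your direct full-record bound gives the same $T\kappa$.
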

The proof is given in Appendix~\ref{sec:nonadaptive-fano-proof}.

Apply Lemma~\ref{lem:nonadaptive-fano} with
$\kappa=8\beta^2/(3{\rtail})$, supplied by
Lemma~\ref{lem:one-copy-chi-square}, and put
\begin{equation}
 c_I:=\sqrt{\frac{3\gamma_0}{128}}.
 \label{eq:fano-ci}
\end{equation}
If $\beta\leq c_I{\rtail}\sqrt{d_\perp/T}$, then
$T\kappa\leq\gamma_0{\rtail}d_\perp/16\leq(\ln N)/16$.
For $d\geq514$, the packing bound also gives $\ln N\geq4\ln2$.
Thus \eqref{eq:fano-error} applies.

Given $\widehat\rho$, decode by its nearest rotated packing state,
breaking ties by the smallest index. The separation
\eqref{eq:hard-separation} makes this decoder correct whenever
$\|\widehat\rho-W\rho_XW^\dagger\|_{\mathrm{tr}}<\beta/4$.
Consequently, a label error implies trace norm error at least $\beta/4$.
Since $X$ is uniform and independent of $W$,
\[
 \frac1N\sum_{x=1}^N\E_W\Prob\{\widehat X\neq x\mid X=x,W\}
 =\Prob\{\widehat X\neq X\}\geq\frac{11}{16}.
\]
Here the conditional probabilities average over the design seed,
outcomes, and estimator randomness. Hence some fixed pair $(x_*,w_*)$
has conditional error probability at least $11/16$; it may depend on the
design and estimator, but not their realized randomness. For
$\rho_*:=w_*\rho_{x_*}w_*^\dagger\in\cC_{\alpha,d}(L)$,
\begin{align}
 \Prob_{\rho_*}\!\left\{
 \|\widehat\rho-\rho_*\|_{\mathrm{tr}}\geq\frac\beta4
 \right\}&\geq\frac{11}{16},
 \label{eq:fixed-hard-state-probability}\\
 \E_{\rho_*}\|\widehat\rho-\rho_*\|_{\mathrm{tr}}
 &\geq\frac\beta4\,\frac{11}{16}
 =\frac{11\beta}{64}.
 \label{eq:fixed-hard-state-expectation}
\end{align}

\subsection{From packing separation to the minimax rate}
\label{sec:lower-envelope}

At this stage the statistical part is finished.  Three restrictions remain
on the packing separation {\color{black}$\beta$}: it cannot exceed a constant; class
membership requires {\color{black}$\beta\lesssim L{\rtail}^{1-\alpha}$}; and Fano requires
{\color{black}$\beta\lesssim {\rtail}\sqrt{d/T}$}.  Taking the minimum enforces all three
simultaneously.

Now write {\color{black}$M=\lfloor d_\perp/3\rfloor$} and set
{\color{black}\begin{equation}
 \begin{aligned}
 c_\alpha^*&:=\min\{1/4,2^{1-\alpha},c_I\},\\
 \beta_{\rtail}&:=c_\alpha^*
 \min\left\{1,L{\rtail}^{1-\alpha},{\rtail}\sqrt{d_\perp/T}\right\}.
 \end{aligned}
 \label{eq:explicit-bm}
\end{equation}}%
For every {\color{black}$1\leq {\rtail}\leq M$}, this choice satisfies the spectral ordering,
class membership, and information restrictions.  Maximizing
\eqref{eq:fixed-hard-state-expectation} over this finite set gives the
primitive lower bound
{\color{black}\begin{equation}
 \begin{aligned}
 \mathfrak R_T(\cC_{\alpha,d}(L))
 &\geq\frac{11c_\alpha^*}{64}\max_{1\leq {\rtail}\leq M}\\[-2pt]
 &\quad\min\left\{1,\,L{\rtail}^{1-\alpha},\,{\rtail}\sqrt{d_\perp/T}\right\}.
 \end{aligned}
 \label{eq:primitive-lower}
\end{equation}}%
The probability statement \eqref{eq:fixed-hard-state-probability} proves the
claimed constant probability version with the same choice of {\color{black}${\rtail}$}.

The discrete envelope in Lemma~\ref{lem:discrete-lower-envelope}, proved in
Appendix~\ref{app:lower-details}, completes the integer optimization.

Apply the lemma with {\color{black}$\eta=\sqrt{d_\perp/T}$}. The stated hypothesis
{\color{black}$d\geq514$} gives {\color{black}$M\geq(85/257)d$} and
{\color{black}$\sqrt{d_\perp/T}\geq(499/500)\sqrt{d/T}$}.
Comparing the three branches before the integer optimization factor {\color{black}$1/2$}
gives a factor {\color{black}$33/100$}, so
{\color{black}\begin{equation}
 G_M(L,\sqrt{d_\perp/T})\geq\frac{33}{200}
 \min\left\{1,
 L^{1/\alpha}\left(\frac dT\right)^{\frac{\alpha-1}{2\alpha}},
 \sqrt{\frac{d^3}{T}}\right\}.
 \label{eq:lower-D-comparison}
\end{equation}}%
Combining \eqref{eq:primitive-lower} and
\eqref{eq:lower-D-comparison} proves Theorem~\ref{thm:lower}.  For example,
{\color{black}$363c_\alpha^*/12800$} is a valid expected risk constant.
\end{revision}

\section{Dense Matrix Computation}
\label{sec:implementation}\label{sec:polynomial-forward-fit}
The statistical results require feasible fits at the stated objective or
first order tolerances. PLS averages the projectors, applies the inverse
channel once, and projects the eigenvalues onto the probability simplex.
OMD admits a projected subgradient construction. Weighted PLS uses
Frank--Wolfe updates with a smallest eigenvalue vector of the gradient
and stops when its first order gap passes the prescribed tolerance.

In the stated dense arithmetic model, including streamed input, the
sufficient total costs are $O(Td^2+d^3)$ for PLS and weighted PLS and
$O(Td^3+d^4)$ for projected OMD. These bounds suppress spectral precision
factors. The equal total orders for PLS and weighted PLS include data
processing; their costs after aggregation differ. Supplement
Section~\ref*{supp:computational-proofs} gives the exact iteration budgets,
accounting model, storage analysis, finite precision requirements, and
rational SDP construction.

Our numerical OMD solver uses capped ADMM for the same fitting objective.
A run that reaches its cap without passing the gap tests reports
nonconvergence and does not supply an accepted fit for the statistical
theorem. The projected construction supplies the guaranteed optimization
procedure. Supplement Section~\ref*{app:numerical-optimization} gives the
ADMM updates and numerical checks. Appendix~\ref{app:formal-verification}
states the formal verification scope.

\section{Numerical Experiments}
\label{sec:numerical-experiments}
We compare PLS, OMD with statistical stopping, and weighted PLS with Frank--Wolfe
line search on the same measurement data, varying measurement count and dimension. These small block experiments lie outside
\eqref{eq:headline-block-condition} and assess the chosen state's empirical
accuracy and computation.

\subsection{Setup and reported quantities}
The true state is $\rho=\operatorname{diag}(1/8,\ldots,1/8,0,\ldots,0)$,
with eight positive eigenvalues and computational eigenbasis. Each
setting has 100 independent datasets shared by the three procedures.
No solver receives the true state, rank, or eigenbasis.
\reviewchange{These experiments illustrate the selected implementations
on one spectrum and eigenbasis; they do not establish uniform spectral
adaptation or validate the sufficient block size threshold.}
Table~\ref{tab:simulation-design} gives the two sweeps retained here.

\begin{table}[!b]
\caption{Simulation settings; the true rank is eight throughout.}
\label{tab:simulation-design}
\centering\small
\begin{tabular}{lll}
\hline
Sweep & Varied values & Fixed settings\\
\hline
Copies & $T=4^5,4^6,\ldots,4^9$ & $d=256,\ k=2$\\
Dimension & $d=16,64,256,1024$ & $k=2,\ T/d=64$\\
\hline
\end{tabular}
\end{table}

OMD starts from a fresh PLS estimate; weighted PLS starts from $I_d/d$ with
quadratic line search \cite{Jaggi2013}. Their distinct stopping certificates
are \eqref{eq:candidate-two-program} and \eqref{eq:mw-approximate-fit}.
The OMD and weighted PLS procedures are specified in Supplement
Sections~\ref*{supp:omd-procedure} and~\ref*{supp:mw-steps}, respectively.

Error curves show mean trace norm loss with pointwise 95\% paired bootstrap
intervals. Runtime shows medians and interquartile ranges across dataset
medians of three fresh solves after a discarded warmup, including initialization and stopping
checks. \reviewchange{Solver timing starts from the stored empirical
mean and a constructed channel; it excludes observation generation and
aggregation, input loading, channel setup, and independent validation.
The total work bounds include aggregation and setup.}
Memory uses fresh processes; Supplement Section~\ref*{app:simulation-reproducibility}
gives the environment and full protocol.

\reviewchange{Supplement Section~\ref*{supp:accuracy-refinement} also reports
an additional comparison on independent datasets, using the same three
estimators and their stated stopping tolerances.}

\subsection{Increasing the measurement count}
\begin{figure*}[!t]
\centering
\includegraphics[width=\textwidth]{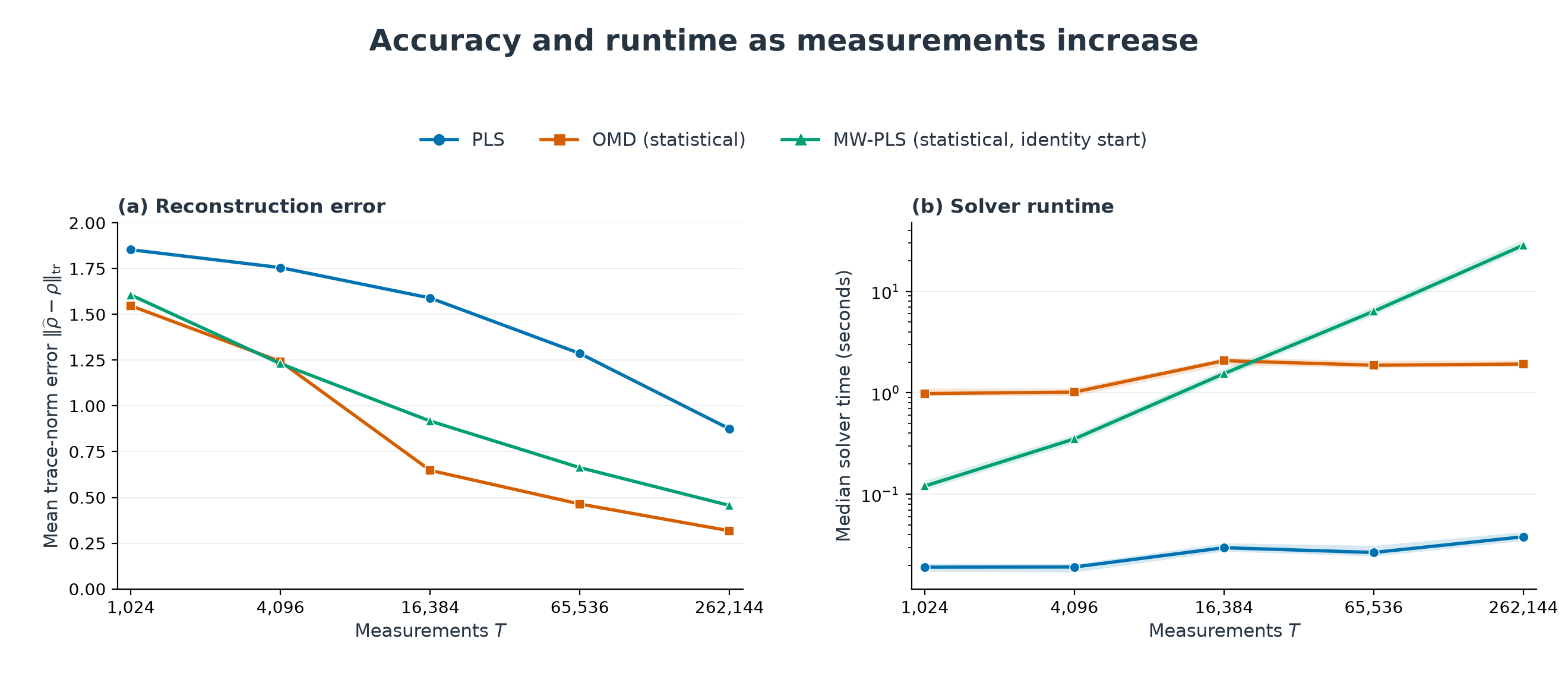}
\caption{Measurement count at $d=256$, $k=2$: mean trace norm error and
median solver time for the three selected procedures. Error bands are
pointwise 95\% bootstrap intervals; runtime bands are interquartile ranges. The
horizontal axes and runtime vertical axis are logarithmic; the error axis
is linear. \reviewchange{Timings include initialization and stopping checks,
but exclude data generation and aggregation, loading, channel setup, and
independent validation.}}
\label{fig:msi-measurements}
\end{figure*}

In Fig.~\ref{fig:msi-measurements}, more observations reduce error for
all three procedures. From $T=1024$ to $262144$, mean trace norm error
falls from $1.853$ to $0.874$ for PLS, from $1.547$ to $0.318$ for OMD,
and from $1.606$ to $0.456$ for weighted PLS. PLS remains fastest, while
OMD and weighted PLS give lower error on these states.

The OMD and weighted PLS errors nearly coincide at $T=4096$, but separate
as $T$ increases. Their objectives differ. Weighted PLS runtime rises from $0.121$ to $28.510$ seconds,
whereas OMD takes about one to two seconds. Here the MW gap tolerance
shrinks as $d/T$ and the OMD tolerance as $\sqrt{d/T}$. The resulting
iteration demands help explain the runtime trend, together with the
different objectives and convergence behavior.

The additional block size sweep is reported in Supplement
Section~\ref*{supp:block-experiment}. \reviewchange{It illustrates
state dependent changes with the measurement design. At the global
endpoint, all three implementations use the common spectral solution,
giving identical estimates and reconstruction errors.}

\subsection{Increasing the dimension}
\begin{figure*}[!t]
\centering
\includegraphics[width=\textwidth]{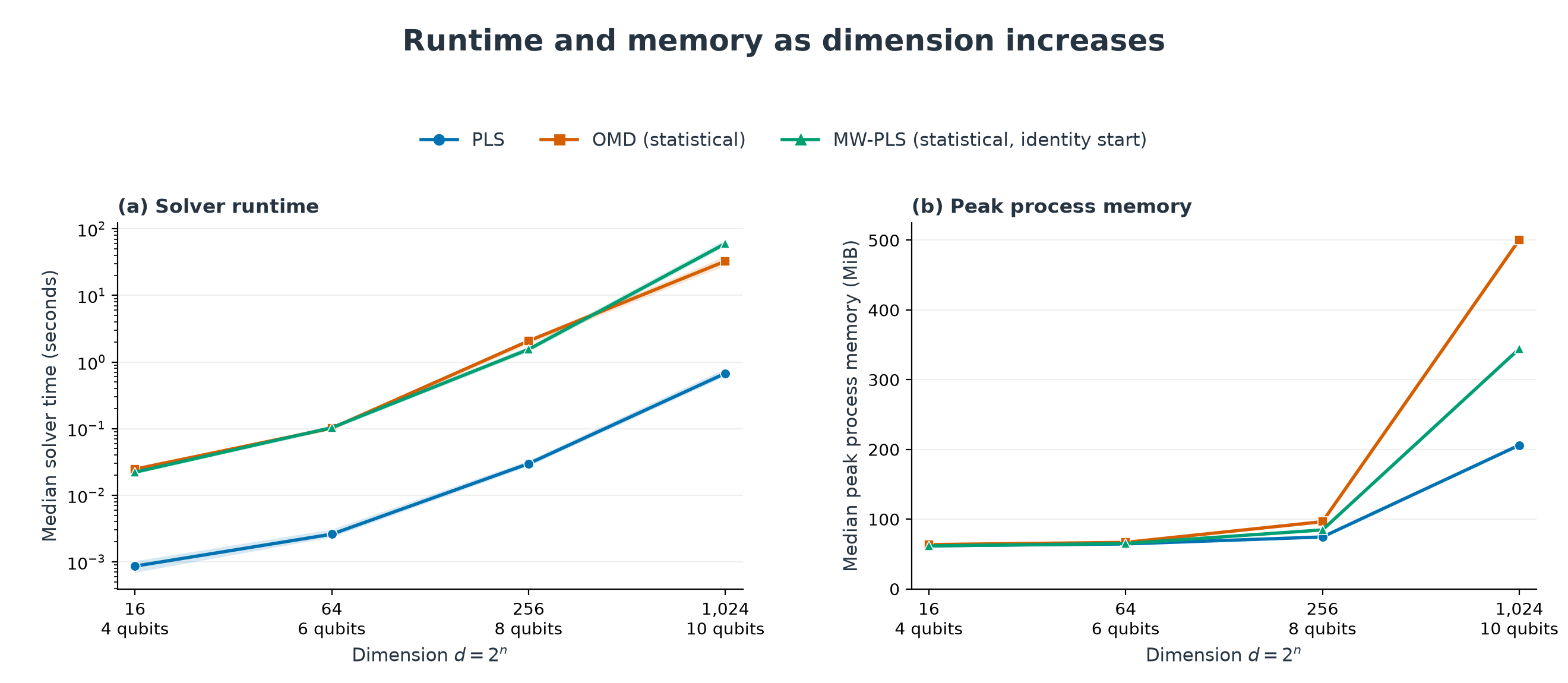}
\caption{Dimension at $k=2$, $T/d=64$: median runtime and peak process
memory, with interquartile ranges across datasets and fresh processes,
respectively. The horizontal axes and runtime vertical
axis are logarithmic; memory is linear. PLS uses one spectral projection;
OMD and weighted PLS perform repeated spectral calculations.
\reviewchange{Solver timings exclude data generation and aggregation,
loading, channel setup, and independent validation.}}
\label{fig:msi-dimension}
\end{figure*}

Figure~\ref{fig:msi-dimension} shows the cost of dense reconstruction.
At $d=1024$, median times for PLS, OMD, and weighted PLS are $0.677$,
$32.693$, and $60.054$ seconds; their mean trace norm errors are
$1.624$, $0.602$, and $0.823$. Median peak memory is $206.2$, $500.7$,
and $344.4$ MiB. Thus weighted PLS uses less memory than this OMD
implementation but takes longer at this setting. At small dimensions,
shared interpreter and library costs make the memory curves nearly
overlap.

The runtime vertical axes are logarithmic: a separation of {\color{black}${\Delta_{\mathrm{axis}}}$}
base ten units represents a time ratio {\color{black}$10^{\Delta_{\mathrm{axis}}}$}. At $d=1024$,
OMD and weighted PLS take about $48$ and $89$ times the PLS time.
The error and memory axes are linear. A logarithmic axis alone does
not establish exponential growth in $d$; polynomial dense matrix
costs in $d=2^n$ are nevertheless exponential in $n$.

These rankings depend on the procedures and tolerances; minimax guarantees
do not order performance on each state. Acceleration and its required gap
checks are discussed in Supplement Sections~\ref*{app:certified-acceleration}
and~\ref*{supp:mw-steps}.

\section{Discussion}
\label{sec:discussion}
OMD and weighted PLS attain adaptive minimax trace risk over spectral tail classes
with the periodic measurements of Cho and Kim.

On the tested states outside the sufficient theorem regime, PLS is fastest;
OMD and weighted PLS give lower error at the chosen tolerances.

\paragraph{Sharp spectral guarantees for PLS}
PLS already attains the rate over the full state space. For global Clifford measurements it also inherits the spectral class
rate from weighted PLS. Sharp adaptation for general nonglobal designs, under
trace norm or Hilbert--Schmidt loss, remains open.

\paragraph{Measurement assumptions}
Sharper versions of the block condition \eqref{eq:headline-block-condition}
would improve finite resource guarantees.
Section~\ref{sec:oracle} also applies to other ensembles satisfying its moment
and conditioning assumptions. Noise, basis reuse, and measurements adapted
to outcomes require separate analysis.

\paragraph{Computation and output representation}
\phantomsection\label{sec:scalable-open-problem}
For OMD, the computational target remains a feasible state
$\widetilde\rho\in\cD_d$ satisfying
\[
 f_T(\widetilde\rho)-f_T^*\leq\min\{1,\sqrt{d/T}\},
\]
without rank or decay inputs; this certificate preserves the risk guarantee.
Weighted PLS has the corresponding target in \eqref{eq:mw-approximate-fit};
PLS requires accurate channel inversion and Frobenius projection.

Writing a dense estimate requires $\Omega(d^2)$ entries, and an arbitrary
pure state has $2d-2$ real parameters. Polynomial scaling in $n$ thus requires extra structure and compressed
output or restricted access.

\appendices
\section{Local Fourth Copy Algebra and Numerical Bounds}
\label{app:local-algebra}
We use the standard four copy Clifford sector set $\Sigma_4$ and
normalization of \cite{GrossNezamiWalter2021,HelsenWalter2023}. Its definition
and count of 30 sectors are established in
\cite[Definition~4.1 and Example~4.27]{GrossNezamiWalter2021}; the row
multiplicities are recorded in \cite[pp.~7--8, arXiv~v2]{HelsenWalter2023}.
Concretely, $\Sigma_4$ consists of the four dimensional subspaces
$\Xi\subseteq\mathbb F_2^4\oplus\mathbb F_2^4$ over $\mathbb F_2$ containing
$(\mathbf1,\mathbf1)$ and satisfying {\color{black}$|{\mathbf b}|-|{\mathbf c}|\equiv0\pmod4$} for every
{\color{black}$({\mathbf b},{\mathbf c})\in\Xi$}, where $\mathbf1=(1,1,1,1)$ and $|\cdot|$ denotes Hamming weight.
We verify the count and intersections by explicit finite enumeration. The only one dimensional subspace containing $(\mathbf1,\mathbf1)$ is
$\operatorname{span}\{(\mathbf1,\mathbf1)\}$.
Starting from it, adjoining all vectors that preserve this condition and
removing duplicates gives
$35,105,30$ subspaces in dimensions $2,3,4$. Their intersections give
the row profile stated below; the accompanying integer checker tests all $900$ ordered pairs.

\begin{revision}
For completeness, we record the standard local averaging formula used in
Lemma~\ref{lem:ck-four-moment-expansion}:
{\color{black}\begin{equation}
 \begin{aligned}
 \twirlfour{}_{\dloc{}}({\oparg{}})&:=\E_{\cliffgate{}}({\cliffgate{}}^{[4]})^*{\oparg{}}{\cliffgate{}}^{[4]}\\
 &=\sum_{\Xi,\Xi'\in\Sigma_4}W_{\dloc{}}(\Xi,\Xi')
   \tr\bigl(R_{\dloc{}}(\Xi)^*{\oparg{}}\bigr)R_{\dloc{}}(\Xi'),
 \end{aligned}
 \label{eq:ck-local-average}
\end{equation}}%
where {\color{black}${\dloc{}}=2^\ell$}, $\ell\geq3$, {\color{black}${\cliffgate{}}$} is uniform on the projective
$\ell$ qubit Clifford group, and {\color{black}${\oparg{}}$} is any four copy operator in site order. Here {\color{black}${\cliffgate{}}^{[4]}=J_D {\cliffgate{}}^{\otimes4}J_D^*$}, as in Section~\ref{sec:fourth-moment-proof}.
This average, also called the Clifford twirl, is the orthogonal projection
for the Hilbert--Schmidt inner product onto the span of the 30 operators
{\color{black}$R_{\dloc{}}(\Xi)$} \cite[Theorem~4.3]{GrossNezamiWalter2021}.
Formula~\eqref{eq:ck-local-average} is the projection formula in a
nonorthogonal basis \cite[Eq.~(18), arXiv~v2]{HelsenWalter2023}.
Its coefficients may have either sign. The signed row sum {\color{black}$\sgnrow{}_\dloc{}$} is the
scalar coefficient in the fourth moment expansion of a Clifford randomized
stabilizer state \cite[Eq.~(19), arXiv~v2]{HelsenWalter2023};
the absolute row sum follows from the explicit inverse below. In our periodic circuit proof,
the signed row sum determines the $U_2$ average
\eqref{eq:ck-four-first-layer}, while the absolute row sum controls the
two layer fourth moment bound.
\end{revision}
The following lemma records an explicit finite calculation within the
standard Clifford framework with 30 sectors in
\cite{GrossNezamiWalter2021,HelsenWalter2023}. The Gram structure and row
multiplicities are known; we include the closed form inverse and absolute
row sum calculation because they are needed for the periodic contraction.

\begin{lemma}[The Gram inverse for 30 sectors]
\label{lem:ck-four-copy-data}
Let $\Sigma_4$ be the set of 30 sectors and write
$h(\Xi,\Xi')=\dim(\Xi\cap\Xi')\in\{1,2,3,4\}$.  On a block of Hilbert space dimension
{\color{black}${\dloc{}}>4$}, the sector Gram matrix and its inverse have the forms
{\color{black}\begin{equation}
\begin{aligned}
 G_{\dloc}(\Xi,\Xi')&=\dloc^{h(\Xi,\Xi')},\\
 W_{\dloc}(\Xi,\Xi')&=w_{h(\Xi,\Xi')}(\dloc).
 \end{aligned}
 \label{eq:ck-four-copy-gram}
\end{equation}}%
where, with
{\color{black}$\Delta_{\dloc{}}=({\dloc{}}^2-1)({\dloc{}}^2-4)({\dloc{}}^2-16)$},
{\color{black}\begin{equation}
\begin{array}{c|c}
 h&w_h(\dloc)\\\hline
 1&-8/(\dloc\Delta_{\dloc})\\
 2&2/\Delta_{\dloc}\\
 3&-(\dloc^2-8)/(\dloc\Delta_{\dloc})\\
 4&(\dloc^2-14)/\Delta_{\dloc}
 \end{array}
 \label{eq:ck-four-copy-inverse}
\end{equation}}%
The scalar coefficient {\color{black}$\sgnrow{}_\dloc{}$} follows from the fourth moment expansion in
\cite[Eq.~(19), arXiv~v2]{HelsenWalter2023}; the subsequent row
calculation there gives the row profile.
Every row contains respectively $8,14,7,1$ entries with
$h=1,2,3,4$. Consequently, for every $\Xi\in\Sigma_4$,
{\color{black}\begin{align}
 &\sgnrow_{\dloc}:=\sum_{\Xi'\in\Sigma_4}W_{\dloc}(\Xi,\Xi')\notag\\
 &\qquad=\frac1{\dloc(\dloc+1)(\dloc+2)(\dloc+4)},
 \label{eq:ck-four-signed-row}\\
 &\absrow_{\dloc}:=\sum_{\Xi'\in\Sigma_4}|W_{\dloc}(\Xi,\Xi')|\notag\\
 &\qquad=\frac1{\dloc(\dloc-1)(\dloc-2)(\dloc-4)}.
 \label{eq:ck-four-absolute-row}
\end{align}}
\end{lemma}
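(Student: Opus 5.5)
The plan is to work entirely inside the finite combinatorics of $\Sigma_4$ and reduce the matrix identity to a small polynomial system in $\dloc$.

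\textbf{Step 1: Gram formula and row profile.} The formula $G_{\dloc}(\Xi,\Xi')=\dloc^{h(\Xi,\Xi')}$ is \eqref{eq:ck-local-gram}. It follows directly from \eqref{eq:ck-sector-operator}: the trace of the one-qubit factor $\sum_{(\mathbf b,\mathbf c)\in\Xi}|\mathbf c\rangle\langle\mathbf b|\sum_{(\mathbf b',\mathbf c')\in\Xi'}|\mathbf b'\rangle\langle\mathbf c'|$ counts the pairs lying in $\Xi\cap\Xi'$, giving $2^{h}$, and the $\ell$-fold tensor power then gives $\dloc^{h}$. The row profile $(8,14,7,1)$ for $h=1,2,3,4$ I take from the explicit enumeration of the $30$ subspaces described above, cross-checked against \cite[pp.~7--8, arXiv~v2]{HelsenWalter2023}.

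A consistency check is immediate. The constant row sum of $G_{\dloc}$ is
\[
8\dloc+14\dloc^2+7\dloc^3+\dloc^4=\dloc(\dloc+1)(\dloc+2)(\dloc+4).
\]
Invertibility of $G_{\dloc}$ for $\dloc>4$ holds because it is the Gram matrix of the $30$ operators $R_{\dloc}(\Xi)$, which are linearly independent for $\ell\geq3$ \cite[Theorem~4.3]{GrossNezamiWalter2021}. Hence the inverse is unique, and it suffices to exhibit one matrix $W$ with $G_{\dloc}W=I$.

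\textbf{Step 2: Verify the inverse, the main obstacle.} I would show that the relations $A_h(\Xi,\Xi')=\mathbf 1\{h(\Xi,\Xi')=h\}$, for $h=1,\ldots,4$, span a commutative algebra closed under multiplication. Concretely, I would establish intersection numbers $p_{ij}^{h}$ such that the count of $\Xi'$ with $h(\Xi,\Xi')=i$ and $h(\Xi',\Xi'')=j$ depends only on $h(\Xi,\Xi'')$. This is checked by the same finite enumeration over all $900$ ordered pairs (and the $27000$ triples).

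Given this, $G_{\dloc}=\sum_h\dloc^hA_h$, and the ansatz $W=\sum_h w_hA_h$ turns $G_{\dloc}W=I$ into a $4\times4$ linear system whose coefficients are polynomials in $\dloc$. I would then substitute the tabulated $w_h$ from \eqref{eq:ck-four-copy-inverse} and verify the four polynomial identities after clearing $\dloc\Delta_{\dloc}$.

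If the relations fail to form an association scheme, the fallback is direct. For each of the $900$ entries, verify $(G_{\dloc}W)(\Xi,\Xi'')=\delta_{\Xi\Xi''}$ as a rational-function identity in $\dloc$. This works because each entry is $\sum_{\Xi'}\dloc^{h(\Xi,\Xi')}w_{h(\Xi',\Xi'')}$, a polynomial identity checkable at enough integer points or symbolically.

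\textbf{Step 3: Row sums.} With the profile $(8,14,7,1)$, the signed row sum is
\[
\frac{-64+28\dloc-7(\dloc^2-8)+\dloc(\dloc^2-14)}{\dloc\Delta_{\dloc}}
=\frac{(\dloc-1)(\dloc-2)(\dloc-4)}{\dloc\Delta_{\dloc}}
=\frac1{\dloc(\dloc+1)(\dloc+2)(\dloc+4)}.
\]
This is \eqref{eq:ck-four-signed-row}. It also agrees with $W\mathbf 1=\mathbf 1/g_{\dloc}$ obtained from the constant row sum of $G_{\dloc}$ in Step 1, which gives an independent cross-check.

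For $\dloc>4$ we have $\Delta_{\dloc}>0$ and $\dloc^2>14$. So the entries with $h=1$ and $h=3$ are negative, and those with $h=2$ and $h=4$ are positive. The absolute row sum is therefore
\[
\frac{64+28\dloc+7\dloc^2-56+\dloc^3-14\dloc}{\dloc\Delta_{\dloc}}
=\frac{(\dloc+1)(\dloc+2)(\dloc+4)}{\dloc\Delta_{\dloc}}
=\frac1{\dloc(\dloc-1)(\dloc-2)(\dloc-4)}.
\]
This is \eqref{eq:ck-four-absolute-row}.
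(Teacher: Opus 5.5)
Your proposal is correct and follows essentially the same route as the paper. The paper likewise reduces $W_{\dloc}G_{\dloc}=I$ to the four polynomial identities $\sum_{i,j}N_h(i,j)w_i(\dloc)\dloc^j=\mathbf 1\{h=4\}$, using intersection counts that depend only on $h$ and are checked by finite enumeration in the Supplement, and then derives both row sums from the profile $(8,14,7,1)$ and the sign pattern $-,+,-,+$ by factorization (your citation for invertibility is harmless but unnecessary, since a one-sided inverse of a square matrix already suffices).
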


\begin{proof}
With $N_h(i,j)$ counting intermediate sectors with intersection dimensions
$i,j$, multiplication reduces {\color{black}$W_{\dloc{}}G_{\dloc{}}=I$} to
{\color{black}$\sum_{i,j}N_h(i,j)w_i({\dloc{}}){\dloc{}}^j=\mathbf1\{h=4\}$}.
The finite table and all four polynomial checks are given in Supplement
\reviewchange{Section~\ref*{supp:inverse}}. The row profile and the signs $-,+,-,+$
then give both row sums by factorization.
\end{proof}

\subsection{Natural logarithm estimate}
For $q\geq8$, the inequalities
{\color{black}$\ln(1+{t_{\log}})\leq {t_{\log}}-{t_{\log}}^2/2+{t_{\log}}^3/3$} for {\color{black}${t_{\log}}\geq0$} and
{\color{black}$-\ln(1-{u_{\log}})\leq {u_{\log}}/(1-{u_{\log}})$} for {\color{black}$0\leq {u_{\log}}<1$} yield
\begin{equation}
 \ln\mathcal A(q)\leq\frac{14}{q}-\frac{21}{q^2}
 +\frac{146}{3q^3}+\frac{21}{q^4(1-16/q^4)}.
 \label{eq:ck-four-log-remainder}
\end{equation}
The last two positive terms are at most $21/q^2$: after multiplication
by $q^2$, their sum is bounded by
$146/24+21/[64(1-1/256)]<21$. This proves
\eqref{eq:ck-four-local-inflation}. All logarithms in this calculation are
natural.

\section{Lower Bound Details}
\label{app:lower-details}
\begin{revision}
\subsection{Trace density representation of a POVM}
\label{sec:povm-density}
Let $\mathsf M$ be a POVM on a standard Borel outcome space
$(\Omega,\mathcal F)$ with $\mathsf M(\Omega)=I_d$. For
$\mathcal E\in\mathcal F$, define the finite positive measure
$\mu(\mathcal E)=\tr\mathsf M(\mathcal E)$; then $\mu(\Omega)=d$.
If $\mu(\mathcal E)=0$, positivity forces $\mathsf M(\mathcal E)=0$.
Each matrix entry measure of $\mathsf M$ is therefore absolutely
continuous with respect to $\mu$. Its Radon--Nikodym derivative defines
a measurable matrix $F(z)$ such that
\[
 \mathsf M(\mathcal E)=\int_{\mathcal E}F(z)\,d\mu(z)
 \qquad(\mathcal E\in\mathcal F).
\]
Hermitian symmetry of the entry measures gives $F=F^*$ almost everywhere.
For each vector with rational real and imaginary coordinates, positivity
of $v^*\mathsf M(\cdot)v$ implies $v^*F(z)v\geq0$ almost everywhere.
Intersecting these countably many full measure sets and using continuity
in $v$ proves $F(z)\succeq0$ almost everywhere. Finally,
$\tr\mathsf M=\mu$ implies $\tr F=1$ almost everywhere. Redefine
$F=I_d/d$ on the null exceptional set if necessary. This proves the
trace density representation used below without a measurable choice of
eigenvectors.
\end{revision}

\begin{revision}
\subsection{Proof of Lemma~\ref{lem:one-copy-chi-square}}
\label{sec:one-copy-details}
\begin{proof}
Recall the subspaces $H$ and $K$ from Section~\ref{sec:lower-packing},
with dimensions $2$ and $d_\perp=d-2$, and the matrices $\rho_0,A_x$
from \eqref{eq:reference-state}. They satisfy $\rho_0\succ0$,
$\rho_x=\rho_0+\beta A_x$, and $\tr A_x=0$.
For the fixed POVM $\mathsf M$, use Appendix~\ref{sec:povm-density}
to write $\mathsf M(\mathcal E)=\int_{\mathcal E}F\,d\mu$, where
$\mu(\mathcal E)=\tr\mathsf M(\mathcal E)$, $F(z)\succeq0$,
and $\tr F(z)=1$ almost everywhere. In particular, $\mu(\Omega)=d$.
By Born's rule, the two outcome densities are
\[
 \begin{aligned}
 p_{x,W}(z)&=\tr\bigl(F(z)W\rho_xW^\dagger\bigr),\\
 p_{0,W}(z)&=\tr\bigl(F(z)W\rho_0W^\dagger\bigr).
 \end{aligned}
\]
Thus $\int_{\mathcal E}p_{x,W}\,d\mu$ is the probability of an outcome in
$\mathcal E$ under $W\rho_xW^\dagger$, and similarly for $p_{0,W}$.
Since $\rho_0\succ0$, $p_{0,W}(z)>0$ almost everywhere. Throughout,
$\E_W$ averages over the Haar rotation $W$.

First bound a fixed rank one effect
{\color{black}$E_z=w_z|v_z\rangle\!\langle v_z|$}, with {\color{black}$w_z>0$} and {\color{black}$\langle v_z|v_z\rangle=1$}.
Put {\color{black}$|\psi\rangle=W^\dagger|v_z\rangle$} and define
{\color{black}\[
 S=\bigl\|\Pi_H|\psi\rangle\bigr\|_2^2,
 \qquad |u\rangle=\frac{\Pi_K|\psi\rangle}{\sqrt{1-S}}.
\]}%
Then $S\sim\operatorname{Beta}(2,d_\perp)$, and $|u\rangle$ is uniform on the
unit sphere of $K$, independently of $S$. The vector $|u\rangle$ is defined
almost surely; write $\E_S$ and $\E_u$ for expectations over these two
variables. Define
{\color{black}$q_{0,z}:=\langle\psi|\rho_0|\psi\rangle$} and
{\color{black}$\delta_{x,z}:=\beta\langle\psi|A_x|\psi\rangle$}.
The Born probabilities (or outcome densities) under {\color{black}$W\rho_0W^\dagger$}
and {\color{black}$W\rho_xW^\dagger$} are {\color{black}$w_zq_{0,z}$} and
{\color{black}$w_z(q_{0,z}+\delta_{x,z})$}, respectively.
Since {\color{black}$A_x$} is supported on {\color{black}$K$}, these definitions give
{\color{black}\begin{align}
 q_{0,z}&=\frac{1-\beta}{2}S+\frac \beta{d_\perp}(1-S),
 \label{eq:q0}\\
\delta_{x,z}&=\beta(1-S)\langle u|A_x|u\rangle.
\label{eq:delta-x}
\end{align}}%
Since $\tr A_x=0$, the Haar projective second moment \cite{Mele2024} and integration against
the beta density {\color{black}$d_\perp(d_\perp+1)s(1-s)^{d_\perp-1}$} give
{\color{black}\begin{align}
 \E_u\langle u|A_x|u\rangle^2
 &=\frac{\tr(A_x^2)}{d_\perp(d_\perp+1)},
 \label{eq:spherical-moment}\\
 \E\frac{(1-S)^2}{S}&=\frac{d_\perp(d_\perp+1)}d.
 \label{eq:beta-moment}
\end{align}}%
The two dimensional head supplies the factor $s$ in this beta density,
which cancels the inverse factor $1/s$ in \eqref{eq:beta-moment}.
Since $q_{0,z}\geq(1-\beta)S/2$, independence and the two moment
identities give
\begin{equation}
 \begin{aligned}
 &\E_W\frac{(w_z\delta_{x,z})^2}{w_zq_{0,z}}\\
 &\quad\leq\frac{2w_z\beta^2}{1-\beta}\,
     \E_S\frac{(1-S)^2}{S}\,\E_u\langle u|A_x|u\rangle^2\\
 &\quad=\frac{2w_z\beta^2}{1-\beta}\,
     \frac{d_\perp(d_\perp+1)}d\,
     \frac{\tr(A_x^2)}{d_\perp(d_\perp+1)}\\
 &\quad=\frac{2w_z\beta^2}{(1-\beta)d}\tr(A_x^2).
 \end{aligned}
 \label{eq:effect-bound}
\end{equation}
The second inequality in Lemma~\ref{lem:one-copy-chi-square} follows from
$\beta\leq1/4$ and $\tr(A_x^2)\leq1/{\rtail}$ in \eqref{eq:A-second-moment}:
\[
 \frac{2\beta^2}{1-\beta}\tr(A_x^2)
 \leq\frac{2\beta^2}{3/4}\,\frac1{\rtail}
 =\frac{8\beta^2}{3{\rtail}}.
\]
To prove the first inequality, fix a pointwise spectral decomposition
{\color{black}\[
 F(z)=\sum_i\lambda_i|v_i\rangle\!\langle v_i|,
 \qquad\lambda_i\geq0,\quad\sum_i\lambda_i=1.
\]}%
The vectors {\color{black}$v_i$} are orthonormal, and their dependence on {\color{black}$z$} is
suppressed. For each unit rank one effect $|v_i\rangle\langle v_i|$, define
the reference probability and perturbation by
\[
 \begin{aligned}
 q_{i,z}&:=\langle v_i|W\rho_0W^\dagger|v_i\rangle>0,\\
 \delta_{i,z}&:=\beta\langle v_i|WA_xW^\dagger|v_i\rangle.
 \end{aligned}
\]
Their dependence on $W$ is suppressed as well.
Thus {\color{black}$p_{0,W}(z)=\sum_i\lambda_iq_{i,z}$} and
{\color{black}$p_{x,W}(z)-p_{0,W}(z)=\sum_i\lambda_i\delta_{i,z}$}.
Weighted Cauchy--Schwarz gives, at each fixed {\color{black}$z$},
{\color{black}\[
 \frac{(\sum_i\lambda_i\delta_{i,z})^2}
      {\sum_i\lambda_iq_{i,z}}
 \leq\sum_i\lambda_i\frac{\delta_{i,z}^2}{q_{i,z}}.
\]}%
No measurable choice of eigenvectors is needed: by Haar invariance,
{\color{black}$\E_W(\delta_{i,z}^2/q_{i,z})$} has the same value for every unit {\color{black}$v_i$}.
Thus its sum weighted by {\color{black}$\lambda_i$} is a constant multiple of the
measurable function {\color{black}$\tr F(z)$}.
Using the definition \eqref{eq:information-divergences}, Tonelli's
theorem, the rank one bound, and {\color{black}$\mathsf M(\Omega)=I_d$}, we conclude
{\color{black}\[
 \begin{aligned}
 &\E_W\chi^2(p_{x,W}\|p_{0,W})\\
 &\quad=\E_W\int_\Omega
       \frac{(p_{x,W}(z)-p_{0,W}(z))^2}{p_{0,W}(z)}\,d\mu(z)\\
 &\quad\leq\int_\Omega\sum_i\lambda_i\E_W
       \frac{\delta_{i,z}^2}{q_{i,z}}\,d\mu(z)\\
 &\quad\leq\int_\Omega\sum_i\lambda_i
       \frac{2\beta^2}{(1-\beta)d}\tr(A_x^2)\,d\mu(z)\\
 &\quad=\frac{2\beta^2}{(1-\beta)d}\tr(A_x^2)
       \int_\Omega\tr F(z)\,d\mu(z)\\
 &\quad=\frac{2\beta^2}{(1-\beta)d}\tr(A_x^2)
       \tr\mathsf M(\Omega)\\
 &\quad=\frac{2\beta^2}{1-\beta}\tr(A_x^2).
 \end{aligned}
\]}
\end{proof}
\end{revision}

\subsection{Proof of Lemma~\ref{lem:nonadaptive-fano}}
\label{sec:nonadaptive-fano-proof}
\begin{proof}
We use the randomized nonadaptive protocol of Section~\ref{sec:risks}.
The hidden label $X$ is uniform on $\{1,\ldots,N\}$, the shared rotation
$W$ has Haar law, and the public design seed $R$ has its prescribed law
$\pi$. These three random variables are mutually independent. For every
copy index $t\in\{1,\ldots,T\}$ and seed value $r$, the selected POVM is
$M_t(r)$ and its outcome is denoted by $Y_t$. Write
$Y^t=(Y_1,\ldots,Y_t)$, with $Y^0$ the empty record.
The measurement choices depend on $R$, and $W$ is revealed only after
all measurements have been completed.

\emph{Outcome laws and conditional information.}
Write $\Omega$ for the outcome space and $\mathcal E$ for a measurable
subset. For fixed $r,t$, use the common trace measure
$\mu_{r,t}(\mathcal E)=\tr M_t(r)(\mathcal E)$ from
Appendix~\ref{sec:povm-density}. Let $p_{x,w,r,t}$ and $p_{0,w,r,t}$
be the outcome densities of $M_t(r)$ under $w\rho_xw^\dagger$ and
$w\rho_0w^\dagger$, respectively. The latter is positive almost
everywhere because $\rho_0\succ0$. Conditional on $(W,R)$, the
uniform mixture density is
\[
 \bar p_{W,R,t}(z):=\frac1N\sum_{x=1}^N p_{x,W,R,t}(z).
\]
All logarithms are natural. For random variables $V_1,V_2,Z$ on
standard Borel spaces, conditional mutual information means
\[
 I(V_1;V_2\mid Z)
 :=\E_Z D_{\KL}\!\left(
 \Prob_{V_1,V_2\mid Z}\,\middle\|\,
 \Prob_{V_1\mid Z}\otimes \Prob_{V_2\mid Z}\right),
\]
where $\Prob_{\cdot\mid Z}$ denotes a regular conditional law. KL divergence
between measures is computed using their densities relative to a common
dominating measure, as in \eqref{eq:information-divergences}. Without
conditioning, $I(V_1;V_2):=D_{\KL}(\Prob_{V_1,V_2}\|\Prob_{V_1}\otimes\Prob_{V_2})$.
For our uniform label, this specializes to
\[
 \begin{aligned}
 &I(X;Y_t\mid W,R)\\
 &\quad=\frac1N\sum_{x=1}^N\E_{W,R}
 D_{\KL}(p_{x,W,R,t}\|\bar p_{W,R,t}).
 \end{aligned}
\]
The same definition with the conditional joint density of $Y^T$ gives
$I(X;Y^T\mid W,R)$. The relevant conditional laws and their
Radon--Nikodym derivatives can be chosen measurably on the standard Borel
spaces; no measurable selection of POVM eigenvectors is used.

\emph{From one copy chi square to information.}
For fixed $(w,r,t)$, expanding the logarithm gives the information radius
identity
\[
 \begin{aligned}
 &\frac1N\sum_x D_{\KL}(p_{x,w,r,t}\|p_{0,w,r,t})\\
 &=\frac1N\sum_x D_{\KL}(p_{x,w,r,t}\|\bar p_{w,r,t})
   +D_{\KL}(\bar p_{w,r,t}\|p_{0,w,r,t}).
 \end{aligned}
\]
Nonnegativity of the last term, $D_{\KL}\leq\chi^2$, and
\eqref{eq:nonadaptive-chi-budget}, applied to each fixed POVM $M_t(r)$,
yield
\begin{equation}
 \begin{aligned}
 I(X;Y_t\mid W,R)
 &\leq\frac1N\sum_x\E_{W,R}
       D_{\KL}(p_{x,W,R,t}\|p_{0,W,R,t})\\
 &\leq\frac1N\sum_x\E_R\E_W
       \chi^2(p_{x,W,R,t}\|p_{0,W,R,t})\leq\kappa.
 \end{aligned}
 \label{eq:one-copy-information}
\end{equation}
Independence of $W$ and $R$ is used here: after fixing $R=r$, the
rotation still has Haar law. The finite bound also justifies all
information radius integrals.

\emph{Accumulating information over the copies.}
Conditional on $(X,W,R)$, nonadaptivity and the independent prepared
copies give the product density
\[
 p_{x,W,R}^{(T)}(y_1,\ldots,y_T)
 =\prod_{t=1}^T p_{x,W,R,t}(y_t)
\]
relative to $\bigotimes_{t=1}^T\mu_{R,t}$. In particular,
$I(Y_t;Y^{t-1}\mid X,W,R)=0$. The interaction information identity
\cite[Chapter~2]{CoverThomas2006} reads
\[
 \begin{aligned}
 I(X;Y_t\mid W,R,Y^{t-1})
 &=I(X;Y_t\mid W,R)\\
 &\quad-I(Y_t;Y^{t-1}\mid W,R)\\
 &\quad+I(Y_t;Y^{t-1}\mid X,W,R).
 \end{aligned}
\]
These terms are finite: conditional dependence between the copies is
mediated by the finite label $X$, whose entropy is at most $\ln N$.
The last term vanishes and the middle term is nonnegative. The chain
rule therefore gives
\begin{equation}
 \begin{aligned}
 I(X;Y^T\mid W,R)
 &=\sum_{t=1}^T I(X;Y_t\mid W,R,Y^{t-1})\\
 &\leq\sum_{t=1}^T I(X;Y_t\mid W,R)\leq T\kappa.
 \end{aligned}
 \label{eq:information-tensorization}
\end{equation}
Since $X$ is independent of $(W,R)$,
$I(X;Y^T,W,R)=I(X;Y^T\mid W,R)$. This uses the conditional independence
principle in \cite[Corollary~2.5 and Section~4]{LoweNayak2025}; here we
condition on the shared rotation and seed and allow standard Borel
outcome spaces.

The same bound follows by applying the information radius inequality
directly to the full record:
\[
 \begin{aligned}
 &I(X;Y^T,W,R)\\
 &\quad\leq\frac1N\sum_x\E_{W,R}
 D_{\KL}\!\left(p_{x,W,R}^{(T)}\|p_{0,W,R}^{(T)}\right)\\
 &\quad=\frac1N\sum_x\sum_{t=1}^T\E_{W,R}
 D_{\KL}(p_{x,W,R,t}\|p_{0,W,R,t})\\
 &\quad\leq T\kappa.
 \end{aligned}
\]
Here $p_{0,W,R}^{(T)}:=\prod_{t=1}^T p_{0,W,R,t}$ is the conditional
reference density. The product identity is applied before averaging
over the shared rotation.

\emph{Fano's inequality.}
For a finite label, write
\[
 H(X\mid V):=-\E_V\sum_x \Prob\{X=x\mid V\}\ln \Prob\{X=x\mid V\},
\]
with $0\ln0=0$. Every possibly randomized decoder of the record
$V=(Y^T,W,R)$ gives a Markov chain $X\to V\to\widehat X$.
Data processing and conditioning on the decoder's error indicator give
\[
 \begin{aligned}
 H(X\mid V)&\leq H(X\mid\widehat X)\\
 &\leq\ln2+\Prob\{\widehat X\neq X\}\ln(N-1)\\
 &\leq\ln2+\Prob\{\widehat X\neq X\}\ln N.
 \end{aligned}
\]
Indeed, the binary error indicator has entropy at most $\ln2$; given
$\widehat X$, a correct decision determines the label, whereas an
incorrect decision leaves at most $N-1$ possibilities. Since $H(X)=\ln N$ and
$I(X;Y^T,W,R)=\ln N-H(X\mid Y^T,W,R)$, rearrangement proves
\begin{equation}
 \begin{aligned}
 \Prob\{\widehat X\neq X\}
 &\geq1-\frac{I(X;Y^T,W,R)+\ln2}{\ln N}\\
 &\geq1-\frac{T\kappa+\ln2}{\ln N}.
 \end{aligned}
 \label{eq:fano-reader-form}
\end{equation}
Under $T\kappa\leq(\ln N)/16$ and $\ln N\geq4\ln2$, the last
expression is at least $1-1/16-1/4=11/16$.
\end{proof}

\emph{Application and conversion to a fixed tomography state.}
Lemma~\ref{lem:one-copy-chi-square} supplies $\kappa=8\beta^2/(3{\rtail})$.
With $c_I$ from \eqref{eq:fano-ci}, the condition
$\beta\leq c_I{\rtail}\sqrt{d_\perp/T}$ gives
\begin{equation}
 T\kappa=\frac{8T\beta^2}{3{\rtail}}
 \leq\frac{8c_I^2{\rtail}d_\perp}{3}
 =\frac{\gamma_0{\rtail}d_\perp}{16}\leq\frac{\ln N}{16}.
 \label{eq:fano-information}
\end{equation}
Also, $d\geq514$ and ${\rtail}\geq1$ give
$\ln N\geq\gamma_0{\rtail}d_\perp\geq25/9>4\ln2$.
For each realized $W=w$, choose the smallest index minimizing
$\|\widehat\rho-w\rho_xw^\dagger\|_{\mathrm{tr}}$.
Continuity of the trace norm and finiteness of the packing make this
choice measurable. If its output is $x'\neq X$, the triangle inequality
and the minimizing property imply
\[
 \frac\beta2
 \leq\|w\rho_{x'}w^\dagger-w\rho_Xw^\dagger\|_{\mathrm{tr}}
 \leq2\|\widehat\rho-w\rho_Xw^\dagger\|_{\mathrm{tr}}.
\]
Thus a label error forces trace loss at least $\beta/4$.
Let $e(x,w)$ be the probability of this trace loss event at the fixed
state $w\rho_xw^\dagger$, including the design seed, outcomes, and
estimator randomness. The preceding bound gives
\[
 \frac1N\sum_x\E_W e(x,W)\geq\frac{11}{16}.
\]
The measurable function $e$ takes values in $[0,1]$. If it were strictly
below $11/16$ for every pair $(x,w)$, its average would be strictly below
$11/16$. Hence some fixed pair $(x_*,w_*)$ satisfies
$e(x_*,w_*)\geq11/16$. Its state $\rho_*=w_*\rho_{x_*}w_*^\dagger$
belongs to the spectral class by unitary invariance and is fixed before
the design seed and outcomes are sampled. Finally, nonnegativity of the
trace loss gives
\[
 \E_{\rho_*}\|\widehat\rho-\rho_*\|_{\mathrm{tr}}
 \geq\frac\beta4\Prob_{\rho_*}\!\left\{
 \|\widehat\rho-\rho_*\|_{\mathrm{tr}}\geq\frac\beta4\right\}
 \geq\frac{11\beta}{64}.
\]
This proves both \eqref{eq:fixed-hard-state-probability} and
\eqref{eq:fixed-hard-state-expectation} for the same fixed state.

\subsection{Packing and integer optimization}
\label{sec:lower-packing-details}

\begin{lemma}[Dimension agnostic Grassmann packing]
\label{lem:grassmann-packing}
For integers {\color{black}$d_\perp\geq3$} and {\color{black}$1\leq {\rtail}\leq d_\perp/3$}, there are rank {\color{black}${\rtail}$}
orthogonal projectors {\color{black}$P_1,\ldots,P_N$} on {\color{black}$\C^{d_\perp}$} such that
{\color{black}\begin{equation}
 \begin{aligned}
 \ln N&\geq\gamma_0{\rtail} d_\perp,\\
 \|P_x-P_{x'}\|_{\mathrm{tr}}&\geq\frac {\rtail}2\quad(x\neq x'),\\
 \gamma_0&:=\frac{25}{4608}.
 \end{aligned}
 \label{eq:projector-packing}
\end{equation}}
\end{lemma}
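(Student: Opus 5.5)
The plan is to specialize the metric Grassmann packing of \cite[Lemma~2, Eq.~(19)]{FlammiaGrossLiuEisert2012}, tracking constants so that they do not depend on the dimension. That result gives rank $\rtail$ projectors on $\C^{d_\perp}$, with $\rtail\leq d_\perp/2$, that are pairwise separated in Frobenius norm by a constant multiple of $\sqrt{\rtail}$. The log-cardinality is at least a constant times $\rtail(d_\perp-\rtail)$. Under the hypothesis $\rtail\leq d_\perp/3$ we have $d_\perp-\rtail\geq\tfrac23 d_\perp$, so the cardinality exponent becomes $\gtrsim\rtail d_\perp$ with an explicit constant. I would redo this volumetric argument directly rather than quote constants, so that $\gamma_0=25/4608$ comes out exactly.

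\textbf{Step 1: separation in the correct norm.} For rank $\rtail$ projectors, $P_x-P_{x'}$ has rank at most $2\rtail$ and trace zero. Its eigenvalues come in pairs $\pm\sin\theta_i$, where the $\theta_i$ are principal angles. Hence
\[
 \|P_x-P_{x'}\|_{\mathrm{tr}}=2\sum_i\sin\theta_i\geq2\sum_i\sin^2\theta_i=\|P_x-P_{x'}\|_\F^2 .
\]
It therefore suffices to build a family with $\|P_x-P_{x'}\|_\F^2\geq\rtail/2$, that is, Frobenius separation $\sqrt{\rtail/2}$.

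\textbf{Step 2: a maximal packing counted by Haar measure.} Take a maximal family of points in the Grassmannian $\mathrm{Gr}(\rtail,\C^{d_\perp})$ with pairwise $\|P-P'\|_\F^2\geq\rtail/2$. By maximality, the Frobenius balls of squared radius $\rtail/2$ around these points cover the Grassmannian. Consequently,
\[
 N\geq1\Big/\sup_{P_0}\Prob_{\mathrm{Haar}}\{\|P-P_0\|_\F^2<\rtail/2\}.
\]
Since $\|P-P_0\|_\F^2=2\rtail-2\tr(PP_0)$, this event is $\{\tr(PP_0)>3\rtail/4\}$. Meanwhile $\E\tr(PP_0)=\rtail^2/d_\perp\leq\rtail/3$.

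\textbf{Step 3: the tail bound, which is the main obstacle.} I need
\[
 \Prob\{\tr(PP_0)-\E\tr(PP_0)>\tfrac{5}{12}\rtail\}\leq e^{-\gamma_0\rtail d_\perp}.
\]
My first attempt is concentration of measure on $\mathrm U(d_\perp)$: write $P=VP_0V^*$ with $V$ Haar, and note that $f(V)=\tr(VP_0V^*P_0)$ is Lipschitz in the Frobenius (Hilbert--Schmidt) metric with constant $2\sqrt{\rtail}$. The Gromov--Milman log-Sobolev concentration on the unitary group then gives a bound of the form $\exp(-c\,d_\perp t^2/L^2)$ with $t=5\rtail/12$, $L^2=4\rtail$, yielding an exponent $c\cdot\tfrac{25}{576}\rtail d_\perp$. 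If I take $c=1/8$, or whatever constant the cited bound supplies (concentration on $\mathrm{SU}$ or $\mathrm{U}$ with curvature $d_\perp/2$, possibly adjusted for the center), I get $\gamma_0=25/4608$. This strongly suggests that this is exactly the intended computation: $4608=8\cdot576$ and $25=5^2$.

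Two details need care here: the center of $\mathrm U(d_\perp)$, which I would handle by noting that $f$ is invariant under phases and working on $\mathrm{SU}(d_\perp)$, and obtaining the sharp Lipschitz constant. Finally, I ensure $N\geq2$. This is automatic, since $\gamma_0\rtail d_\perp\geq3\gamma_0>0$ already makes $N>1$, and the count is an integer.
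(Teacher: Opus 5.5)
Your proof is correct. It differs from the paper's mainly in that it reproves, rather than cites, the underlying packing estimate.

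\textbf{What the paper does.} It takes the metric half of \cite[Lemma~2, Eq.~(19)]{FlammiaGrossLiuEisert2012} as a black box: a Haar random conjugate by a real rotation in $\mathsf{SO}(d_\perp)$ lies at normalized half trace distance at least $\varepsilon$ from each of fewer than $\exp\{c(\varepsilon)\widetilde m d_\perp\}$ given projectors, where $c(\varepsilon)=\frac{\ln(8/\pi)}{2\widetilde m d_\perp}+\frac1{32}[(1-\widetilde m/d_\perp)-\varepsilon]^2$. It then discards the logarithmic term and sets $\varepsilon=1/4$. Using $\widetilde m/d_\perp\le1/3$ it gets $c(1/4)\ge\frac1{32}(5/12)^2=\gamma_0$, and it selects $N=\lceil e^{\gamma_0\widetilde m d_\perp}\rceil$ projectors greedily.

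\textbf{What you do differently.} You replace the greedy selection by a maximal packing with a union bound, which is equivalent. You get trace separation from Frobenius separation because the spectrum of $P-P'$ lies in $[-1,1]$. You also prove the overlap tail yourself. Your ingredients $\E\tr(PP_0)=\widetilde m^2/d_\perp$, the Lipschitz constant $2\sqrt{\widetilde m}$, and the deviation $5\widetilde m/12$ are precisely those inside the cited lemma. Its factor $\frac1{32}$ corresponds exactly to your $c=1/8$. This is consistent with a concentration bound on $\mathsf{SO}(d_\perp)$ with prefactor $\sqrt{\pi/8}$, which would produce the $\ln(8/\pi)$ term.

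\textbf{Fix the constant in Step~3.} You should not leave it as ``whatever constant the cited bound supplies''; commit to $\mathrm{SU}(d_\perp)$. This is legitimate because $f(e^{i\phi}V)=f(V)$. With the Hilbert--Schmidt metric, $\mathrm{SU}(d_\perp)$ has Ricci curvature $d_\perp/2$. Bakry--\'Emery and Herbst then give $\Prob\{f\ge\E f+t\}\le\exp(-d_\perp t^2/(4L^2))$, that is, $c=1/4$, and hence $\ln N\ge\frac{25}{2304}\widetilde m d_\perp=2\gamma_0\widetilde m d_\perp$. The reduction to $\mathrm{SU}$ is not cosmetic. The usual bound on the full unitary group, $\exp(-d_\perp t^2/(12L^2))$, loses a constant because of the center and gives only $\frac{25}{6912}\widetilde m d_\perp$, which falls short of $\gamma_0$.

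\textbf{Comparison.} The paper's route is a short citation that inherits the quoted constant and, through real rotations, applies to every integer $d_\perp$. Yours is self-contained and does not depend on the exact form of that constant. Once Step~3 is pinned to $\mathrm{SU}(d_\perp)$, it even improves $\gamma_0$ by a factor of two.
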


\begin{proof}
We use only the metric half of
{\color{black}\cite[Lemma~2, Eq.~(19)]{FlammiaGrossLiuEisert2012}}; its Pauli flatness conclusion is
irrelevant.  That proof, applied to normalized rank {\color{black}${\rtail}$} projectors on
{\color{black}$\C^{d_\perp}$}, says that from any list of
{\color{black}$s<\exp\{c(\varepsilon){\rtail} d_\perp\}$} such projectors one can choose another one
whose half trace norm distance from every member is at least {\color{black}$\varepsilon$},
where
{\color{black}\[
 c(\varepsilon)=\frac{\ln(8/\pi)}{2{\rtail} d_\perp}
 +\frac1{32}\left[(1-{\rtail}/d_\perp)-\varepsilon\right]^2.
\]}%
The metric proof samples real orthogonal conjugates using Haar measure on
{\color{black}$\mathsf{SO}(d_\perp)$}; these are also complex Hermitian projectors, so this
part is valid for every integer {\color{black}$d_\perp$}.

Take {\color{black}$\varepsilon=1/4$}.  Since {\color{black}${\rtail}/d_\perp\leq1/3$},
{\color{black}\[
 c(1/4)\geq\frac1{32}\left(\frac23-\frac14\right)^2
 =\frac{25}{4608}=\gamma_0.
\]}%
Set {\color{black}$N=\lceil e^{\gamma_0{\rtail} d_\perp}\rceil$}.  Construct the projectors
successively.  Before selecting the last one, the number already chosen is
at most
{\color{black}\[
 N-1<e^{\gamma_0{\rtail} d_\perp}\leq e^{c(1/4){\rtail} d_\perp},
\]}%
so the strict cardinality hypothesis of the quoted metric lemma holds at
every step.  It gives
{\color{black}$(2{\rtail})^{-1}\|P_x-P_{x'}\|_{\mathrm{tr}}\geq1/4$}, while the definition of {\color{black}$N$} gives
{\color{black}$\ln N\geq\gamma_0{\rtail} d_\perp$}.  This proves \eqref{eq:projector-packing}.
\end{proof}

\begin{lemma}[Discrete lower envelope]
\label{lem:discrete-lower-envelope}
For an integer {\color{black}$M\geq1$}, {\color{black}$L\geq1$}, {\color{black}$\eta>0$}, and {\color{black}$\alpha>1$}, put
{\color{black}\[
 G_M(L,\eta):=\max_{1\leq {\rtail}\leq M}
 \min\{1,L{\rtail}^{1-\alpha},{\rtail}\eta\}
\]}%
Then
{\color{black}\begin{equation}
 \begin{aligned}
 G_M(L,\eta)&\geq\frac12\min\{1,L^{1/\alpha}\eta^{(\alpha-1)/\alpha},M\eta\},\\
 G_M(L,\eta)&\leq\min\{1,L^{1/\alpha}\eta^{(\alpha-1)/\alpha},M\eta\}.
 \end{aligned}
 \label{eq:discrete-optimization}
\end{equation}}
\end{lemma}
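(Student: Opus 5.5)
The upper bound is immediate, and the lower bound comes from exhibiting one good integer $\rtail$ in a few cases. Write $\Phi:=\min\{1,L^{1/\alpha}\eta^{(\alpha-1)/\alpha},M\eta\}$ and $g(\rtail):=\min\{1,L\rtail^{1-\alpha},\rtail\eta\}$.

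\emph{Upper bound.} Fix $1\leq\rtail\leq M$. First, $g(\rtail)\leq1$. Second, $g(\rtail)\leq\rtail\eta\leq M\eta$. Third, $\min\{L\rtail^{1-\alpha},\rtail\eta\}\leq(L\rtail^{1-\alpha})^{1/\alpha}(\rtail\eta)^{(\alpha-1)/\alpha}=L^{1/\alpha}\eta^{(\alpha-1)/\alpha}$, because a minimum is at most any weighted geometric mean and the powers of $\rtail$ cancel. Taking the maximum over $\rtail$ gives $G_M(L,\eta)\leq\Phi$.

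\emph{Lower bound.} The continuous balance point of $L\rtail^{1-\alpha}$ and $\rtail\eta$ is $\rtail_*=(L/\eta)^{1/\alpha}$, where both equal $L^{1/\alpha}\eta^{(\alpha-1)/\alpha}$. I split into cases on where $\rtail_*$ falls.

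\emph{Case 1: $\rtail_*\leq1$.} Then $L\leq\eta$, so $\eta\geq1$. Take $\rtail=1$: $g(1)=\min\{1,L,\eta\}=1\geq\Phi$, since $L\geq1$.

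\emph{Case 2: $\rtail_*\geq M$.} Take $\rtail=M$. Here $L M^{1-\alpha}\geq M\eta$, since $\rtail_*^\alpha\geq M^\alpha$ means $L\geq M^\alpha\eta$. So $g(M)=\min\{1,M\eta\}$. Also $M\eta\leq L^{1/\alpha}\eta^{(\alpha-1)/\alpha}$ in this case, so $g(M)=\Phi$.

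\emph{Case 3: $1<\rtail_*<M$.} Take $\rtail=\lfloor\rtail_*\rfloor\in[1,M]$. Since $\rtail\leq\rtail_*$ and $1-\alpha<0$, we get $L\rtail^{1-\alpha}\geq L\rtail_*^{1-\alpha}=L^{1/\alpha}\eta^{(\alpha-1)/\alpha}$. Since $\rtail_*\geq1$, we have $\rtail\geq\rtail_*/2$, so $\rtail\eta\geq\tfrac12\rtail_*\eta=\tfrac12L^{1/\alpha}\eta^{(\alpha-1)/\alpha}$. Hence
\[
 g(\rtail)\geq\min\bigl\{1,\tfrac12L^{1/\alpha}\eta^{(\alpha-1)/\alpha}\bigr\}\geq\tfrac12\Phi .
\]

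\emph{Main obstacle.} The only delicate step is Case 3. The floor rounding loses at most a factor $2$ in the increasing branch $\rtail\eta$, and moving down from $\rtail_*$ can only increase the decreasing branch $L\rtail^{1-\alpha}$. This is what fixes the constant $\tfrac12$. The boundary Cases 1 and 2 reduce to a single endpoint comparison each.
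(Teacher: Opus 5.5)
Your proof is correct and follows essentially the same route as the paper: a weighted geometric mean bound for the upper envelope, then a case split on $\widetilde m_*=(L/\eta)^{1/\alpha}$ into $\widetilde m_*\leq1$ (take $1$), $\widetilde m_*\geq M$ (take $M$), and the intermediate range (take $\lfloor\widetilde m_*\rfloor$). The only differences are that you place the boundary $\widetilde m_*=M$ in the endpoint case rather than the floor case, and your intermediate case skips the paper's comparison $\widetilde m_*\eta\leq M\eta$, which is indeed unnecessary.
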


\begin{proof}
For every {\color{black}${\rtail}$}, the weighted geometric mean inequality gives
{\color{black}\[
 \begin{aligned}
 \min\{L{\rtail}^{1-\alpha},{\rtail}\eta\}
 &\leq (L{\rtail}^{1-\alpha})^{1/\alpha}\\[-2pt]
 &\qquad{}\cdot({\rtail}\eta)^{(\alpha-1)/\alpha}\\
 &=L^{1/\alpha}\eta^{(\alpha-1)/\alpha}.
 \end{aligned}
\]}%
Together with {\color{black}${\rtail}\eta\leq M\eta$} and the cap one, this proves the upper
comparison.

For the reverse comparison, set
{\color{black}\begin{equation}
 {\packbal}=(L/\eta)^{1/\alpha}.
 \label{eq:effective-rank}
\end{equation}}%
If {\color{black}${\packbal}\leq1$}, then {\color{black}$\eta\geq L\geq1$}, and {\color{black}${\rtail}=1$} gives
{\color{black}$G_M=1$}, matching both bounds. If {\color{black}$1<{\packbal}\leq M$}, take {\color{black}${\rtail}=\lfloor {\packbal}\rfloor$}; then
{\color{black}${\packbal}/2\leq {\rtail}\leq {\packbal}$},
{\color{black}\[
 {\rtail}\eta\geq\frac12{\packbal}\eta,
 \qquad L{\rtail}^{1-\alpha}\geq L{\packbal}^{1-\alpha}={\packbal}\eta,
\]}%
and {\color{black}${\packbal}\eta=L^{1/\alpha}\eta^{(\alpha-1)/\alpha}\leq M\eta$}.
This gives at least half the displayed minimum. Finally, if {\color{black}${\packbal}>M$}, take {\color{black}${\rtail}=M$}.
Then {\color{black}$LM^{1-\alpha}>M\eta$} and
{\color{black}$L^{1/\alpha}\eta^{(\alpha-1)/\alpha}={\packbal}\eta>M\eta$}, so the value is
exactly {\color{black}$\min\{1,M\eta\}$}. These cases prove \eqref{eq:discrete-optimization}.
\end{proof}

\subsection{Proof of Corollary~\ref{cor:exact-rank}}
\label{app:rank-proof}
\begin{proof}
\emph{Upper bound and admissibility.}
For {\color{black}$\rho\in\cD_{d,r}$}, {\color{black}$\tau_r(\rho)=\sum_{j>r}\lambda_j(\rho)=0$}.
Thus, for {\color{black}$T\geq2d$}, \eqref{eq:polynomial-expected-oracle} at {\color{black}$s=r$} gives
{\color{black}\[
 \E_\rho\|\rhoCF-\rho\|_{\mathrm{tr}}
 \leq4\tau_r(\rho)+B_0r\sqrt{d/T}
 =B_0r\sqrt{d/T}.
\]}%
Set {\color{black}$C_{\rm rank}=\max\{4,B_0\}$}. For {\color{black}$1\leq T<2d$},
{\color{black}$r\sqrt{d/T}\geq1/2$}, so
{\color{black}$2\leq4r\sqrt{d/T}\leq C_{\rm rank}r\sqrt{d/T}$}.
Combining these two cases with the loss cap {\color{black}$2$} and taking the supremum
over {\color{black}$\cD_{d,r}$} gives the upper bound in \eqref{eq:exact-rank-risk}.
For weighted PLS, \eqref{eq:mw-expected-oracle} at {\color{black}$s=r$} gives
{\color{black}$\E_\rho\|\rhoMW-\rho\|_{\mathrm{tr}}\leq B_{\rm MW}r\sqrt{d/T}$}
when {\color{black}$T\geq2d$}. Repeating the same argument with
{\color{black}$C_{\rm rank}^{\rm MW}=\max\{4,B_{\rm MW}\}$} proves its upper bound.
The middle inequality follows from admissibility of each estimator
in the periodic experiment.

\begin{revision}
\emph{Lower bound for {\color{black}$r\geq3$}.}
For the lower bound, fix any admissible randomized design and reconstruction
kernel in \eqref{eq:minimax-risk}. Put {\color{black}$\gamma=\min\{1/4,c_I\}$}, where
{\color{black}$c_I$} is defined in \eqref{eq:fano-ci}. We first suppose {\color{black}$r\geq3$} and set
{\color{black}\[
 \begin{gathered}
 d_\perp=d-2,\qquad
 {\rtail}=\min\{r-2,\lfloor d_\perp/3\rfloor\}.
 \end{gathered}
\]}%
Condition \eqref{eq:headline-block-condition} gives {\color{black}$d\geq65536$}, so
{\color{black}$1\leq {\rtail}\leq d_\perp/3$}, {\color{black}${\rtail}+2\leq r$}, and {\color{black}$r\leq6{\rtail}$}.
Use \eqref{eq:hard-states} with
{\color{black}$\beta=\gamma\min\{1,{\rtail}\sqrt{d_\perp/T}\}$}. This state has rank {\color{black}${\rtail}+2$};
the construction therefore applies only from rank three onward.
The information and testing argument through
\eqref{eq:fixed-hard-state-expectation} uses only {\color{black}$\beta\leq1/4$} and
{\color{black}$\beta\leq c_I{\rtail}\sqrt{d_\perp/T}$}; the additional spectral tail constraint
in \eqref{eq:b-class-condition} is unnecessary for a rank class.
Consequently every admissible procedure has worst case expected loss
at least {\color{black}$11\beta/64$}. Since {\color{black}$\sqrt d\leq2\sqrt{d_\perp}$},
{\color{black}\[
 \min\{1,r\sqrt{d/T}\}
 \leq12\min\{1,{\rtail}\sqrt{d_\perp/T}\}.
\]}%
Taking the infimum over procedures proves the lower bound for {\color{black}$r\geq3$}
with coefficient {\color{black}$11\gamma/768$}.

\end{revision}
\emph{Rank one.}
The finite pairing below is an explicit application of Assouad's method
\cite[Lemmas~1--2]{Yu1997}; its measurement step is proved here for the randomized
nonadaptive protocol class in \eqref{eq:minimax-risk}.
This rank one bound holds for every {\color{black}$d\geq2$} and {\color{black}$T\geq1$}.

Now put {\color{black}${\dcube}=d-1$}, {\color{black}$K=\max\{{\dcube},T\}$}, {\color{black}$a=(8\sqrt K)^{-1}$}, and
{\color{black}$h=\sqrt{1-{\dcube}a^2}$}. Then {\color{black}${\dcube}a^2,Ta^2\leq1/64$} and {\color{black}$h\geq1/2$}.
For {\color{black}$\theta\in\{-1,1\}^{\dcube}$}, define the pure states
{\color{black}\begin{equation}
 |\psi_\theta\rangle=h|0\rangle+a\sum_{j=1}^{\dcube}\theta_j|j\rangle,
 \qquad \rho_\theta=|\psi_\theta\rangle\langle\psi_\theta|.
 \label{eq:pure-cube-states}
\end{equation}}%
These vectors have norm one. If {\color{black}$\theta^{(j)}$} flips bit {\color{black}$j$}, then
{\color{black}$\langle\psi_\theta|\psi_{\theta^{(j)}}\rangle=1-2a^2$}.
For {\color{black}$|\Psi_\theta\rangle=|\psi_\theta\rangle^{\otimes T}$}, Bernoulli's
inequality therefore gives
\begin{equation}
 \big\||\Psi_\theta\rangle-|\Psi_{\theta^{(j)}}\rangle\big\|_2^2
 =2\bigl[1-(1-2a^2)^T\bigr]\leq4Ta^2.
 \label{eq:pure-cube-tensor-distance}
\end{equation}

Fix a seed value {\color{black}$R=s$} and a measurable randomized binary test {\color{black}$g_s$}
of the transcript, with {\color{black}$0\leq g_s\leq1$}. Let {\color{black}$N_s$} be the product of the
{\color{black}$T$} selected single copy POVMs. The entrywise integral
{\color{black}$E_s=\int g_s(y)N_s(dy)$} satisfies {\color{black}$0\preceq E_s\preceq I$} and gives
the test's acceptance probability as
{\color{black}$\langle\Psi_\theta|E_s|\Psi_\theta\rangle$}.
This follows from positivity, normalization, and finite dimensional
integration, for arbitrary standard Borel outcomes. For any unit vectors
{\color{black}$|u\rangle,|v\rangle$}, expanding the difference of quadratic forms and
using {\color{black}$\|E_s\|_{\mathrm{op}}\leq1$} yields
{\color{black}\begin{equation}
 \begin{aligned}
 &\big|\langle u|E_s|u\rangle-\langle v|E_s|v\rangle\big|\\
 &\quad\leq
 \big|\langle u-v|E_s|u\rangle\big|
 +\big|\langle v|E_s|u-v\rangle\big|\\
 &\quad\leq2\big\||u\rangle-|v\rangle\big\|_2.
 \end{aligned}
 \label{eq:pure-cube-binary-effect}
\end{equation}}%
Thus neighboring acceptance probabilities differ by at most
{\color{black}$4a\sqrt T\leq1/2$}. This bound is uniform in the seed and the test.

For an arbitrary density matrix estimate {\color{black}$\sigma$}, decode bit {\color{black}$j$} as
{\color{black}$+1$} if {\color{black}$\operatorname{Re}\sigma_{j0}\geq0$} and as {\color{black}$-1$} otherwise.
This is a measurable decoder, including the stated tie rule.
Write {\color{black}$H_\theta(\sigma)$} for its number of wrong bits.
Because {\color{black}$(\rho_\theta)_{j0}=ha\theta_j$}, each wrong bit forces
{\color{black}$|\sigma_{j0}-(\rho_\theta)_{j0}|\geq ha$}. Cauchy--Schwarz gives
{\color{black}\begin{equation}
 \begin{aligned}
 haH_\theta(\sigma)
 &\leq\sum_{j=1}^{\dcube}|\sigma_{j0}-(\rho_\theta)_{j0}|\\
 &\leq\sqrt {\dcube}\,\|\sigma-\rho_\theta\|_{\mathrm F}
 \leq\sqrt {\dcube}\,\|\sigma-\rho_\theta\|_{\mathrm{tr}}.
 \end{aligned}
 \label{eq:pure-cube-hamming-loss}
\end{equation}}%
In particular, the estimate may be mixed.

Let {\color{black}$g_{s,j}(y)$} be the reconstruction kernel's probability that the
estimated bit {\color{black}$j$} equals {\color{black}$+1$}. It is measurable, lies in {\color{black}$[0,1]$}, and is
independent of the true state. If {\color{black}$\theta_j=+1$}, the conditional bit error
probabilities for {\color{black}$\theta$} and {\color{black}$\theta^{(j)}$} sum to
{\color{black}$1-\E_\theta[g_{s,j}(Y)\mid R=s]
 +\E_{\theta^{(j)}}[g_{s,j}(Y)\mid R=s]\geq1/2$}.
Pairing the {\color{black}$2^{\dcube}$} vertices along each coordinate gives average bit error
at least {\color{black}$1/4$}. Integrate over the state independent seed law and sum over
coordinates to obtain
{\color{black}\begin{equation}
 2^{-{\dcube}}\sum_\theta\E_\theta H_\theta(\widehat\rho)\geq {\dcube}/4.
 \label{eq:pure-cube-average-hamming}
\end{equation}}%
These expectations include both design and reconstruction randomness.
Only the original measurable error probabilities are integrated over
seeds; no jointly measurable choice of effect densities is required.
Combining \eqref{eq:pure-cube-hamming-loss} and
\eqref{eq:pure-cube-average-hamming} yields
{\color{black}\begin{equation}
 \begin{aligned}
 2^{-{\dcube}}\sum_\theta\E_\theta\|\widehat\rho-\rho_\theta\|_{\mathrm{tr}}
 &\geq\frac{ha\sqrt {\dcube}}{4}\\
 &\geq\frac1{64}\sqrt{\frac {\dcube}{\max\{{\dcube},T\}}}\\
 &\geq\frac1{128}\min\{1,\sqrt{d/T}\},
 \end{aligned}
 \label{eq:pure-cube-risk}
\end{equation}}%
where the last comparison uses {\color{black}${\dcube}=d-1\geq d/2$}.
For each procedure, its supremum over pure states dominates this finite
average. Taking the infimum over procedures proves the rank one bound.

\emph{Rank two and completion.}
Since {\color{black}$\cD_{d,1}\subseteq\cD_{d,2}$} and
{\color{black}$\min\{1,2x\}\leq2\min\{1,x\}$} for {\color{black}$x\geq0$}, we obtain
{\color{black}\begin{equation}
 \mathfrak R_T(\cD_{d,2})
 \geq\mathfrak R_T(\cD_{d,1})
 \geq\frac1{256}\min\{1,2\sqrt{d/T}\}.
 \label{eq:pure-cube-rank-two}
\end{equation}}%
Combining these lower bounds with the upper bound, and taking
{\color{black}$c_{\rm rank}=\min\{11\gamma/768,1/256\}$}, proves
\eqref{eq:exact-rank-risk} for every {\color{black}$1\leq r\leq d$}.
\end{proof}

\section{Spectral Projection and Channel Evaluation}
\label{app:channel-algebra}
\subsection{Dense spectral projection and PLS}
\label{sec:shared-spectral-projection}
All density projections use a full Hermitian eigendecomposition.
Writing {\color{black}${\oparg{}}={\eigbasis{}}\operatorname{diag}(\lambda_i){\eigbasis{}}^*$} gives
{\color{black}\[
 \Pi_{\cD_d}^{\F}({\oparg{}})
 ={\eigbasis{}}\operatorname{diag}\bigl((\lambda_i-\theta)_+\bigr)_{i=1}^d{\eigbasis{}}^*,
 \]}%
where {\color{black}$\sum_{i=1}^d(\lambda_i-\theta)_+=1$}.
Unitary invariance reduces the minimization to probability simplex projection
\cite{GutaKahnKuengTropp2020,DuchiEtAl2008}. With eigenvalues in decreasing
order, compute
\begin{revision}
\[
 j_*=\max\left\{j:\lambda_j>
 \frac{\sum_{i=1}^j\lambda_i-1}{j}\right\},\qquad
 \theta=\frac{\sum_{i=1}^{j_*}\lambda_i-1}{j_*}.
\]
\end{revision}
The data determine both $\theta$ and the output rank. The implementation
and its dense arithmetic costs are given in Supplement
Section~\ref*{sec:pls-dense-computation}.
For trace one {\color{black}${\oparg{}}$}, $\theta\geq0$, with equality exactly when {\color{black}${\oparg{}}\succeq0$}.
For $\rho\in\cD_d$, Weyl's inequality gives
{\color{black}$\sum_i(\lambda_i-\|{\oparg{}}-\rho\|_{\op})_+\leq1$} and
{\color{black}$\lambda_{\min}({\oparg{}})\geq-\|{\oparg{}}-\rho\|_{\op}$}.
Hence {\color{black}$\theta\leq\|{\oparg{}}-\rho\|_{\op}$} and
{\color{black}\[
 \begin{aligned}
 \|\Pi_{\cD_d}^{\F}({\oparg{}})-{\oparg{}}\|_{\op}&\leq\|{\oparg{}}-\rho\|_{\op},\\
 \|\Pi_{\cD_d}^{\F}({\oparg{}})-\rho\|_{\op}&\leq2\|{\oparg{}}-\rho\|_{\op}.
 \end{aligned}
\]}%
Frobenius nonexpansiveness gives the other projection bound used in
Appendix~\ref{sec:pls-comparison}.

\begin{revision}
\subsection{Known channel transfer formula}
For phase free Pauli matrices $P$, the channel satisfies
$\cM(P)=\measeig{P}P$, $\cL(P)=(d+1)\measeig{P}P$ for $P\ne I$, and $\cL(I)=I$.
The next calculation reproduces the known transfer representation; no
novel channel computation claim is made. Earlier explicit brickwork shadow
channels are studied in \cite{ArienzoEtAl2023}.

\begin{lemma}[Cho and Kim's transfer formula]
\label{lem:forward-two-state-transfer}
The following is \cite[Appendix~G, Eqs.~(149)--(152)]{ChoKim2026},
written in local activity variables.
Put $m=n/k$ and $Q=2^k$. Mark a first layer block
as active when the restriction of $P$ to that block is nonidentity, and let
{\color{black}${\activity{}}_i\in\{0,1\}$} be its activity indicator. For
{\color{black}${b_{\mathrm L}},{b_{\mathrm R}}\in\{0,1\}$} define
{\color{black}\begin{equation}
\begin{aligned}
 &\actweights_i(b_{\mathrm L},b_{\mathrm R})\\
 &\quad=\begin{cases}
 \mathbf1\{b_{\mathrm L}=b_{\mathrm R}=0\},&\activity_i=0,\\
 0,&\activity_i=1,\ (b_{\mathrm L},b_{\mathrm R})=(0,0),\\
 (Q+1)^{-1},&\activity_i=1,\ b_{\mathrm L}+b_{\mathrm R}=1,\\
 (Q-1)(Q+1)^{-1},&\activity_i=1,\ (b_{\mathrm L},b_{\mathrm R})=(1,1).
 \end{cases}
 \end{aligned}
 \label{eq:forward-local-weights}
\end{equation}}%
Let the $2$ by $2$ rational matrix {\color{black}${\transfermat{}}_i$} have entries
{\color{black}\begin{equation}
 {\transfermat{}}_i({b_{\mathrm L}},{b_{\mathrm L}}')=
 \sum_{{b_{\mathrm R}}\in\{0,1\}}{\actweights{}}_i({b_{\mathrm L}},{b_{\mathrm R}})
 (Q+1)^{-\mathbf1\{{b_{\mathrm R}}\lor{b_{\mathrm L}}'\}}.
 \label{eq:forward-transfer-matrix}
\end{equation}}%
In particular the two possible matrices are exactly
{\color{black}\begin{equation}
 \begin{aligned}
 {\transfermat{}}^{(0)}&=\begin{pmatrix}1&(Q+1)^{-1}\\0&0\end{pmatrix},\\
 {\transfermat{}}^{(1)}&=\frac1{(Q+1)^2}\begin{pmatrix}1&1\\2Q&Q\end{pmatrix}.
 \end{aligned}
 \label{eq:known-transfer-matrices}
\end{equation}}%
Set {\color{black}${\transfermat{}}_i={\transfermat{}}^{({\activity{}}_i)}$}. With cyclic indices,
{\color{black}\begin{equation}
 \measeig{P}=\tr({\transfermat{}}_1{\transfermat{}}_2\cdots {\transfermat{}}_m).
 \label{eq:forward-pauli-multiplier}
\end{equation}}%
Consequently all $d^2$ multipliers are computable using
$O(d^2n/k)$ rational arithmetic operations, in addition to the
$O(d^2n)$ bit inspections needed to read the Pauli labels, and polynomial
bit length.
\end{lemma}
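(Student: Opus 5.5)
We compute the Pauli eigenvalue directly as a probability. Then we organise the two-layer average as a cyclic Markov-type sum over half-block activity bits. This sum is exactly the trace of the product of the $2\times2$ matrices $\transfermat_i$.

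\emph{Step 1: scalar formula.} Since $\cM$ is Pauli diagonal, for a phase free Pauli $P\neq I$ we have
\[
\measeig{P}=d^{-1}\tr(P\cM(P))=d^{-1}\E_U\sum_b\langle b|UPU^*|b\rangle^2 .
\]
Because $U$ is Clifford, $UPU^*=\pm P'$ for a Pauli $P'$. The sum $\sum_b\langle b|P'|b\rangle^2$ equals $d$ if $P'$ is $\pauliz$-type (no $\paulix$ component) and $0$ otherwise. Hence $\measeig{P}=\Prob\{U_2U_1PU_1^*U_2^*\text{ is diagonal}\}$.

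\emph{Step 2: one layer at a time.} We use the standard fact that a uniform $k$ qubit Clifford acts transitively, up to sign, on the $Q^2-1$ nonidentity Paulis. It therefore maps a nonidentity local Pauli to a uniform nonidentity one, and it fixes the identity.
\begin{itemize}
\item \emph{First layer.} On each shifted block $A'_i$ with $\activity_i=0$ the Pauli stays the identity. On each active block it becomes uniform over the nonidentity Paulis on $A'_i$, independently across blocks. Split $A'_i$ into its two halves, each of dimension $q$ with $q^2=Q$, and let $(b_{\mathrm L},b_{\mathrm R})$ record which halves are nonidentity. Counting gives the law of these bits: $(0,0)$ has probability $0$; each of $(1,0)$ and $(0,1)$ has probability $(q^2-1)/(Q^2-1)=(Q+1)^{-1}$; and $(1,1)$ has probability $(q^2-1)^2/(Q^2-1)=(Q-1)/(Q+1)$. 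This is exactly $\actweights_i$ in \eqref{eq:forward-local-weights}.
\item \emph{Second layer.} The unshifted block $A_{i+1}$ is the union of the right half of $A'_i$ and the left half of $A'_{i+1}$. It is therefore active iff $b_{\mathrm R}^{(i)}\lor b_{\mathrm L}^{(i+1)}=1$. Conditionally on the first layer, each active second-layer block becomes a uniform nonidentity Pauli. This Pauli is diagonal with probability $(Q-1)/(Q^2-1)=(Q+1)^{-1}$, and these events are independent across blocks. Inactive blocks are trivially diagonal.
\end{itemize}

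\emph{Step 3: transfer matrix.} Combining the two layers gives
\[
\measeig{P}=\sum_{(b_{\mathrm L}^{(i)},b_{\mathrm R}^{(i)})_i}\prod_i\actweights_i(b_{\mathrm L}^{(i)},b_{\mathrm R}^{(i)})\,(Q+1)^{-\mathbf1\{b_{\mathrm R}^{(i)}\lor b_{\mathrm L}^{(i+1)}\}} .
\]
We sum over each $b_{\mathrm R}^{(i)}$ inside the factor linking block $i$ to block $i+1$. What remains is a cyclic product of the matrices $\transfermat_i(b_{\mathrm L}^{(i)},b_{\mathrm L}^{(i+1)})$ from \eqref{eq:forward-transfer-matrix}, summed over the $b_{\mathrm L}$'s, which is $\tr(\transfermat_1\cdots\transfermat_m)$. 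Index conventions and the case $m=1$, where a block is glued to itself, need care; the cyclic trace handles both automatically. Evaluating \eqref{eq:forward-transfer-matrix} for $\activity_i=0,1$ gives the two explicit matrices in \eqref{eq:known-transfer-matrices}.

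\emph{Step 4: cost.} For each of the $d^2$ Pauli labels we read $n$ bits to obtain the activities, then multiply $m=n/k$ rational $2\times2$ matrices. Entries are ratios of polynomials in $Q$ of degree $O(m)$, so bit lengths stay polynomial.

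The main obstacle is Step 2, specifically the exact first-layer half-activity law. We must check that transitivity holds over the projective Clifford group with phases and signs ignored. We must also check that independence across blocks and layers justifies the product form. The rest is bookkeeping, where the only pitfall is the orientation of the cyclic indices.
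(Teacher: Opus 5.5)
Your proposal is correct and takes the same approach as the paper's proof. Both uniformize the nonidentity Pauli on each active shifted block to get the half-block law $\actweights_i$, use that an active unshifted block becomes diagonal with probability $(Q+1)^{-1}$, and sum the cyclic half-block bits into $\tr(\transfermat_1\cdots\transfermat_m)$; your extra details, namely $\measeig{P}=\Prob\{UPU^*\text{ diagonal}\}$ and the $m=1$ check $\tr\transfermat^{(1)}=(Q+1)^{-1}$, just fill in the count the paper defers to its supplement.
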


\begin{proof}
On an active block the nonidentity Pauli is uniform over $Q^2-1$ labels.
The three half block activity probabilities give
\eqref{eq:forward-local-weights}; an active second layer block is diagonal
with probability $(Q+1)^{-1}$. Summing the cyclic activities gives the
matrix trace. Supplement Section~\ref*{supp:transfer} gives the full count.
\end{proof}
\end{revision}
\section{Quantitative Covariance and Oracle Bounds}
\label{app:quantitative}
\begin{revision}
This appendix supplies the sample threshold argument behind the covariance
rate and then records explicit finite sample refinements. The sharp rate
constant below is universal but is not assigned a numerical value; the
separate bounds in Proposition~\ref{prop:explicit-raw-constants} have explicit coefficients.

\subsection{Fixed norm covariance and the sample threshold}
Let $Z$ be a centered real random vector with covariance $\Sigma\ne0$,
$\|Z\|^2=\tr\Sigma$ almost surely, and $L_6$ to $L_2$ ratio at most {\color{black}${\kappa_{6,2}}$}.
Put {\color{black}${r_{\mathrm{eff}}}=\tr\Sigma/\|\Sigma\|_{\op}\geq1$} and
$\widehat\Sigma_t=t^{-1}\sum_{i=1}^tZ_iZ_i^{\mathsf T}$.
Choose $c_{\rm BY},C_{\rm BY}\geq1$, depending only on {\color{black}${\kappa_{6,2}}$}, so that
\cite[Theorem~2, arXiv v3]{AbdallaZhivotovskiy2023} gives
{\color{black}\[
 \E\|\widehat\Sigma_{N_{\mathrm{pad}}}-\Sigma\|_{\op}
 \leq C_{\rm BY}\|\Sigma\|_{\op}\sqrt{{r_{\mathrm{eff}}}/{N_{\mathrm{pad}}}}
 \]}%
for {\color{black}$N_{\mathrm{pad}}\geq c_{\rm BY}r_{\mathrm{eff}}$}.
The source theorem uses radial truncation. Here its cutoff is at least
$\|Z\|$ whenever {\color{black}${N_{\mathrm{pad}}}\geq {r_{\mathrm{eff}}}$}, since
{\color{black}$(\tr\Sigma)^2\leq {N_{\mathrm{pad}}}\tr\Sigma\|\Sigma\|_{\op}$}, so it returns the raw average.

To cover every integer {\color{black}$t\geq {r_{\mathrm{eff}}}$}, append independent observations up to
{\color{black}${N_{\mathrm{pad}}}=\max\{t,\lceil c_{\rm BY}{r_{\mathrm{eff}}}\rceil\}$}. Conditional expectation gives
{\color{black}\[
 \E[\widehat\Sigma_{N_{\mathrm{pad}}}-\Sigma\mid Z_1,\ldots,Z_t]
 =\frac t{N_{\mathrm{pad}}}(\widehat\Sigma_t-\Sigma).
\]}%
By conditional Jensen and {\color{black}${N_{\mathrm{pad}}}/t\leq\max\{1,c_{\rm BY}+1\}$},
{\color{black}\begin{equation}
 \begin{aligned}
 \E\|\widehat\Sigma_t-\Sigma\|_{\op}
 &\leq C_{\rm fn}({\kappa_{6,2}})\|\Sigma\|_{\op}\sqrt{{r_{\mathrm{eff}}}/t},\\
 C_{\rm fn}({\kappa_{6,2}})&:=C_{\rm BY}\sqrt{\max\{1,c_{\rm BY}+1\}}.
 \end{aligned}
 \label{eq:fixed-norm-padding}
\end{equation}}%
Appending observations is only a proof device; the estimator still uses
the original $t$ samples.

For the periodic ensemble,
Proposition~\ref{prop:periodic-low-order} and Theorem~\ref{thm:ck-fourth-projector} give
{\color{black}\begin{equation}
 \begin{aligned}
 a&\geq a_*,\qquad {\momconst{}}=34,\\
 {\covlo{}}&=\frac{19627}{50000},\qquad {\covhi{}}=\frac{80373}{25000}.
 \end{aligned}
 \label{eq:periodic-primitive-constants}
\end{equation}}%
The real sixth power ratio in Proposition~\ref{prop:primitive-covariance}
is at most {\color{black}$136/{\covlo{}}^3\leq(29/8)^6$}. Its real covariance has trace $d+1$,
operator norm at most $80373/50000$, and effective rank at most $2d$.
For $d\geq65536$,
{\color{black}\[
 \|\Sigma_{\R}\|_{\op}\sqrt{{r_{\mathrm{eff}}}/T}
 =\sqrt{\|\Sigma_{\R}\|_{\op}(d+1)/T}
 \leq\frac{127}{100}\sqrt{d/T}.
\]}%
Realification costs a factor two. Thus \eqref{eq:forward-covariance-rate}
holds for every $T\geq2d$, with $c_0=2$ and
{\color{black}\begin{equation}
 {\covrate{}}=\covstar{}:=\frac{127}{50}C_{\rm fn}(29/8).
 \label{eq:audited-covariance-constant}
\end{equation}}%
The oracle coefficients may consequently be chosen as
{\color{black}\begin{equation}
 B_0=\frac{8\covstar{}+4}{a_*},\qquad C_{\rm rank}=\max\{4,B_0\}.
 \label{eq:oracle-risk-coefficients}
\end{equation}}
\end{revision}

\subsection{Integer optimization of the upper rate}
\label{sec:upper-rate-optimization}
For an estimator $\widehat\rho$, write
$\ell_{\mathrm{loss}}:=\E_\rho\|\widehat\rho-\rho\|_{\mathrm{tr}}$.

\begin{lemma}[Common spectral optimization]
\label{lem:common-spectral-optimization}
Let $d\geq1$ be an integer, $L\geq1$, $\alpha>1$, and $\eta>0$. Suppose a loss
{\color{black}${\ell_{\mathrm{loss}}}\leq2$} obeys {\color{black}${\ell_{\mathrm{loss}}}\leq4\tau_s+4\eta s$} for $1\leq s\leq d$,
where $\tau_s\leq Ls^{1-\alpha}$ and $\tau_d=0$. Then
{\color{black}\[
 {\ell_{\mathrm{loss}}}\leq10\min\{1,L^{1/\alpha}\eta^{(\alpha-1)/\alpha},d\eta\}.
\]}
\end{lemma}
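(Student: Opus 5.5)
Fix $\ell_{\mathrm{loss}}$, $L$, $\alpha$, $\eta$, and set $R:=\min\{1,L^{1/\alpha}\eta^{(\alpha-1)/\alpha},d\eta\}$. Since $10R\geq R$, it suffices to show that $\ell_{\mathrm{loss}}$ is at most a fixed multiple of whichever of the three branches attains the minimum. We treat the branches in turn, mirroring the discrete envelope argument of Lemma~\ref{lem:discrete-lower-envelope} but now bounding from above, and we use the balancing index $\packbal=(L/\eta)^{1/\alpha}$, for which $L\packbal^{1-\alpha}=\packbal\eta=L^{1/\alpha}\eta^{(\alpha-1)/\alpha}$.

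\emph{Constant branch.} If $R=1$, the cap $\ell_{\mathrm{loss}}\leq2\leq10R$ suffices.

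\emph{Full dimensional branch.} Take $s=d$. Then $\tau_d=0$, so $\ell_{\mathrm{loss}}\leq4d\eta$. This handles the case $R=d\eta$.

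\emph{Spectral branch.} Suppose $R=L^{1/\alpha}\eta^{(\alpha-1)/\alpha}<1$ and $R\leq d\eta$. The latter gives $\packbal\leq d$, and $L\geq1$ together with $R<1$ gives $\eta<1$, so $\packbal>1$. Choose the integer $s=\lceil\packbal\rceil$. Then $1\leq s\leq d$ and $\packbal\leq s<\packbal+1\leq2\packbal$. Since $\alpha>1$, $s^{1-\alpha}\leq\packbal^{1-\alpha}$, so
\[
4\tau_s\leq4Ls^{1-\alpha}\leq4L\packbal^{1-\alpha}=4R,\qquad 4\eta s\leq8\eta\packbal=8R .
\]
Hence $\ell_{\mathrm{loss}}\leq12R$.

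The only difficulty is that this naive rounding gives the constant $12$ rather than $10$, so a sharper rounding is needed. If $\packbal\geq2$, use $s=\lceil\packbal\rceil\leq\tfrac32\packbal$. This gives $\ell_{\mathrm{loss}}\leq4R+6R=10R$.

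If $1<\packbal<2$, use $s=1$ instead, which gives $\ell_{\mathrm{loss}}\leq4L+4\eta$. Since $L=\packbal R$ and $\eta=R/\packbal$, this equals $4R(\packbal+\packbal^{-1})<10R$ on this range. Alternatively, the cap can be used here whenever $R\geq1/5$.

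Combining the three branches gives the stated bound with coefficient $10$.
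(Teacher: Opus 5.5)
There is a genuine gap in the subcase $1<\widetilde m_0<2$ of your spectral branch. The identity $L=\widetilde m_0R$ is false unless $\alpha=2$. From $R=L\widetilde m_0^{1-\alpha}$ you get $L=\widetilde m_0^{\alpha-1}R$. The choice $s=1$ therefore only yields $\ell_{\mathrm{loss}}\leq4R(\widetilde m_0^{\alpha-1}+\widetilde m_0^{-1})$, which is not bounded by $10R$ once $\alpha$ is large.

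Your fallback to the cap needs $R\geq1/5$. In this subcase you only know $R=L\widetilde m_0^{1-\alpha}>2^{1-\alpha}$, which can be far below $1/5$. Concretely, take $\alpha=10$, $L=1$, $\eta=2^{-9}$ and $d\geq2$. Then $\widetilde m_0=2^{9/10}\in(1,2)$ and $R=2^{-81/10}\approx0.0036$. The choice $s=1$ gives only $4+2^{-7}$, and the cap gives $2$, so neither option gives a bound of at most $10R\approx0.036$. The lemma itself holds there: $s=2$ gives $12\cdot2^{-9}<10R$.

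The missing idea is to split on $\alpha$ in this subcase.
\begin{itemize}
\item If $\alpha\leq2$, then $R>2^{1-\alpha}\geq1/2$, so the cap gives $\ell_{\mathrm{loss}}\leq2<4R$. Your $s=1$ computation also becomes valid here, since then $\widetilde m_0^{\alpha-1}\leq\widetilde m_0$.
\item If $\alpha\geq2$, keep $s=\lceil\widetilde m_0\rceil$ and put $t=s/\widetilde m_0\in[1,2]$. Then $4Ls^{1-\alpha}=4Rt^{1-\alpha}\leq4Rt^{-1}$ because $\alpha-1\geq1$, and $4\eta s=4Rt$. The oracle is therefore at most $4R(t+t^{-1})\leq10R$.
\end{itemize}

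This is exactly the paper's case analysis. The rest of your argument is correct and coincides with the paper: the constant branch, the full dimensional branch, and the rounding $s\leq\tfrac32\widetilde m_0$ for $\widetilde m_0\geq2$.
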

\begin{proof}
Put {\color{black}${\rbal{}}=(L/\eta)^{1/\alpha}$}, so that
{\color{black}$\eta {\rbal{}}=L^{1/\alpha}\eta^{(\alpha-1)/\alpha}$}.
If {\color{black}$1\leq {\rbal{}}\leq d$}, choose {\color{black}$s=\lceil {\rbal{}}\rceil$}.
For {\color{black}${\rbal{}}\geq2$}, {\color{black}$s\leq3{\rbal{}}/2$} gives an error at most {\color{black}$10\eta {\rbal{}}$}.
For {\color{black}${\rbal{}}<2$} and $\alpha\leq2$, {\color{black}$\eta {\rbal{}}=L{\rbal{}}^{1-\alpha}\geq1/2$}, so the loss cap
suffices. For $\alpha\geq2$, use {\color{black}$1\leq s/{\rbal{}}\leq2$}: the oracle is at most
{\color{black}$4\eta {\rbal{}}({\rbal{}}/s+s/{\rbal{}})\leq10\eta {\rbal{}}$}.
If {\color{black}${\rbal{}}>d$}, take $s=d$ and use $\tau_d=0$. If {\color{black}${\rbal{}}<1$}, then {\color{black}$\eta {\rbal{}}\geq1$} and
the loss cap applies. Thus {\color{black}${\ell_{\mathrm{loss}}}\leq10\eta {\rbal{}}$} in every case; combining this
with {\color{black}${\ell_{\mathrm{loss}}}\leq2$} and {\color{black}${\ell_{\mathrm{loss}}}\leq4d\eta$} proves the claim.
\end{proof}

For OMD take $\eta=(B_0/4)\sqrt{d/T}$. Rescaling and including $T<2d$
gives the universal upper coefficient in Theorem~\ref{thm:headline-ck},
\begin{equation}
 C_*:=\max\left\{4,\;10\max\left\{1,\frac{2\covstar{}+1}{a_*}\right\}\right\}.
 \label{eq:audited-upper-constant}
\end{equation}
Weighted PLS uses the same lemma with $B_{\rm MW}$ in place of $B_0$.
The PLS comparison uses its logarithmic noise scale and the separate
Frobenius bound for the full state space. Thus integer optimization is
performed once; only the oracle coefficient and noise scale change.

\subsection{Direct finite sample refinements}
\begin{revision}
The general covariance proof gives the sharp order with a universal prefactor.
\end{revision} The same raw average estimator also has the following bounds,
which are useful when numerical constants matter.

\begin{proposition}[Explicit raw average bounds]
\label{prop:explicit-raw-constants}
Under the periodic block condition \eqref{eq:headline-block-condition}, for every $T\geq1$,
\begin{equation}
 \E_\rho\|\overline Y_T-\cL(\rho)\|_{\op}
 \leq\frac{82}{5}\sqrt{d/T}+\frac{8d(d+1)}{T}.
 \label{eq:explicit-two-term-covariance}
\end{equation}
If $T\geq d^3$, the right side can instead be $(33/2)\sqrt{d/T}$. Consequently the
same approximate OMD fit satisfies, for every $1\leq s\leq d$,
\begin{equation}
 \begin{aligned}
 &\E_\rho\|\rhoCF-\rho\|_{\mathrm{tr}}\\
 &\quad\leq\min\left\{2,\;4\tau_s(\rho)
          +136s\sqrt{d/T}+\frac{65sd(d+1)}{T}\right\}.
 \end{aligned}
 \label{eq:explicit-two-term-oracle}
\end{equation}
For $T\geq d^3$, the spectral class upper constant can be $341$, and
states of rank at most $r$ have expected loss at most $\min\{2,137r\sqrt{d/T}\}$.
\end{proposition}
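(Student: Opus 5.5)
The plan is to bound the raw score average directly with a matrix Bernstein style argument, using explicit variance and range constants instead of the Abdalla--Zhivotovskiy theorem. Write $\overline Y_T-\cL(\rho)=T^{-1}\sum_t X_t$ with $X_t=Y_t-\E_\rho Y_t$ centered and Hermitian. Since $Y_t=(d+1)B_t-I$ with $B_t$ a rank one projector, we have $\|Y_t\|_{\op}\leq d$. By \eqref{eq:forward-covariance-conditioning}, $\|\E_\rho Y_t\|_{\op}=\|\cL(\rho)\|_{\op}\leq 80373/25000$, so $\|X_t\|_{\op}\leq d+4$ almost surely.

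For the matrix variance, $\E_\rho Y_t^2=\E_\rho[(d+1)^2B_t-2(d+1)B_t+I]=(d-1)(I+\cL(\rho))+I$. Hence
\[
 \|\E X_t^2\|_{\op}\leq\|\E Y_t^2\|_{\op}\leq(d-1)\tfrac{80373}{25000}+1\leq\tfrac{161}{50}d .
\]
The expected version of matrix Bernstein (Tropp) on $\C^{d\times d}$ then gives
\[
 \E\Bigl\|\sum_tX_t\Bigr\|_{\op}\leq\sqrt{2v\ln(2d)}+\tfrac13R\ln(2d),
\]
which carries a $\ln d$ factor. That is too weak for the stated dimension free $\tfrac{82}{5}\sqrt{d/T}$, so Bernstein alone will not do.

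I would instead use a Rudelson or moment-trace route with a fourth moment. Symmetrize and bound $\E\|\sum\epsilon_tX_t\|_{\op}$ through $\E\tr(\cdot)^{2p}$ at a fixed power $p$. The leading term $\tr(\E X^2)^p$ type contributions scale like $d(Td)^p$, and taking the $2p$-th root leaves a factor $d^{1/(2p)}$. This is bounded by a constant only if $p\gtrsim\ln d$, which reintroduces logs.

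The more promising route is an $\varepsilon$-net over unit vectors $v$ combined with scalar concentration of $\langle v|\overline Y_T|v\rangle$. The scalar variance is at most $\kappa_*<13$ by \eqref{eq:scalar-variance}, and the scalar range is at most $d+1$. Bernstein on each scalar, plus a union bound over a $1/4$-net of size $9^{2d}$, gives deviation
\[
 \sqrt{2\kappa_*\cdot 2d\ln9/T}+(d+1)\cdot 2d\ln9/(3T).
\]
The net factor for Hermitian forms is $(1-2\varepsilon)^{-1}=2$. Converting tail bounds into an expectation should produce constants of the form $c_1\sqrt{d/T}+c_2d(d+1)/T$. Checking, $2\sqrt{4\kappa_*\ln9}\approx 2\sqrt{114}\approx21$ is slightly above $82/5$, so I would tune the net radius and use the exact $\kappa_*=241119/25000$ to reach $82/5$ and a linear coefficient of $8$. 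I expect this constant bookkeeping to be the main obstacle.

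For $T\geq d^3$, note that $d(d+1)/T\leq (d+1)/d^2\cdot\sqrt{d/T}\cdot\sqrt{T/d}\cdot d/T$. More simply, $d(d+1)/T\leq (1+1/d)\sqrt{d^3/T}\sqrt{d/T}\leq(1+1/d)\sqrt{d/T}$. So the second term becomes at most $8(1+1/65536)\sqrt{d/T}$, and this gives the $(33/2)\sqrt{d/T}$ variant only if the first constant is improved jointly. I would re-run the tail integration with the linear term absorbed once $T\geq d^3$ to get $33/2$.

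The oracle \eqref{eq:explicit-two-term-oracle} follows by taking expectations in \eqref{eq:polynomial-forward-oracle} with $a_*$. This gives a coefficient $8/a_*\cdot 82/5\approx131.2$, plus $4/a_*$ from $\delta_{\rm alg}\leq\sqrt{d/T}$, totalling under $136$. The linear coefficient is $64/a_*<65$, and the loss cap $2$ gives the minimum. For $T\geq d^3$, the coefficient is $(8\cdot 33/2+4)/a_*\approx136.004<137$, which gives the rank bound at $s=r$. Lemma~\ref{lem:common-spectral-optimization} with $\eta=(137/4)\sqrt{d/T}$ gives a factor $10\cdot137/4=342.5$. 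Reaching $341$ will need a slightly sharper case analysis in that lemma, or the exact coefficient $136.004/4$ in place of $137/4$, which gives $340.01$.
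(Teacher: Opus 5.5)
This is essentially the paper's argument: an $\varepsilon$-net of the unit sphere, scalar concentration of $\langle v|\overline Y_T|v\rangle$ using the variance bound $\kappa_*$ and the range $d+1$, the Hermitian net factor $(1-2\varepsilon)^{-1}$, and then \eqref{eq:polynomial-forward-oracle} with the unrounded coefficients $676/(5a_*)$, $64/a_*$, $136/a_*$ and $340/a_*$, exactly as in your final paragraph. The constant bookkeeping you left open is closed in the paper as follows:
\begin{itemize}
\item Use a $1/16$-net, with $|\mathcal N|\le 33^{2d}$, net factor $8/7$, and $\ln(2|\mathcal N|)\le 7d$.
\item Instead of integrating Bernstein tails, apply the Laplace-transform bound $\E\eta_{\mathcal N}\le 7d/\lambda+147\lambda/(20T)$, valid for $(d+1)\lambda\le T$. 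It follows from $e^t\le 1+t+\tfrac34t^2$ and $\kappa_*\le 49/5$.
\item Taking $\lambda=T/(d+1+\sqrt{T/d})$ gives exactly $\tfrac87\bigl(7+\tfrac{147}{20}\bigr)=\tfrac{82}{5}$ and $\tfrac87\cdot 7=8$.
\item When $T\ge d^3$, take $\lambda=\tfrac{19}{20}\sqrt{dT}$, which is then admissible. No linear term arises, so nothing needs to be absorbed, and the coefficient is $\tfrac87\bigl(\tfrac{140}{19}+\tfrac{2793}{400}\bigr)\le\tfrac{33}{2}$.
\end{itemize}
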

\begin{proof}
Take a $1/16$-net $\mathcal N$ of the complex unit sphere with
$|\mathcal N|\leq33^{2d}$. Each directional score lies in $[-1,d]$, as does
its mean. Thus its centered absolute value is at most $d+1$; by
Proposition~\ref{prop:periodic-low-order}, \eqref{eq:scalar-variance},
its variance is at most $\kappa_*\leq49/5$.
For the maximum absolute empirical directional error $\eta_{\mathcal N}$,
independence and {\color{black}$e^{t_{\mathrm{mgf}}}\leq1+{t_{\mathrm{mgf}}}+(3/4){t_{\mathrm{mgf}}}^2$} for {\color{black}$|{t_{\mathrm{mgf}}}|\leq1$} give
{\color{black}\begin{equation}
 \begin{aligned}
 \E\eta_{\mathcal N}&\leq\frac{\ln(2|\mathcal N|)}{{\lambda_{\mathrm{Lap}}}}+\frac{147{\lambda_{\mathrm{Lap}}}}{20T},\\
 &0<{\lambda_{\mathrm{Lap}}},\qquad (d+1){\lambda_{\mathrm{Lap}}}\leq T.
 \end{aligned}
 \label{eq:explicit-net-laplace}
\end{equation}}%
Each signed average has exponential moment at most {\color{black}$e^{147{\lambda_{\mathrm{Lap}}}^2/(20T)}$}.
Sum over signs and directions, then apply Jensen's inequality.
Here $d\geq65536$ gives $\ln(2|\mathcal N|)\leq7d$.
Choose {\color{black}${\lambda_{\mathrm{Lap}}}=T/[d+1+\sqrt{T/d}]$}. The two terms in
\eqref{eq:explicit-net-laplace} are at most
$7d(d+1)/T+7\sqrt{d/T}$ and $(147/20)\sqrt{d/T}$, respectively.
The net inequality
$\|A\|_{\op}\leq(8/7)\max_{|v\rangle\in\mathcal N}|\langle v|A|v\rangle|$
proves \eqref{eq:explicit-two-term-covariance}.
If $T\geq d^3$, take {\color{black}${\lambda_{\mathrm{Lap}}}=(19/20)\sqrt{dT}$}. This is allowed because
$(19/20)(d+1)\leq d$, and gives
$\E\eta_{\mathcal N}\leq(140/19+2793/400)\sqrt{d/T}
\leq(231/16)\sqrt{d/T}$.
The net inequality now gives coefficient $33/2$.
Apply \eqref{eq:polynomial-forward-oracle}, with
$\delta_{\rm alg}\leq\sqrt{d/T}$, and round upward:
$676/(5a_*)<136$, $64/a_*<65$, $340/a_*<341$, and $136/a_*<137$.
For the spectral bound use the factor ten rate optimization above; for
the rank bound take $s=r$ and $\tau_r=0$.
This net is only a proof device; the raw average estimator is unchanged.
\end{proof}

\section{Projected Least Squares Comparison}
\label{sec:pls-comparison}
This appendix proves the risk bounds reported in the PLS row of
Table~\ref{tab:estimator-comparison} for the common periodic measurement records.

\subsection{Statistical comparison}
\begin{finalrevision}
For the PLS estimator \eqref{eq:candidate-one}, we prove the inverse channel
concentration bound
\begin{equation}
 \E_\rho\|A_T-\rho\|_{\op}
 \leq C\left(\sqrt{\frac{d\ln d}{T}}+\frac{d\ln d}{T}\right)
 \label{eq:pls-operator-comparison}
\end{equation}
under \eqref{eq:headline-block-condition}. The constant is universal and
the bound is uniform over all density matrices. We follow the Pauli
expansion in \cite[Appendix~D]{ChoKim2026}, giving the local counting
argument explicitly below, and then use the expectation form of matrix
Bernstein \cite[Theorem~6.1.1, Eq.~(6.1.3)]{Tropp2015}.

\emph{Inverse snapshot second moment.}
Write $X=\cM^{-1}(B)$ and retain $q=2^{k/2}$, $Q=q^2$, $m=n/k$,
and $d=Q^m$ from Section~\ref{sec:fourth-moment}.
For a phase free Pauli label $p$, let $w_j(p)$ count its nonidentity
blocks in layer $j$. Write $\lambda_p=\measeig{P_p}$ for the eigenvalue
of $\cM$ on the Pauli matrix $P_p$. A nonidentity Pauli becomes diagonal under a uniform
$k$ qubit Clifford with probability $1/(Q+1)$: among the $Q^2-1$
nonidentity Paulis, exactly $Q-1$ are diagonal. Independence across
second layer blocks therefore gives
\[
 \lambda_p\geq(Q+1)^{-m}.
\]
Moreover, $\lambda_p$ depends only on the first layer block support of
$p$. Thus channel inversion commutes with first layer Clifford
conjugation.

For a fixed second layer circuit, let $S$ be the phase free Pauli
subgroup that it maps to diagonal Paulis. It has $|S|=d$. Put
\[
 \mathcal T_{1,\rho}(P_r)
 =\E_{U_1}\!\left[
   \tr(\rho U_1^\dagger P_rU_1)U_1^\dagger P_rU_1\right].
\]
The Pauli expansion of the inverse snapshot, followed by summing its
square against the Born probabilities, gives
\[
 \E_\rho X^2
 =d^{-2}\E_{U_2}\sum_{p,p'\in S}
   \lambda_p^{-1}\lambda_{p'}^{-1}
   \mathcal T_{1,\rho}(P_{p+p'}).
\]
Indeed, for $p,p'\in S$ the diagonal signs multiply to the sign of
$P_pP_{p'}$; summing over the outcome converts this sign into
$\tr(\rho U_1^\dagger P_pP_{p'}U_1)$. The sign relating the commuting
product $P_pP_{p'}$ to $P_{p+p'}$ occurs twice and cancels.

If $J$ is the set of nonidentity first layer blocks of $r$, uniform
Clifford conjugation gives
\[
 \mathcal T_{1,\rho}(P_r)
 =\frac{\displaystyle\sum_{\operatorname{supp}_1(p)=J}
       \tr(\rho P_p)P_p}{(Q^2-1)^{|J|}}.
\]
For $A\subseteq J$, let $\rho_A$ be the reduced density matrix on the
blocks in $A$, with $\rho_\varnothing=1$. Pauli reconstruction and
inclusion--exclusion identify the numerator with
\[
 \sum_{A\subseteq J}(-1)^{|J\setminus A|}
       Q^{|A|}\rho_A\otimes I_{A^c}.
\]
Since every reduced density matrix has operator norm at most one,
\[
 \|\mathcal T_{1,\rho}(P_r)\|_{\op}
 \leq\frac{(Q+1)^{|J|}}{(Q^2-1)^{|J|}}
 =(Q-1)^{-w_1(r)}.
\]
For each $r\in S$, exactly $d$ ordered pairs in $S^2$ satisfy
$p+p'=r$. Also $\Pr(r\in S)=(Q+1)^{-w_2(r)}$. Consequently,
\begin{align*}
 \|\E_\rho X^2\|_{\op}
 &\leq d(1+Q^{-1})^{2m} Z_m,\\
 Z_m&:=\sum_r (Q-1)^{-w_1(r)}(Q+1)^{-w_2(r)}.
\end{align*}
This counting step avoids any joint Pauli case enumeration.

Each half block has one zero Pauli label and $Q-1$ nonzero labels.
Summing these multiplicities around the periodic circuit yields
$Z_m=\tr[(F_{\mathrm{PLS}}G_{\mathrm{PLS}})^m]$, where
\[
 F_{\mathrm{PLS}}=
 \begin{pmatrix}1&(Q-1)^{-1}\\(Q-1)^{-1}&(Q-1)^{-1}\end{pmatrix},
\]
\[
 G_{\mathrm{PLS}}=
 \begin{pmatrix}
 1&\dfrac{Q-1}{Q+1}\\[3pt]
 \dfrac{Q-1}{Q+1}&\dfrac{(Q-1)^2}{Q+1}
 \end{pmatrix}.
\]
Here a matrix index records whether a half block label is zero or
nonzero; $F_{\mathrm{PLS}}$ accounts for a first layer block and
$G_{\mathrm{PLS}}$ for a second layer block, including its two half block
multiplicities. In the unnormalized pair count convention of
\cite[Eqs.~(78)--(81)]{ChoKim2026}, the matrix is $QG_{\mathrm{PLS}}$.
Its zero/zero entry is $Q$: the displayed sum in their Eq.~(81)
simplifies to $Q$, rather than $Q^2/(Q+1)$. We use the count above.

Let $H_{\mathrm{PLS}}=F_{\mathrm{PLS}}G_{\mathrm{PLS}}$.
It has nonnegative entries, and direct multiplication gives
\[
 H_{\mathrm{PLS}}=
 \begin{pmatrix}
 \dfrac{Q+2}{Q+1}&\dfrac{2(Q-1)}{Q+1}\\[4pt]
 \dfrac{2Q}{(Q-1)(Q+1)}&\dfrac{Q}{Q+1}
 \end{pmatrix}.
\]
For $q\geq2$, the positive vector $v=(q,1)^{\mathsf T}$ satisfies
$H_{\mathrm{PLS}}v\leq(1+3/q)v$ entrywise. Iteration bounds each
diagonal entry of $H_{\mathrm{PLS}}^m$ by $(1+3/q)^m$, so
\begin{align*}
 \|\E_\rho X^2\|_{\op}
 &\leq2d(1+q^{-2})^{2m}(1+3/q)^m\\
 &\leq2d\exp(2m/q^2+3m/q)\leq4d.
\end{align*}
The last step uses $q\geq256$ and $m/q\leq(\ln2)/92$, both supplied
by \eqref{eq:headline-block-condition}.

\emph{Concentration and spectral optimization.}
Unbiasedness gives $\E_\rho X=\rho$. Curvature and $\|B\|_\F=1$
give $\|X\|_{\op}\leq\|X\|_\F\leq(d+1)/a_*$, where
$a_*=32767/32768$. Hence, for the centered independent snapshots
$H_t=X_t-\rho$,
\[
 \|H_t\|_{\op}\leq3d,\qquad
 0\preceq\E_\rho H_t^2=\E_\rho X_t^2-\rho^2
 \preceq\E_\rho X_t^2.
\]
The matrix Bernstein variance parameter is therefore at most $4dT$.
Its expectation bound, divided by $T$, proves
\eqref{eq:pls-operator-comparison}; for example, $C=4$ suffices since
$\ln(2d)\leq2\ln d$ for $d\geq2$.

The shared Frobenius projection bound in
Appendix~\ref{sec:shared-spectral-projection} gives
$\|\rhoPLS-\rho\|_{\op}\leq2\|A_T-\rho\|_{\op}$.
Applying $\|\Delta\|_\F^2\leq\|\Delta\|_{\op}\|\Delta\|_{\mathrm{tr}}$
and Lemma~\ref{lem:trace-conversion} with $a=1$ yields
\[
 \|\rhoPLS-\rho\|_{\mathrm{tr}}
 \leq4\tau_s(\rho)+8s\|A_T-\rho\|_{\op}.
\]
If $T\geq d\ln d$, put $\eta=\sqrt{d\ln d/T}\leq1$.
Taking expectations gives $4\tau_s(\rho)+Cs\eta$.
Choose $s=\min\{d,\lceil(L/\eta)^{1/\alpha}\rceil\}$ and use
$\tau_d=0$ when the minimum is attained at $d$. This gives the
spectral branch $CL^{1/\alpha}\eta^{(\alpha-1)/\alpha}$, with a
universal constant. If $T<d\ln d$, that branch is at least one and
the trace loss cap suffices. Independence and unbiasedness also give
$\E\|A_T-\rho\|_\F^2\leq(d+1)^2/(a_*^2T)$.
Nonexpansiveness of Frobenius projection and
$\|\Delta\|_{\mathrm{tr}}\leq\sqrt d\|\Delta\|_\F$ give
\eqref{eq:pls-full-bound}, whose second branch is at most
$2\sqrt{d^3/T}$. Combine the three bounds. Uniformly for $\alpha>1$ and $L\geq1$,
\begin{equation}
 \begin{aligned}
 \sup_{\rho\in\cC_{\alpha,d}(L)}\E_\rho\|\rhoPLS-\rho\|_{\mathrm{tr}}
 \leq C\min\biggl\{&1,\sqrt{d^3/T},\\[-2pt]
 &L^{1/\alpha}(d\ln d/T)^{\frac{\alpha-1}{2\alpha}}\biggr\}.
 \end{aligned}
 \label{eq:candidate-one-current-rate}
\end{equation}
This proves the PLS entry of Table~\ref{tab:estimator-comparison}.
It is an upper comparison and does not establish that the logarithm
is necessary for PLS. Supplement Section~\ref*{supp:pls-comparison}
also records the derivation of \eqref{eq:pls-full-bound}.
\end{finalrevision}

\begingroup
\section{Proof of Proposition~\ref{prop:periodic-low-order}}
\label{app:prop10-proof}
\begin{proof}
The Pauli diagonal measurement channel satisfies $\cM(P)=\measeig{P}P$, with
\reviewchange{$\measeig{P}\geq(2^k+1)^{-n/k}$ by
Supplement equation~\textup{(\ref*{eq:supp-positive-multipliers})}.} For every nonzero traceless
Hermitian $H$, we have $\cL(H)=(d+1)\cM(H)$, so Pauli orthogonality gives
\[
 \frac{\langle H,\cL(H)\rangle_\F}{\|H\|_\F^2}
 \geq(d+1)\min_{P\neq I}\measeig{P}.
\]
Thus the curvature constant in \eqref{eq:curvature} can be chosen to satisfy
\begin{equation}
 \begin{aligned}
 a&\geq(d+1)(2^k+1)^{-n/k}\\
  &\geq\exp[-n/(k2^k)]\geq1-\frac{n}{k2^k}\geq a_*.
 \end{aligned}
 \label{eq:cited-ck-curvature}
\end{equation}
For the last inequality, write $q=2^{k/2}$ and $m=n/k$.
The block condition \eqref{eq:headline-block-condition} gives $q\geq256$ and $128m\leq q$, hence
$m/q^2\leq1/32768$.
The relative design gluing theorem
\cite[Appendix~D.2, arXiv v2]{SchusterHaferkampHuang2025}, applied to the
connected overlap cycle and exact local Clifford designs of order three
\cite{Zhu2017}, gives
\begin{equation}
 1+\epsilon_3\leq2^{63/92}
 \leq\frac{160746}{100000}<\frac53.
 \label{eq:cited-refined-third-design-error}
\end{equation}
The explicit gluing bound and scalar inequalities are in
Appendix~\ref{app:prop10-details}. Taking the two copy marginal of
the relative three copy CP order comparison preserves the same relative
error. Together with the Haar identity
$(d+1)\E_{\rho,\mathrm{Haar}}B=I+\rho$ and
$I+\cL(\rho)=(d+1)\E_\rho B$, this gives
\begin{equation}
 (1-\epsilon_3)(I+\rho)\preceq I+\cL(\rho)
 \preceq(1+\epsilon_3)(I+\rho).
 \label{eq:cited-forward-loewner-comparison}
\end{equation}
This proves \eqref{eq:forward-covariance-conditioning}. Testing the
three copy comparison on $\rho\otimes(|v\rangle\!\langle v|)^{\otimes2}$
bounds the variance by $6(1+\epsilon_3)\leq\kappa_*$; the calculation is
also in Appendix~\ref{app:prop10-details}.
\end{proof}

\subsection{Relative design conditioning and Born scalar variance}
\label{app:prop10-details}
Put $q=2^{k/2}$, $m=n/k$, and $c_*=(\ln2)/92$. For $m\geq2$, the
connected overlap graph has $2m$ gate vertices. A spanning tree in the multiplicative
gluing argument of \cite[Appendix~D.2, Eqs.~(D.3)--(D.4), arXiv v2]{SchusterHaferkampHuang2025}
uses $2m-1$ joins. The local Clifford ensembles are exact designs of order three
\cite[Theorem~1]{Zhu2017}. Taking design order three (denoted $k$ in
that source) and overlap dimension $q$, Eqs.~(D.3)--(D.4) give
\begin{equation}
 \begin{aligned}
 f_3(q)&=2\left(\frac{27}{2q}+\frac{81}{2q^2}
             +\frac9{4q-9}\right)
             \left(1+\frac9{2q}\right)\\
 &=\frac{9(28q^3+144q^2-81q-729)}{2q^3(4q-9)},\\
 1+\epsilon_3&\leq(1+f_3(q))^{2m-1}.
 \end{aligned}
\end{equation}
By \eqref{eq:headline-block-condition}, $m\leq c_*q$ and $q\geq256$.
Multiplying $63c_*-(2c_*q-1)f_3(q)$ by the positive denominator
$2q^3(4q-9)$ gives
\begin{align*}
 &q^3(252-3726c_*)+q^2(1296+1458c_*)\\
 &\hspace{20mm}+q(13122c_*-729)-6561.
\end{align*}
Here $c_*\leq1/92$, and $1296q^2-729q-6561\geq0$ for $q\geq18$.
Thus $(1+x)^r\leq e^{rx}$ proves
$1+\epsilon_3\leq e^{63c_*}=2^{63/92}<5/3$. The tighter rational bound in
\eqref{eq:cited-refined-third-design-error} can be checked without decimal
rounding by raising its positive sides to the power $92$.
If $m=1$, the product is a uniform global Clifford and $\epsilon_3=0$.

To pass from this unitary design comparison to projector moments, work
in replica order for $t=2,3$ and set
\[
 \begin{aligned}
 \mathsf D_t&=\frac1d\sum_b(|b\rangle\langle b|)^{\otimes t},
 &\mathsf M_t^\nu&=\E_\nu B^{\otimes t},\\
 \mathcal T_{\mathcal E}^{(t)}(A)&=\E_U U^{\otimes t}A(U^*)^{\otimes t}.
 \end{aligned}
\]
The reference law uses $U^*|b\rangle$, so
$\mathsf M_t^\nu=(\mathcal T_{\mathcal E}^{(t)})^*(\mathsf D_t)$,
where the star denotes the Hilbert--Schmidt adjoint of the map.
The adjoint of a completely positive map is completely positive;
therefore taking adjoints preserves the relative CP order comparison.
The Haar twirl is self-adjoint. No invariance of the periodic circuit
law under $U\mapsto U^*$ is needed.
The two copy comparison follows from the three copy comparison by
inserting $I/d$ in the third input and taking the partial trace over
the third output; both maps are completely positive. Applying these
comparisons to the positive input $\mathsf D_t$ gives
\[
 (1-\epsilon_3)\mathsf M_t^{\rm Haar}
 \preceq\mathsf M_t^\nu
 \preceq(1+\epsilon_3)\mathsf M_t^{\rm Haar},\qquad t=2,3.
\]
For a density matrix $\rho$, the positive contraction
\[
 A\longmapsto d\,\tr_1\!\left[
 (\rho^{1/2}\otimes I)A(\rho^{1/2}\otimes I)\right]
\]
sends $\mathsf M_2^\nu$ to $\E_\rho B$, because the observed law has
density $d\,\tr(\rho B)$ relative to $\nu$. Similarly,
$\tr[(d\rho\otimes O\otimes O)\mathsf M_3^\nu]
=\E_\rho[\tr(OB)^2]$ for $O\succeq0$. These contractions give the conditioning and variance comparisons.

For $O=|v\rangle\!\langle v|$ and $z=\tr(OB)$, the Haar projective
third moment identity \cite[Theorem~22, Eq.~(95)]{Mele2024}, with
three copies, gives
\begin{equation}
 (d+1)^2\E_{\rho,\mathrm{Haar}}z^2
 =\frac{d+1}{d+2}\{2+4\tr(\rho O)\}\leq6.
 \label{eq:cited-ck-haar-born-third}
\end{equation}
This follows by contracting
$\E_{\nu_{\rm Haar}}B^{\otimes3}
=\sum_{\pi\in S_3}U_\pi/[d(d+1)(d+2)]$ against $d\rho\otimes O\otimes O$.
The factor $d$ is the Born density, and the six permutations contribute
$2+4\tr(\rho O)$. The relative three copy order applies to this positive
contraction. Since subtracting a constant leaves variance unchanged,
\[
 \operatorname{Var}_\rho((d+1)z-1)
 \leq(d+1)^2\E_\rho z^2\leq6(1+\epsilon_3)\leq\kappa_*.
\]
The Haar two copy identity \cite[Eq.~(95)]{Mele2024} gives
$(d+1)\E_{\rho,\mathrm{Haar}}B=I+\rho$. The positive contraction above,
together with $I+\cL(\rho)=(d+1)\E_\rho B$, now proves
\eqref{eq:cited-forward-loewner-comparison}. This completes the proof.

\endgroup

\begin{revision}
\begin{samepage}
\section{Formal Verification Scope}
\label{app:formal-verification}
All theoretical results and labeled equations in this article are verified
in Lean under their stated hypotheses, using only Lean's standard logical
axioms. All labeled theoretical results in the Supplement are verified
with one additional axiom: dense Hermitian spectral operations use
$O(d^2)$ storage. Only the storage conclusions of Lemma~\ref*{lem:supp-global-shortcut}
and Proposition~\ref*{prop:mw-fw-complexity} use this axiom, stated precisely
in Supplement Section~\ref*{supp:computational-verification}.
The Lean proofs of the main results and the Python simulation code are
available in the GitHub repository~\cite{TomographyCode2026}; the complete
Lean formalization is deposited on Zenodo~\cite{TomographyLean2026}.
\par
\end{samepage}
\end{revision}

\section*{Acknowledgment}
\phantomsection\label{sec:acknowledgment}
The author used OpenAI's ChatGPT and Codex, including GPT-5.6 Sol and
GPT-6 Astra with Ultra reasoning, to draft and revise exposition
throughout the manuscript; develop proofs in
Sections~\ref{sec:main-results}--\ref{sec:lower} and the appendices;
develop algorithms, code, and figures for
Sections~\ref{sec:implementation}--\ref{sec:numerical-experiments} and
Supplement Sections~\ref*{app:numerical-optimization}--\ref*{supp:computational-proofs};
and generate and revise the Lean formalization summarized in
Appendix~\ref{app:formal-verification}.
The author directed this use and is responsible for the manuscript and
verification of its results. Appendix~\ref{app:formal-verification}
states the scope of formal verification.

\end{document}